\newcommand{\probname}[1]{{\color{blue!80!black}{\textbf{{#1}}}}}
\newtheorem{observation}{Observation}
\newmdenv[
  backgroundcolor=gray!15,
  topline=false,
  bottomline=false,
  rightline=false,
  skipabove=\topsep,
  skipbelow=\topsep
]{siderules}
\newcommand{\red}[1]{{\textcolor{red}{#1}}}
\newcommand{\myremark}[4]{\textcolor{blue}{\textsc{#1 #2: }}\textcolor{#4}{\textsf{#3}}}
\newcommand{\rodrigo}[2][says]{\myremark{Rodrigo}{#1}{#2}{Plum}}
\newcommand{\maria}[2][says]{\myremark{Maria}{#1}{#2}{Red}}
\newcommand{\vahideh}[2][says]{\myremark{Vahideh}{#1}{#2}{magenta}} 
\newcommand{\kda}{\textit{MCSI-AkD}} 
\newcommand{\mmcsi}{\textit{mCSI}}
\newcommand{\mcsi}{\textit{MCSI}}
\newcommand{\lmcsi}{\textit{L-MCSI}}
\begin{document}
\title{Computing largest minimum color-spanning intervals of imprecise points\thanks{A preliminary version of this paper appeared in Proc. 16th Latin American Symposium on Theoretical Informatics, Part~I, pp. 81–96.}}
%
%
\author{Ankush Acharyya\inst{1} \and Vahideh Keikha\inst{2} \and
Maria Saumell\inst{3} \and
Rodrigo I. Silveira\inst{4}}
\authorrunning{A. Acharyya et al.}
%
\institute{Dept. of Computer Science and Engineering, National Institute of Technology, Durgapur, India\\ \email{aacharyya.cse@nitdgp.ac.in} \and The Czech Academy of Sciences, Institute of Computer Science, Czech Republic\\\email{keikha@cs.cas.cz} \and
Dept. of Theoretical Computer Science, Faculty of Information Technology, Czech Technical University in Prague, Czech Republic\\
\email{maria.saumell@fit.cvut.cz}
 \and
Dept. de Matem\`{a}tiques, Universitat Polit\`{e}cnica de Catalunya, Spain \\ \email{rodrigo.silveira@upc.edu}}

\maketitle              
\begin{abstract}
We study a geometric facility location problem under imprecision.
Given $n$ unit intervals in the real line, each with one of $k$ colors, the goal is to place one point in each interval such that the resulting \emph{minimum color-spanning interval} is as large as possible.
A minimum color-spanning interval is an interval of minimum size that contains at least one point from a given interval of each color.
We prove that if the input intervals are pairwise disjoint, the  problem can be solved in
$O(n)$ time, even for intervals of arbitrary length.
For overlapping intervals, the problem becomes much more difficult.
Nevertheless, we show that it can be solved in $O(n \log^2 n)$ time when $k=2$, by exploiting several structural properties of candidate solutions, combined with a number of advanced algorithmic techniques. 
Interestingly, this shows a sharp contrast with the 2-dimensional version of the problem, recently shown to be NP-hard.

\end{abstract}

\keywords{Color-spanning interval \and Imprecise points \and Algorithms} 
%
%
%


\section{Introduction}
Color-spanning problems emerge naturally in certain facility location scenarios, where the objective is to identify an optimal facility location relative to a set of sites, each of which is categorized by a specific attribute (or \emph{color}).
In such settings, the goal is often to find a location that contains at least one site of each attribute and is of optimum size. 
For instance, one may be interested in a location such that the maximum distance to reach one site of each color is as small as possible.
Then, from a geometric point of view, if sites are points in the plane, and the distance used is the Euclidean distance, one is looking for a smallest circle that contains at least one point from each color.
This is known as a  \emph{minimum color-spanning circle}~\cite{abellanas2001smallest}.

Plenty of variants of color-spanning objects have been studied.
Typically, the object sought is a two-dimensional region of some type, such as a circle, a square, or a strip.
Then one can aim at finding the smallest, largest, narrowest, etc., color-spanning object of such a type (see, for example, \cite{abellanas2001smallest,das2009smallest}).
The computation of color-spanning objects for point sites has been vastly studied.

However, it is well-known that the data used in real-world instances is not 100\% accurate.
This is especially true for geometric data, which---in most applications---originates from inaccurate measuring devices, such as GPS receivers or laser scanners.
This motivated a flurry of research on uncertainty models for geometric algorithms, where the imprecision in the data is modeled explicitly.
One of the simplest and most studied models for geometric uncertainty is based on regions: instead of assuming that the exact location of each site is known, one assumes that the site lies within a region (e.g., a disk).
In principle, any location within the site's region is possible.
Choosing one location inside each site's region results in a \emph{realization} of the imprecise sites.
Since many different realizations are possible, natural optimization problems arise.
Most typically, one is interested in understanding extreme realizations: those that give the best possible situation or the worst one.
For instance, in the context of color-spanning circles, if each site is modeled by, say, a disk, then one can wonder how to place one point inside each disk so that the resulting set of points gives the smallest or largest color-spanning circles.

Many problems in computational geometry have been studied for the region-based imprecision model (see, e.g., \cite{loffler2010largest,bbdw}).
Depending on the region (e.g., a line segment, a square, a disk) and the actual problem (e.g., convex hull, triangulation, etc.), some problems become very difficult already under very simple imprecision models, while some others can still be solved efficiently.

In this work, we study the problem of finding a minimum color-spanning circle of the largest possible size for imprecise points modeled as 1D intervals on the real line. 
Our motivation stems from recent work by Acharyya et al.~\cite{acharyya2022minimum}, where the problem was studied for regions consisting of disks in 2D.
While the authors of~\cite{acharyya2022minimum} managed to find efficient algorithms to find a minimum color-spanning circle of \emph{smallest} size, the maximization versions resulted more difficult.
In fact, they proved that the problem of placing one point in each disk, such that the minimum color-spanning circle has the largest possible size, is NP-hard, even for unit input disks and only two colors.
Given this somewhat surprising negative result, in this paper, we study the same problem,  one dimension lower, where disks become intervals on the real line, and the minimum color-spanning circle becomes the \emph{minimum color-spanning interval}. In this case, finding a minimum color-spanning interval of smallest size is trivial (it can be solved in $O(n)$ time following the same strategy as for the analogous problem for circles described in~\cite{acharyya2022minimum}), so we focus on the problem of largest size.  
More formally, the problem we study is defined as follows. 


\probname{Largest minimum color-spanning interval (\lmcsi):} 
Given $n$ unit-length closed intervals, ${\cal I}=\{I_1, I_2,\ldots, I_n\}$, on the real line, each colored with one of $k$ colors,
specified in sorted order with respect to their left endpoints, 
find a realization of $\cal I$ such that the length of the minimum color-spanning interval(s) (\mcsi) of the realization is as large as possible.

See Fig.~\ref{fig:lmcsi_example} for an example with $k=3$.

\begin{figure}[t]
    \centering
        \includegraphics{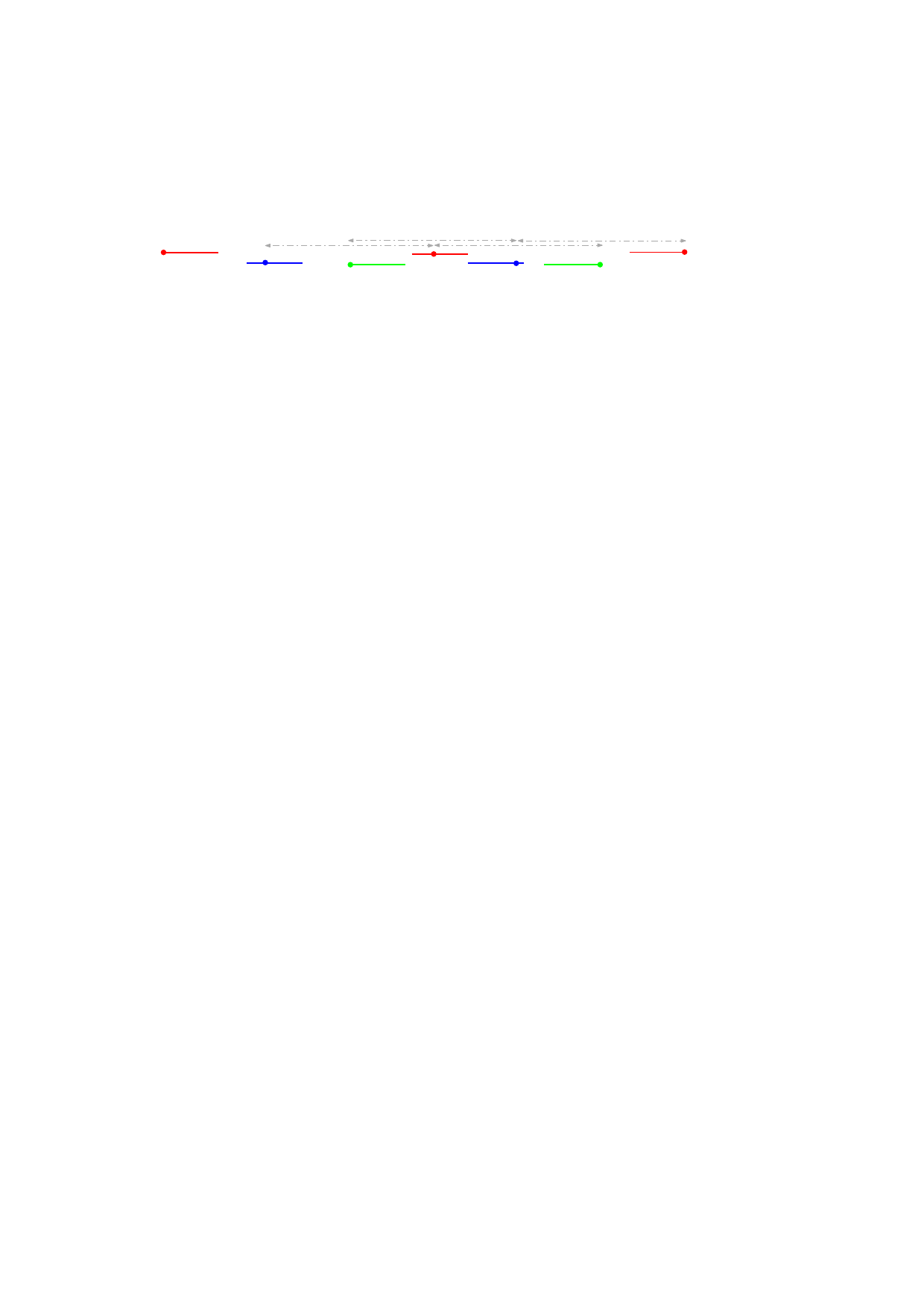}
    \caption{Example of unit intervals with $k=3$ colors, with an optimal realization that results in {four} \lmcsi s (indicated in gray).
    Note that the three leftmost representatives also form a color-spanning interval, which is not minimum.  
     }
    \label{fig:lmcsi_example}
\end{figure}

\paragraph{Contributions}
We first show that if the input intervals are pairwise disjoint, the $\lmcsi$ problem can be solved in
$O(n)$ time, even if intervals have arbitrary lengths.
It turns out that the main difficulty of the problem originates from intervals that overlap.
Intuitively, when two intervals of different color overlap, the points in each interval can be placed in any of two orders.
This, repeated for all pairs of intersecting intervals of different color, results in a combinatorial explosion of possible orderings, already for two colors (i.e., $k=2$).
Thus, most of this work is devoted to efficiently solving the problem for $k=2$.
As is usual in this type of optimization setting, first we focus on the decision version of the problem: Given $n$ colored unit intervals, is there a realization with a minimum color-spanning interval of length at least $q$?
We show that this problem has a rich structure that allows us to compute solutions efficiently.
Indeed, a key contribution is a detailed analysis of the structure of certain canonical (sub)solutions, which we call \emph{leftmost}, and their decomposition into so-called \emph{tabular subsolutions}.
%
By applying a number of advanced algorithmic techniques, we show how to check in $O(n\log n)$ time if some combination of partial tabular subsolutions can be combined into one solution for the whole input,  answering the decision question.

Then we turn our attention to how to efficiently apply the decision procedure to solve the original optimization problem.
To that end, we characterize the candidates for the optimal interval length $\lmcsi$(s). Even though there might be $\Theta(n^3)$ values, using further observations we are able to run a binary search among them that generates only $O(\log n)$ values and solves the $\lmcsi$ problem in $O(n \log^2 n)$ time.

\paragraph{Related work}

An extensive literature exists devoted to color-spanning objects. Basic geometric objects such as the smallest color-spanning circle or square can be computed in $O(nk\log n)$ time~\cite{abellanas2001smallest}. There also exist polynomial-time algorithms to compute other optimal color-spanning objects such as the narrowest strip~\cite{das2009smallest}, the smallest axis-aligned rectangle~\cite{das2009smallest} or square~\cite{khanteimouri2013computing}, the smallest arbitrarily oriented rectangle~\cite{das2009smallest}, the smallest equilateral triangle of fixed orientation~\cite{hasheminejad2015computing}, and the narrowest annulus with the shape of a circle, an axis-parallel square or rectangle, or of an equilateral triangle of fixed orientation~\cite{acharyya2018minimum}. In contrast, the problem of identifying a set of $k$ points of disjoint colors such that the chosen points have the smallest possible diameter is NP-hard~\cite{FleischerX11}.

There are also a few results for color-spanning objects in one dimension. In this setting, the input is usually a set of $n$ points of $k$ colors and an integer $s_i$ associated to each color $i$. The goal is to find $t$ intervals covering at least $s_i$
 points of each color $i$ and such that the maximum length of the intervals is minimized. If the input points are sorted and $t=1$, the problem can be solved in $O(n)$ time~\cite{chen2013algorithms}. For $t=2$ and $s_i=1$ for all $i$, the problem can be solved in $O(n^2 \log n)$ time~\cite{khanteimouri2013spanning}. The running time for this variant has been improved to $O(n^2)$ even for arbitrary $s_i \ge 1$~\cite{jiang2014shortest}. The authors also proved that, for arbitrary values of $t$, it is NP-hard to compute an approximation of the solution within a factor of $1+\varepsilon$.  Further results on the general problem as well as its parameterized complexity are described in~\cite{DBLP:journals/dam/BanerjeeMN20}.

Several strategies have been proposed in the literature to deal with imprecision in geometric data.
Here, we only mention a few relevant results for the region-based model, which is the one adopted in this article.
In this model, finding a placement of points within a set of disks that maximizes or minimizes the radius of the smallest enclosing circle of the points can be solved in $O(n)$ time~\cite{bbdw}. Finding a placement of points within a set of line segments or squares that maximizes or minimizes the area or the perimeter of the convex hull can be solved with algorithms with running times ranging from $O(n)$  to $O(n^{13})$, while some variants are NP-hard~\cite{loffler2010largest}. 
Other objective functions and/or regions lead to many other variants that have also been studied in the literature. 
Imprecision problems have also been previously studied in one dimension (i.e., the imprecise regions are intervals, as we do in this work).
 For instance, in the 1D $k$-center problem on imprecise points, one is given $n$ intervals on the real line, and the goal is to find $k$ points (centers) on the real line minimizing the maximum distance from the imprecise points to 
 their closest center. This problem can be solved in $O(n)$ time if the imprecise points are sorted according to the midpoints of their intervals~\cite{hu2022computing}.  
The problem of finding the largest minimum color-spanning interval is also related to the 1D \emph{dispersion problem}: given $n$ (uncolored) intervals, choose one point from each interval such that the minimum
distance between any pair of consecutive points is maximized. 
The decision version of the problem (given a value $q$, to decide if there is a solution with no two consecutive points closer than $q$) can be solved in $O(n\log n)$ time~\cite{garey1981scheduling}. 
If the ordering of the selected points along the intervals respects a given ordering of the intervals, the  dispersion problem can be solved in $O(n)$ time~\cite{li2018dispersing}. 
Multiple variants of the dispersion problem exist, depending on the exact objective and constraints taken into account. See, e.g.,~\cite{biedl2021dispersion,fiala2005systems,Neruda} and  references therein.  


As already mentioned, the related work most relevant to ours is that of Acharyya et~al~\cite{acharyya2022minimum}, who studied algorithms for the minimum color-spanning circle problem for imprecise points modeled as disks in 2D. Given $n$ colored disks, they consider finding one point in each disk such that the minimum color-spanning circle of the selected points has a maximum radius.
They show this problem is NP-hard even for unit input disks and only two colors. 
In contrast, the minimization version of the problem, for $k$ colors, can be solved in $O(nk \log n)$ time~\cite{acharyya2022minimum}.  


\subsection{Definitions and notation}

The input to the problem is a set of closed intervals ${\cal I}=\{I_1, I_2,\ldots, I_n\}$  of unit length on the real line, where each interval is colored with one of $k$  colors. 
We will sometimes refer to the intervals also as \emph{segments}.
We assume that no two intervals of the same color are at the same position. 
The intervals are given sorted from left to right, breaking ties arbitrarily if needed. 
For each $i$, we use $x_i$ to denote the left endpoint of $I_i$. We assume that $x_1=0$.

Given a realization (also sometimes called a \emph{representation}) $\cal P$ of  $\cal I$, we call the chosen point of $I_i$ the \emph{representative} of $I_i$. We denote it by $r_i$ (with some abuse of notation, $r_i$ is sometimes used to denote the coordinate of the representative). 
Sometimes, \emph{realization} applies to a proper subset of $\cal I$ rather than the entire $\cal I$. In this case, we say that $I_i$ is \emph{represented} in the realization, if the realization contains a representative of $I_i$.


\section{Disjoint and semi-disjoint case}\label{sec:kda}


\subsection{Disjoint case}

In this subsection, we consider the version where the input segments are pairwise disjoint. In other words, the input is a sequence ${\cal I}=\{I_1, I_2,\ldots, I_n\}$ of $n$ disjoint segments of unit length, sorted with respect to the left endpoint of the segments. 

We use the concept of {\em minimal color-spanning interval} (\mmcsi), defined as a color-spanning interval not properly contained within any other color-spanning interval.
A minimum color-spanning interval is also a minimal color-spanning interval, with minimum length.

In general, \mmcsi s   depend on the  realization.
However, if intervals are  disjoint, the only locations that are important are that of the first and last representatives.
All other intervals represented in the \mmcsi\ are fully contained in the \mmcsi, thus the positions of their representatives are irrelevant.
Hence, from a combinatorial point of view, an \mmcsi\ is a sequence of consecutive intervals.
It follows that the possible combinatorial \mmcsi s are determined by the intervals alone: 


\begin{lemma}
\label{number_ints_disj}
 If the input segments are disjoint, every realization leads to the same set of at most  $n-k+1$ combinatorial \mmcsi s.
\end{lemma}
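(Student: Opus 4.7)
The plan is to split the claim into two parts: first, that the set of combinatorial mMCSIs is determined by the color sequence alone (i.e., independent of the realization), and second, that there are at most $n-k+1$ such combinatorial mMCSIs.

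For the first part, I would start from the observation already made in the text: because the segments are pairwise disjoint and sorted, the representatives $r_1 \le r_2 \le \ldots \le r_n$ appear in the same order as the input intervals in any realization. Hence any mMCSI, being a real-line interval that cannot be shrunk while remaining color-spanning, must have both endpoints coincide with representatives; say its endpoints are $r_i$ and $r_j$ with $i \le j$. The points of the realization lying inside $[r_i, r_j]$ are then exactly $r_i, r_{i+1}, \ldots, r_j$. I would then characterize an mMCSI combinatorially as an index range $[i,j]$ such that (a) the colors of $I_i,\ldots,I_j$ include all $k$ colors, (b) the color of $I_i$ appears nowhere else in this range (so shrinking from the left would destroy color-spanningness), and (c) the color of $I_j$ is unique in the range. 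Since this characterization depends only on the color sequence of the intervals and not on the specific values $r_i$, the set of combinatorial mMCSIs is the same for every realization.

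For the second part, I would order the combinatorial mMCSIs $[i_1,j_1], [i_2,j_2], \ldots, [i_m,j_m]$ by their left endpoints, and show that the right endpoints are then also strictly increasing. The key observation is that if $i_1 < i_2$ but $j_2 \le j_1$, then the index range $[i_2,j_2]$ corresponds to a color-spanning interval strictly contained in the one for $[i_1,j_1]$, contradicting the minimality of the latter. Hence $j_1 < j_2 < \ldots < j_m$. Since each mMCSI must cover all $k$ colors, $j_s - i_s + 1 \ge k$ for all $s$; in particular $j_1 \ge k$. Together with $j_m \le n$, this yields $m \le n-k+1$.

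I do not expect a serious obstacle here; the argument is essentially a careful unwrapping of the definition of minimality together with a routine counting step. The only subtlety worth handling cleanly is the edge case where some $r_i = r_j$ with $i<j$ (representatives at endpoints of touching but disjoint intervals), but this does not affect the combinatorial characterization above, so it can be absorbed into the general argument without separate treatment.
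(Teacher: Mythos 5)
Your proposal is correct and follows essentially the same route as the paper: both arguments use the fact that disjointness fixes the left-to-right order of representatives to conclude that the combinatorial \mmcsi s depend only on the color sequence, and both then count them by an injective assignment to segments combined with the observation that each \mmcsi\ spans at least $k$ segments. The only cosmetic difference is that you count via strictly increasing right endpoint indices in $\{k,\ldots,n\}$, whereas the paper charges each \mmcsi\ to its leftmost segment and bounds the number of possible leftmost segments by $n-(k-1)$; these are the same pigeonhole argument seen from opposite ends (and your worry about $r_i=r_j$ is vacuous, since disjoint closed intervals cannot share a point).
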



\begin{proof}
As the input segments in $\cal I$ are disjoint, the ordering of the points in any realization is the same as the left to right order of the segments in $\cal I$. 
Thus, all the realizations lead exactly to the same combinatorial minimal color-spanning intervals.
Indeed, the only thing that depends on the realization is the exact location of the representative of the first and of the last interval of each \mmcsi, because all the intervals in between must be fully contained between the first and last representatives.
As for the number of \mmcsi s, observe that, if we charge each minimal color-spanning interval to the leftmost segment represented, each segment is charged at most once. As there are $n$ segments of $k$ colors and any minimal color-spanning interval contains at least $k$ segments, there are at most $n-(k-1)$ possible leftmost segments.
\end{proof}




The set of combinatorial \mmcsi s can be computed in $O(n)$ time~\cite{chen2013algorithms}. Let $\gamma=\{\gamma_1,\gamma_2,\ldots,\gamma_\ell\}$ be this set, sorted by the leftmost segment represented.
We divide $\gamma$ into subsets called  \emph{chains of \mmcsi s} 
as follows: Start with $\gamma_1$, and let the rightmost segment represented in $\gamma_1$ be $I_\beta$.  If there is an \mmcsi\ $\gamma_i$ starting at $I_\beta$, add it to the subset of $\gamma_1$, and repeat with $\gamma_i$; otherwise, stop. At the end of the process, remove from $\gamma$ all the $\gamma_i$ in the same subset as $\gamma_1$, and construct a new subset starting with the remaining element in $\gamma$ with the smallest index. Repeat this process until $\gamma$ is empty. See Fig.~\ref{fig:dmcsi} for an example.

\begin{figure}[tb]
    \centering
    \includegraphics{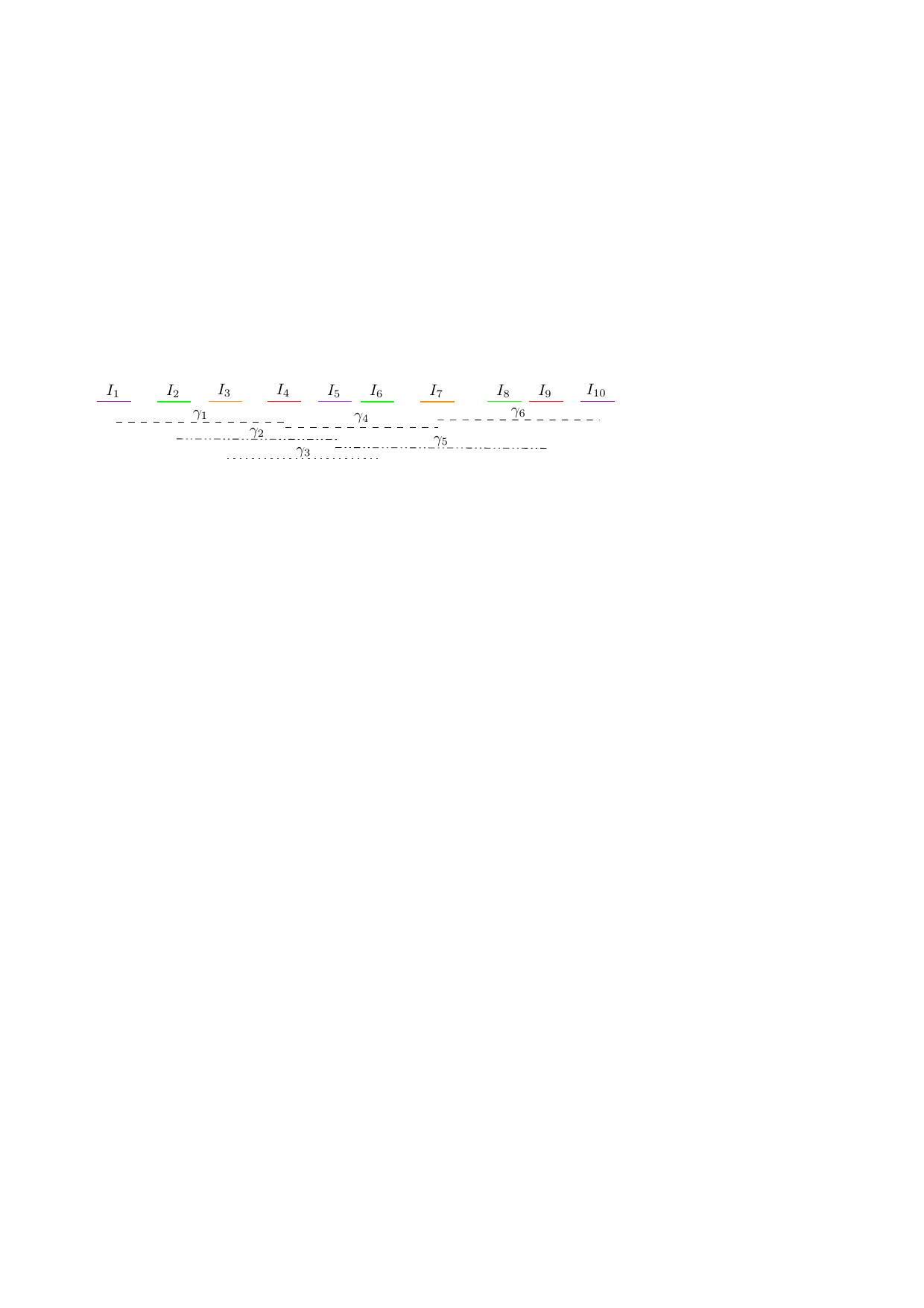}
    \caption{An example where the chains of \mmcsi s are $\{\gamma_1, \gamma_4, \gamma_6\}$, $\{\gamma_2,\gamma_5\}$, $\{\gamma_3\}$.}
     \label{fig:dmcsi}
\end{figure}


The idea of the algorithm is that the problem can be broken into several subproblems (corresponding to the distinct chains of \mmcsi s) because the solutions corresponding to different subproblems do not interfere with each other. In detail, we process each chain of \mmcsi s separately as follows: Let ${\cal I}'$ be the set of segments obtained from taking the leftmost and rightmost segments represented in each $\gamma_i$ contained in the chain of \mmcsi s. For example,  in Fig.~\ref{fig:dmcsi}, for $\{\gamma_1, \gamma_4, \gamma_6\}$ we would have ${\cal I}'=\{I_1,I_4,I_7,I_{10}\}$. Among all \mmcsi s contained in the chain of \mmcsi s, the minimum one is achieved by the pair of representatives of consecutive segments of ${\cal I}'$ at minimum distance. Notice that the position of the representatives of the segments not in ${\cal I}'$ is irrelevant for this subproblem. 
Since we want to maximize the length of the \mcsi s, we compute a realization of ${\cal I}'$ maximizing the minimum distance between consecutive representatives. Such a realization can be computed in time $O(|{\cal I}'|)$~\cite{li2018dispersing}. 

After repeating this procedure for all chains of \mmcsi s, we obtain a realization for all those segments that are the leftmost or rightmost of some $\gamma_i$. The fact that each segment is leftmost (resp., rightmost) for at most one \mmcsi\ implies that each segment receives at most one representative. For all the other segments, we choose a representative arbitrarily. We output the obtained representation as the solution to the problem. We next argue that the algorithm is correct.

\begin{theorem} \label{thm:kda-sol}
If the input segments are pairwise disjoint, the \lmcsi~ problem can be solved in $O(n)$ time.  
\end{theorem}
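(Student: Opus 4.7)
The plan is to prove correctness of the algorithm and bound its running time; essentially all the work is in correctness, and it rests on Lemma~\ref{number_ints_disj} together with two structural observations.

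First, because the input segments are disjoint, every realization induces the same combinatorial collection of \mmcsi s, and the length of each \mmcsi\ is determined solely by the representatives of its leftmost and rightmost segments (the remaining segments lie strictly inside). Hence the \lmcsi\ objective depends only on the representatives of these ``endpoint'' segments. Second, within a single chain, consecutive \mmcsi s share exactly one endpoint segment, so the ordered sequence of representatives of ${\cal I}'$ determines the lengths of all \mmcsi s in the chain as consecutive gaps. Thus, maximizing the minimum \mmcsi\ length inside a chain is exactly the 1D dispersion problem on ${\cal I}'$, which can be solved in $O(|{\cal I}'|)$ time by~\cite{li2018dispersing}.

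The main obstacle I anticipate is justifying that chains can be solved independently. For this, I would first note that minimality of \mmcsi s forbids nesting, so both the left endpoints and the right endpoints of the \mmcsi s are strictly monotone along $\gamma$; in particular, each segment is leftmost of at most one \mmcsi\ and rightmost of at most one. I would then argue that if a segment $I_\beta$ is the rightmost of some $\gamma_q$ and simultaneously the leftmost of some $\gamma_p$, then $\gamma_q$ and $\gamma_p$ must end up in the same chain: whichever chain contains $\gamma_q$ is built by the extension rule which, upon reaching $\gamma_q$, immediately appends $\gamma_p$ (still present in $\gamma$, since $q<p$). The disjointness of endpoint sets across chains then implies that representatives produced by different chain subroutines never conflict. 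Segments that are interior in some \mmcsi\ but are never endpoints may be given representatives arbitrarily, since any point of such a segment is automatically inside that \mmcsi\ and therefore does not alter any \mmcsi\ length.

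For the running time, the set $\gamma$ is produced in $O(n)$ time by~\cite{chen2013algorithms}; chains are extracted by a single left-to-right sweep in $O(|\gamma|)=O(n)$ time; each dispersion subproblem is linear in $|{\cal I}'|$, and by the disjointness above these sizes sum to $O(n)$; the arbitrary choices for the remaining segments cost $O(n)$. Summing everything gives the claimed $O(n)$ bound.
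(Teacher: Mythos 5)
Your proof is correct and follows essentially the same route as the paper's: decompose the \mmcsi s into chains, solve each chain as a 1D dispersion instance on its endpoint segments via~\cite{li2018dispersing}, and argue the chains do not interact. You are in fact slightly more explicit than the paper about why the endpoint sets ${\cal I}'$ of distinct chains are disjoint (your observation that a segment which is rightmost of one \mmcsi\ and leftmost of another forces those two \mmcsi s into the same chain), while the paper phrases optimality as the explicit upper bound $q^*\le\ell_i$ for every chain; both amount to the same max--min decomposition.
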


\begin{proof}
Let ${\cal P^*}$ be a realization solving the \lmcsi\ problem and let $q^*$ be the length of a \mcsi\ of ${\cal P^*}$. For the $i$th chain of \mmcsi s, let $\ell_i$ be the minimum distance between consecutive representatives of the subset ${\cal I}'$ in a realization maximizing such minimum distance. Since any realization of ${\cal I}'$ induces a \mcsi\ of length at most $\ell_i$, we obtain that $q^* \leq \ell_i$. Let $\ell=\min\{\ell_1,\ell_2,\ldots\}$. Then, $q^*\leq \ell$. Since the realization returned by the algorithm has \mcsi\ of length $\ell$, it is a solution to the problem.

Recall that the input segments are sorted. Computing the \mmcsi s and dividing them into chains of \mmcsi s can be done in linear time. Each chain of \mmcsi s can be processed in time linear in its size. Since there are at most $n-k+1$ \mmcsi s, the overall running time of the algorithm is linear. 
\end{proof}

Observe that the presented algorithm does not use the fact that all input segments have the same (unit) length. Therefore, it solves a more general problem:

\begin{corollary} \label{cor:arb-length}
If the input segments have arbitrary lengths and are pairwise disjoint, the \lmcsi~ problem can be solved in $O(n)$ time.  
\end{corollary}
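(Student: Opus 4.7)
The plan is to argue that the algorithm of Theorem \ref{thm:kda-sol} applies verbatim once we drop the unit-length assumption, so the main task is to trace the proof and verify that every ingredient remains valid when segments have arbitrary lengths.

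First, I would re-examine Lemma \ref{number_ints_disj}. Its proof relies only on disjointness: since disjoint intervals force the representatives in any realization to appear in the same left-to-right order as the segments themselves, every combinatorial \mmcsi\ corresponds to a contiguous block of segments of $\cal I$, independently of the realization. The charging argument that bounds the number of combinatorial \mmcsi s by $n-k+1$ uses only that each \mmcsi\ spans at least $k$ segments (one per color) and that each segment is charged at most once as a leftmost segment. No part of this uses unit length, so Lemma \ref{number_ints_disj} extends directly.

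Next, I would verify the two linear-time subroutines used by the algorithm. The computation of the set of combinatorial \mmcsi s from~\cite{chen2013algorithms} takes a sorted list of colored points (which, under disjointness, can be generated from the left endpoints of the segments in $O(n)$ time) and never consults segment lengths. The ordered dispersion subroutine of~\cite{li2018dispersing} takes a sorted sequence of intervals together with a prescribed ordering of the chosen representatives and produces, in linear time, a realization maximizing the minimum gap between consecutive representatives; its guarantees do not require unit length either. Chain construction is purely combinatorial, and since Lemma \ref{number_ints_disj} still caps the number of \mmcsi s at $n-k+1$, the sum of subproblem sizes remains $O(n)$, preserving the overall linear running time.

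Finally, I would observe that the correctness argument of Theorem \ref{thm:kda-sol} carries over unchanged. Within each chain, the length of any resulting \mcsi\ is determined solely by the positions of the leftmost and rightmost represented segments of each $\gamma_i$, because every intermediate represented segment sits entirely between them regardless of its length; hence the optimum attainable within a chain is $\ell_i$ and the global optimum is $\min_i \ell_i$, matching the value achieved by the algorithm. The main (and only mild) obstacle I foresee is confirming that the two cited subroutines are stated in sufficient generality, but both~\cite{chen2013algorithms,li2018dispersing} are formulated for arbitrary sorted inputs with no restriction on segment length, so no additional work is needed and the corollary follows.
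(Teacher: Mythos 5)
Your proposal is correct and follows essentially the same route as the paper, whose entire proof is the single observation that the algorithm of Theorem~\ref{thm:kda-sol} never uses the unit-length assumption. Your component-by-component verification (Lemma~\ref{number_ints_disj}, the subroutines of~\cite{chen2013algorithms} and~\cite{li2018dispersing}, and the chain-based correctness argument) simply makes that observation explicit, which is a sound and arguably more careful way to present it.
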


\subsection{Semi-disjoint case}

In this subsection, we show an adaptation of the previous algorithm for the case where intervals might intersect only if they have the same color.


\begin{lemma} \label{lem:semi-dis}
Let $I_i,I_{i+1},\ldots,I_k$ be a continuous subsequence of $\cal I$ with color $c$, and such that $I_{i-1}\cap I_i=\emptyset$, $I_{k}\cap I_{k+1}=\emptyset$, and for all $j\in\{i,i+1,\ldots,k-1\}$ we have $I_{j}\cap I_{j+1}\neq \emptyset$.
\begin{itemize}
\item[(a)] If $I_i\cap I_k=\emptyset$, it is enough to consider realizations where 
$r_i\leq r_j \leq r_k$, for all  $j\in \{i+1,i+2,\ldots,k-1\}$. 
\item[(b)] If $I_i\cap I_k\neq \emptyset$, it is enough to consider realizations where 
$r_i= r_j= r_k$, for all  $j\in \{i+1,i+2,\ldots,k-1\}$. 
\end{itemize}
\end{lemma}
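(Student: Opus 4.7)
The proof exploits the following consequence of the semi-disjoint assumption: the region $R=[x_i,x_k+1]$ spanned by the run contains exactly the representatives $r_i,\ldots,r_k$ (all of color $c$) and no others. Indeed, any interval of a color other than $c$ is disjoint from each $I_j$, while any further same-color interval is pushed outside $R$ by the hypotheses $I_{i-1}\cap I_i=\emptyset$ and $I_k\cap I_{k+1}=\emptyset$ combined with the sorted order.

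For part (a), I would show that sorting the run's representatives is always feasible and preserves the MCSI exactly. Let $u_1\leq\cdots\leq u_n$ be the sorted values of $r_i,\ldots,r_k$, where $n=k-i+1$, and set $m=j-i+1$. A Hall-type count shows $u_m\in I_j=[x_j,x_j+1]$: at least $m$ of the values are $\leq u_m$ while only $m-1$ indices $s<j$ exist, so some $r_s$ with $s\geq j$ satisfies $r_s\leq u_m$ and hence $u_m\geq r_s\geq x_s\geq x_j$; the upper bound $u_m\leq x_j+1$ is symmetric. Reassigning representatives in sorted order therefore yields a valid realization, and since the multiset of points on the line is unchanged, the MCSI is literally preserved. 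In particular $r_i\leq r_j\leq r_k$ for all interior $j$, settling part (a).

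For part (b), I would assume the realization is already sorted by part (a). The hypothesis $I_i\cap I_k\neq\emptyset$ gives $x_k\leq x_i+1$, so $[\max(r_i,x_k),\min(r_k,x_i+1)]$ is non-empty (a brief case split on the signs of $r_i-x_k$ and $r_k-(x_i+1)$). Pick any $v$ in this intersection and collapse all run representatives to $v$. To show the MCSI does not decrease, I would classify each MMCSI $[a,b]$ of the collapsed realization according to how it meets $v$: (i) if $v\notin[a,b]$, then $[a,b]\cap R=\emptyset$ (since $v$ is now the only representative inside $R$), so $[a,b]$ is unaffected by the collapse; (ii) if $R\subseteq[a,b]$ with $v$ in the interior, the same $[a,b]$ is still an MMCSI in the original, because non-$c$ representatives are unchanged and the additional color-$c$ points lie in the interior and do not break minimality; (iii) if $b=v$ with $a$ of color $c_a\neq c$, I claim $[a,r_i]$ is an MMCSI in the original---the sorted order keeps other run representatives out of $[a,r_i)$, and $(r_i,v]\subseteq R$ contains no non-$c$ representatives, so all $k-1$ non-$c$ colors already appear in $[a,r_i]$---and its length $r_i-a\leq v-a$ lower bounds the original MCSI $L^*$; (iv) symmetrically if $a=v$, using $[r_k,b]$ and $v\leq r_k$.

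The main technical obstacle is case (iii)--(iv): verifying that the candidate shortened interval $[a,r_i]$ (respectively $[r_k,b]$) in the original realization is a genuine MMCSI. This requires invoking both structural ingredients at once---the absence of non-$c$ representatives inside $R$, to ensure the $k-1$ other colors are still covered by the shortened interval, and the sorted order from part (a), to prevent stray color-$c$ representatives from appearing in the interior. Once this is secured, the inequalities $r_i-a\leq v-a$ and $b-r_k\leq b-v$ follow directly from $v\in[r_i,r_k]$, each MMCSI of the collapsed realization has length at least $L^*$, and the lemma is complete.
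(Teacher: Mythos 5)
Your proof is correct in substance but takes a noticeably different and more elaborate route than the paper's. For part (a), the paper does not permute representatives: it observes that $[r_i,r_k]$ contains no representative of any color other than $c$ (every other-colored interval is disjoint from $\bigcup_j I_j \supseteq [r_i,r_k]$), that each interior $I_j$ meets $[r_i,r_k]$, and simply slides each interior $r_j$ into $I_j\cap[r_i,r_k]$, asserting this cannot shorten any color-spanning interval. Your sorting argument with the Hall-type count is a valid alternative and in one respect cleaner: the multiset of colored points is literally unchanged, so nothing at all needs to be argued about the \mcsi. For part (b), the paper likewise just moves all run representatives to a common point of $\bigcap_{j=i}^{k} I_j$ lying between $r_i$ and $r_k$ and asserts the \mcsi\ cannot decrease because $[x_i,x_k+1]$ contains no representative of another color; your case analysis makes explicit what the paper leaves implicit, which is a genuine service.

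Two remarks on your part (b). First, the ``main technical obstacle'' you identify is self-inflicted: you do not need $[a,r_i]$ (resp.\ $[r_k,b]$) to be a \emph{minimal} color-spanning interval of the original realization, only a color-spanning one, since every color-spanning interval has length at least $L^*$. Color-spanning-ness you have already established: the non-$c$ witnesses of $[a,v]$ lie outside $R$ and hence strictly below $x_i\le r_i$, and $r_i$ itself supplies color $c$; so minimality, and with it the appeal to sorted order ``to prevent stray color-$c$ representatives in the interior,'' can be dropped entirely. Second, cases (iii)--(iv) as stated assume the endpoint opposite to $v$ has color different from $c$; if that endpoint is instead an unmoved representative of color $c$, the interval is color-spanning in the original realization verbatim and the bound is immediate, but this sub-case should be mentioned for completeness. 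With these two adjustments the argument closes.
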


\begin{proof}
Let ${\cal P^*}$ be a realization solving the \lmcsi\ problem. For ease of notation, we use $r_j$ to denote the representative of $I_j$ in ${\cal P^*}$.

Suppose first that $I_i\cap I_k=\emptyset$. Then, $r_i<r_k$. Due to the fact that $I_i, I_{i+1},\ldots, I_k$ is a continuous subsequence of $\cal I$ with color $c$ together with the fact that no two segments of distinct color intersect, we derive that there is no representative of color distinct from $c$ in the interval $[r_i,r_k]$. Since all intervals have the same length, $I_j\cap [r_i,r_k]\neq \emptyset$ for all  $j\in \{i+1,i+2,\ldots,k-1\}$. Thus, $r_j$ can be moved to $I_j\cap [r_i,r_k]$ and this change does not decrease the length of any \mcsi.

Suppose next that $I_i\cap I_k\neq\emptyset$. Then, $\bigcap_{j=i}^{k} I_j \neq \emptyset$. For all $j\in \{i,i+1,\ldots,k\}$, $r_j\in[x_i,x_k+1]$ and the interval $[x_i,x_k+1]$ 
does not contain any representative of color distinct from $c$. Thus, for all $j\in \{i,i+1,\ldots,k\}$, $r_j$ can be moved to a common position in  $\bigcap_{j=i}^{k} I_j$ that also lies between the original positions of $r_i$ and $r_k$, and this change does not decrease the length of any \mcsi.
\end{proof}

We construct a new instance ${\cal I}'$ of the problem as follows: Every segment $I_j\in {\cal I}$ that does not have any intersection with any other segment of $\cal I$ is added to ${\cal I}'$. For a continuous subsequence of segments $I_i,I_{i+1},\ldots,I_k$ of same color and such that $I_{i-1}\cap I_i=\emptyset$, $I_{k}\cap I_{k+1}=\emptyset$, and for all $j\in\{i,i+1,\ldots,k-1\}$ we have $I_{j}\cap I_{j+1}\neq \emptyset$, then: (i) if $I_i\cap I_k=\emptyset$, we (only) add to ${\cal I}'$ the segments $I_i,I_k$; (ii) if $I_i\cap I_k\neq \emptyset$, we add to ${\cal I}'$ a segment $\bigcap_{j=i}^{k} I_j$ of the same color as $I_i$. 

By Lemma~\ref{lem:semi-dis}, to solve the problem for ${\cal I}$ it is enough to solve it for ${\cal I}'$. Additionally, ${\cal I}'$ forms an instance where input segments are pairwise disjoint (but do not necessarily have the same length). By Corollary~\ref{cor:arb-length}, we conclude:

\begin{theorem} \label{thm:sol-semi-dis}
If there is no pair of intersecting intervals of distinct color, the \lmcsi~ problem can be solved in $O(n)$ time.  
\end{theorem}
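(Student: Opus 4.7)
The plan is to follow the scaffold already laid out right before the theorem statement: reduce the semi-disjoint instance $\mathcal{I}$ to a pairwise-disjoint instance $\mathcal{I}'$ (of segments of possibly varying lengths), solve $\mathcal{I}'$ by invoking Corollary~\ref{cor:arb-length}, and lift the solution back to $\mathcal{I}$. The only things that need verification are (i) that the reduction is correct, (ii) that $\mathcal{I}'$ is indeed pairwise disjoint, and (iii) that the whole procedure runs in linear time.

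First I would use the hypothesis that no two intervals of distinct colors intersect to partition $\mathcal{I}$ into maximal blocks of consecutive intervals sharing the same color such that consecutive intervals in the block intersect. Any interval not in such a nontrivial block is already disjoint from all others and is kept as is in $\mathcal{I}'$. For a nontrivial block $I_i,I_{i+1},\ldots,I_k$ of color $c$ I apply Lemma~\ref{lem:semi-dis}: in case (a), when $I_i\cap I_k=\emptyset$, only $I_i$ and $I_k$ are placed into $\mathcal{I}'$ (the lemma tells us that the intermediate representatives can be placed freely in $[r_i,r_k]$, so they are irrelevant for the \mcsi\ length); in case (b), when $I_i\cap I_k\neq \emptyset$, the whole block collapses to the common intersection $\bigcap_{j=i}^k I_j$ with color $c$, representing a single coincident representative.

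Next I would argue correctness in both directions. By Lemma~\ref{lem:semi-dis} every optimal realization of $\mathcal{I}$ can be taken to comply with the restrictions of that lemma, and under those restrictions the length of any \mcsi\ depends only on the representatives of the segments kept in $\mathcal{I}'$: intermediate representatives of a block lie inside $[r_i,r_k]$ (case (a)) or coincide with $r_i=r_k$ (case (b)), so they cannot participate as the left or right endpoint of a minimum color-spanning interval shorter than those determined by $\mathcal{I}'$. Hence any realization of $\mathcal{I}$ yields a realization of $\mathcal{I}'$ of the same \mcsi\ length, and conversely any realization of $\mathcal{I}'$ extends to a realization of $\mathcal{I}$ with the same \mcsi\ length by placing the intermediate representatives inside $[r_i,r_k]\cap I_j$ (case (a), which is nonempty by the block structure together with equal unit lengths) or at the common point (case (b)). Pairwise disjointness of $\mathcal{I}'$ follows from maximality of the blocks: consecutive kept segments either belong to different blocks (separated by a gap in $\mathcal{I}$) or are the two endpoints $I_i,I_k$ of a case-(a) block (disjoint by assumption).

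Finally I would invoke Corollary~\ref{cor:arb-length} on $\mathcal{I}'$, whose output is a realization with maximum possible \mcsi\ length in $O(|\mathcal{I}'|)$ time, and lift it to a realization of $\mathcal{I}$ by filling in the intermediate representatives as indicated above. Since $\mathcal{I}$ is given sorted, the blocks can be identified and $\mathcal{I}'$ constructed in one left-to-right scan in $O(n)$ time, and the lifting is also linear. The main ``obstacle'' is really just being careful with the definition of case (a): one must check that $I_j\cap [r_i,r_k]\neq \emptyset$ for every intermediate $j$, which was observed in the proof of Lemma~\ref{lem:semi-dis} using unit lengths; for the reduction we use only the combinatorial consequences of that lemma, so the fact that $\mathcal{I}'$ may contain non-unit segments is not a problem, Corollary~\ref{cor:arb-length} being stated precisely for arbitrary lengths.
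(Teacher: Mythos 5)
Your proposal is correct and follows essentially the same route as the paper: it builds the reduced instance $\mathcal{I}'$ exactly as described before the theorem, justifies it via Lemma~\ref{lem:semi-dis}, and then invokes Corollary~\ref{cor:arb-length}. The extra details you supply (pairwise disjointness of $\mathcal{I}'$, the two-directional equivalence of \mcsi\ lengths, and the linear-time scan) are all correct elaborations of what the paper leaves implicit.
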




\section{Preliminaries for the case $k=2$}\label{sec:two-color_preli}

In the rest of the paper, we focus on the case where $k=2$.


By Theorems~\ref{thm:kda-sol} and~\ref{thm:sol-semi-dis}, if the segments are pairwise disjoint or there is no pair of intersecting intervals of distinct color, the problem can be solved in $O(n)$ time.  
Therefore, in the rest of the paper we restrict to the case where there is a pair of intersecting intervals of distinct color.

We assume the two colors are red and blue, denoting the set of red and blue intervals by ${\cal R}$ and  ${\cal B}$, respectively. Let $q^*$ be the length of an \mcsi~in a realization solving the \lmcsi~problem. 

\begin{lemma}
If at least two intervals of different colors intersect, then $\frac{1}{2}\leq q^* \leq 2$.
\end{lemma}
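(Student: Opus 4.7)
The plan is to bound $q^*$ from above and below separately, using two short, essentially independent arguments.

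For the upper bound $q^*\le 2$, I would invoke the hypothesis directly: fix any red interval $I_R$ and any blue interval $I_B$ with $I_R\cap I_B\neq\emptyset$. Since $I_R$ and $I_B$ are unit intervals that intersect, their union $I_R\cup I_B$ is a single interval of length at most $2$. In every realization, both $r_R\in I_R$ and $r_B\in I_B$ lie in $I_R\cup I_B$, so $|r_R-r_B|\le 2$; the closed interval between $r_R$ and $r_B$ is therefore a color-spanning interval of length at most $2$, forcing the \mcsi\ of that realization to have length at most $2$. As this holds for every realization, $q^*\le 2$.

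For the lower bound $q^*\ge 1/2$ I would exhibit an explicit realization. The key geometric observation is that every closed unit interval contains at least one integer and at least one half-integer (a number of the form $m+\tfrac{1}{2}$, $m\in\mathbb{Z}$), because the integers and the half-integers each form a set spaced exactly by $1$. So I would place each red representative at some integer inside its interval and each blue representative at some half-integer inside its interval; this is always feasible and yields a valid realization.

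To conclude I would combine this construction with the elementary fact that the length of an \mcsi\ equals the smallest distance between a red representative and a blue representative. Indeed, any color-spanning interval must contain at least one red and one blue representative, so its length is at least the minimum bichromatic pairwise distance; conversely, the closed interval spanned by any red–blue pair is itself color-spanning. In the constructed realization, every integer and every half-integer are at distance at least $1/2$ apart, so the \mcsi\ has length at least $1/2$, giving $q^*\ge 1/2$.

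I do not anticipate a real obstacle; the main subtlety is just verifying the characterization of the \mcsi\ length as the smallest red–blue distance, on which the lower-bound construction rests. Everything else is elementary interval arithmetic.
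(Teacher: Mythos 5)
Your proof is correct and follows essentially the same route as the paper: integers for red and half-integers for blue give the lower bound, and the representatives of any intersecting bichromatic pair give the upper bound. The extra justifications you supply (why a unit interval contains an integer/half-integer, and why the \mcsi\ length is the minimum red–blue distance) are fine but the paper treats them as immediate.
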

\begin{proof}
To prove that $q^*\geq \frac{1}{2}$, observe that every red interval contains at least one integer point and every blue interval contains at least one point of the form $m+\frac{1}{2}$, where $m\in \mathbb{Z}$. We pick such points as representatives. This choice of realization ensures that any color-spanning interval has a length of at least $\frac{1}{2}$.

The bound $q^*\leq 2$ follows trivially from the fact that there is a pair of intersecting intervals of distinct color. The two representatives of this pair form a color-spanning interval, and this interval has a length of at most 2.
\end{proof}

We observe that these bounds are best possible: The upper bound is achieved when the input consists of two intervals of distinct color that intersect at exactly one point. Regarding the lower bound, we take a blue interval with left endpoint at $x=0$ and a collection of red intervals centered at $x=0,\, x=\varepsilon,\, x=2\varepsilon,\ldots,\,x=1$, for a small value $\varepsilon>0$. In any representation, the blue representative has a red center at distance $\leq \varepsilon/2$, so the corresponding red representative is at distance $\leq \frac{1}{2}+\frac{\varepsilon}{2}$.


Roughly speaking, our algorithm consists of an algorithm for the decision problem (given some value $q$, do we have $q^*\geq q$ or $q^*< q$?) combined with a binary search on a set of candidates for $q^*$. We divide the range $\left(\frac{1}{2},2\right]$ into three subranges and present a different algorithm for each of them. 

The decision version of the problem can be rephrased as: For a given value $q$, does there exist a realization of $\cal R \cup \cal B$ such that the distance between any pair of representatives of distinct color is at least $q$?
Such a realization is said to satisfy the \emph{separation property}. 

Suppose that, in the sorted sequence of input intervals, two consecutive intervals leave a gap of length $q$ or greater between them. Then we can divide the problem into two independent subproblems, one containing the intervals to the left of the gap and the other containing the intervals to the right. Indeed, the answer to the original problem is yes if and only if the answer to both subproblems is yes. Hence, to solve the decision problem for a fixed value of $q$, we can assume that no two consecutive intervals leave a gap of length $q$ or greater between them. This assumption is made in the subsequent sections dealing with the decision problem.

The main difficulty of the problem is that, if two intervals of distinct color overlap by $q$ or more, there are two possible orderings of the representatives that yield a distance between them of at least $q$. This is also the reason why our algorithms for the decision problem become more complex as $q$ decreases.

All of our algorithms for the decision problem use the following concept:

\begin{definition}
A \emph{leftmost solution} is a solution where each representative is either at the left endpoint of its interval, or at a distance $q$ from a representative of distinct color located to its left.
\end{definition}

Given any solution, we can transform it into a leftmost solution by traversing the representatives from left to right and moving them as much as possible to the left, while maintaining a representation satisfying the separation property. This implies the following:

\begin{observation} \label{obs:left-sol-enough}
If there is a solution, there is one that is leftmost.
\end{observation}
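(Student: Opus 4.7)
The plan is to start from any realization $\cal P$ satisfying the separation property and push the representatives to the left one by one in a carefully chosen order. Enumerate the representatives $r_1, r_2, \dots, r_n$ in order of their positions in $\cal P$ and process them in this order: when processing $r_i$, reset it to the leftmost point $p \in I_i$ such that $|p - r_j'| \geq q$ for every previously processed representative $r_j'$ of the opposite color (here $r_j'$ denotes the value of $r_j$ after it has been reset).

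First, the update is well-defined at every step. The set of points of $I_i$ forbidden by the already-processed opposite-color representatives is a finite union of open intervals $(r_j' - q,\, r_j' + q)$, so its complement inside the closed interval $I_i$ is closed; it is also nonempty because $r_i$'s original position remains feasible against the $r_j'$'s. Indeed, each earlier update moved some $r_j$ (whose original position in $\cal P$ lies to the left of $r_i$'s) even farther left, so $|r_i - r_j'| \geq |r_i - r_j| \geq q$ by the separation property of $\cal P$. The minimum of this closed nonempty set therefore exists and becomes the new value of $r_i$. After every representative has been processed, the separation property holds globally: for any opposite-color pair $r_a', r_b'$, if $a$ precedes $b$ in the processing order then $|r_b' - r_a'| \geq q$ was imposed exactly when $r_b$ was reset, and $r_a'$ has not moved since.

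Second, I would verify the leftmost property. For each $i$, either $r_i' = x_i$, matching the first alternative of the definition, or $r_i' > x_i$. In the latter case, $r_i'$ must coincide with the right endpoint of some forbidden interval, for otherwise we could decrease $r_i'$ slightly and stay feasible; that is, $r_i' = r_j' + q$ for some previously processed opposite-color $r_j'$. Since $q > 0$, the witness $r_j'$ sits strictly to the left of $r_i'$ in the final realization, matching the second alternative. The only mildly subtle point, and the one thing worth flagging, is that the spatial order of the representatives may change during the sweep, so ``located to its left'' in the definition of leftmost must refer to their final positions; the argument above sidesteps this by reading off the witness from the tight constraint, which automatically sits at $r_i' - q$ and hence to the left of $r_i'$ in the final realization.
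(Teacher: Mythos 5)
Your proof is correct and follows essentially the same route as the paper, which justifies the observation with a one-sentence argument: traverse the representatives from left to right and move each as far left as possible while maintaining the separation property. Your write-up is a careful formalization of exactly that sweep, including the two points the paper leaves implicit (feasibility of each greedy step, and reading off the leftmost-witness from the tight constraint $r_i' = r_j' + q$).
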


Our algorithms compute a leftmost solution, provided it exists.

In the following sections, we use $c$ or $c'$ to denote one of the two colors. For a fixed $c$, we use $\bar{c}$ to denote the color that is not $c$. We use $c_i$ to denote the color of~$I_i$.

 \section{Decision problem for $q\in \left(\frac{3}{4},2\right]$}

\subsection{Decision problem for $q\in \left(1,2\right]$}

This is an easy case due to the following observation:

\begin{lemma}
In the decision problem for $q\in \left(1,2\right]$, it is enough to consider realizations where $r_i\leq r_{i+1}$, for all $i=1,2,\ldots,n-1$.
\end{lemma}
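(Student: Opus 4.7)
The plan is to start from an arbitrary realization satisfying the separation property and transform it into one with $r_i\leq r_{i+1}$ by permuting representatives within each color class. Since a permutation within one color class leaves the multiset of positions of that color unchanged, it preserves every distance between representatives of distinct colors, and hence the separation property. Therefore, the only question is whether the required monotone ordering can be achieved via such within-color permutations.

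First I would dispose of color boundaries. If $c_i\neq c_{i+1}$, separation forces $|r_i-r_{i+1}|\geq q>1$; but since the intervals are unit-length and $x_i\leq x_{i+1}$, one has $r_i-r_{i+1}\leq (x_i+1)-x_{i+1}\leq 1<q$, which rules out $r_i>r_{i+1}$. Hence $r_i<r_{i+1}$ holds automatically at every color boundary, with no modification needed.

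The remaining task is to reorder the representatives within each color class according to the left-to-right order of their intervals. This reduces to a monotone-assignment lemma that I would prove by a short pigeonhole argument: if $J_1,\ldots,J_m$ are unit intervals sorted by left endpoint and $p_\ell\in J_\ell$ for each $\ell$, then reassigning the $j$-th smallest point $p_{(j)}$ to $J_j$ yields a valid assignment. Indeed, if $p_{(j)}$ lay strictly below the left endpoint of $J_j$, then $j$ of the points would be strictly below that value, while the $m-j+1$ points originally in $J_j,\ldots,J_m$ are all at or above it, a contradiction; the upper bound is symmetric, using that unit length makes right endpoints sorted as well. Applying this lemma to the red and blue classes separately makes the representatives monotone within each color, and combined with the automatic order at color boundaries this yields $r_i\leq r_{i+1}$ throughout.

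I do not expect any real obstacle in this range of $q$. The step that most deserves care is verifying the pigeonhole claim and noting that permuting within a single color does not perturb separation, both of which are quick once one observes that $q>1$ combined with unit length leaves essentially no freedom for representatives of distinct color to sit in reverse order.
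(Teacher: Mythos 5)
Your proof is correct and follows essentially the same route as the paper's: both rest on the observations that $q>1$ together with unit length forces the order of adjacent representatives of distinct colors, and that representatives of the same color can be rearranged among their intervals without affecting the separation property. The only difference is cosmetic — the paper performs local swaps of adjacent out-of-order same-color representatives, whereas you sort each color class globally and justify the reassignment with a pigeonhole argument, which is if anything slightly more complete since it makes explicit that the reassigned points remain in their intervals.
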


\begin{proof}
Suppose that the decision problem has an affirmative answer, and let the realization $\{r_1,r_2,$ $\ldots,r_n\}$ be a certificate.
If $I_i\cap I_{i+1} = \emptyset$, clearly $r_i<r_{i+1}$. If $I_i\cap I_{i+1} \neq \emptyset$, $I_i,I_{i+1}$ have the same color and $r_i>r_{i+1}$, we can reassign the representatives ($r_{i+1}$ becomes the representative of $I_i$ and $r_{i}$ becomes the representative of $I_{i+1}$), and this change does not
decrease the length of any \mcsi. If $I_i\cap I_{i+1} \neq \emptyset$ and $I_i,I_{i+1}$ have different color, the fact that $r_i,r_{i+1}$ are at distance at least $q$, with $q>1$, implies that it is not possible that $r_i>r_{i+1}$.
\end{proof}

Observe that, if ${\cal R}$ contains two intervals of distinct color at the same position, the answer to the decision problem is negative. Thus, we can assume that this situation does not occur. Additionally, one of the assumptions of our problem is that there are no two intervals of the same color at the same position. Hence, the left-to-right order of the input segments is a total order. We process the segments in this order and try to compute a leftmost solution. The algorithm stops if, for some segment $I_i$, there is no placement for $r_i$ in $I_i$ satisfying the separation property (with respect to the representatives that have already been placed). It is easy to see that the algorithm succeeds in finding a representation if and only if the answer to the decision problem is affirmative.

\begin{proposition} \label{prop:dec-easy}
For $k=2$, the decision problem for $q\in \left(1,2\right]$ can be solved in $O(n)$ time.
\end{proposition}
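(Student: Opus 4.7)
The plan is to verify the greedy procedure described in the paragraph preceding the statement. By the ordering lemma, we may restrict attention to realizations with $r_1\leq r_2\leq \cdots\leq r_n$, and by Observation~\ref{obs:left-sol-enough} to leftmost solutions, where each $r_i$ equals either $x_i$ or $r_j+q$ for some opposite-coloured $r_j$ to its left. Since inputs are sorted and no two intervals of distinct colour sit at the same position, if two intervals coincide they share a colour, and we can rule out the trivial "no"-instances upfront.

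I would process $I_1,I_2,\ldots,I_n$ from left to right, maintaining two pointers, $\rho_R$ and $\rho_B$, that record the positions of the most recently placed red and blue representatives respectively (initialised to $-\infty$). When handling $I_i$, let $\rho^\star$ be the pointer corresponding to the colour $\bar c_i$; set
\[
r_i \;:=\; \max\{\,x_i,\; \rho^\star + q\,\},
\]
declare failure and halt if $r_i>x_i+1$, and otherwise update the pointer for colour $c_i$ to $r_i$. Each step is $O(1)$, so the procedure runs in $O(n)$ time.

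For correctness, the key observation is that, under the monotonicity $r_1\leq\cdots\leq r_n$, the binding separation constraint on $r_i$ reduces from "$r_i\geq r_j+q$ for every $j<i$ with $c_j\neq c_i$" to the single inequality $r_i\geq \rho^\star+q$, because among opposite-coloured predecessors the most recently placed one has the largest coordinate. A straightforward induction on $i$ then shows that the greedy value $r_i$ is componentwise minimal among all realizations satisfying the separation property with the assumed ordering: if $r_j\leq r'_j$ for every $j<i$ in some other valid realization, then the lower bound imposed on $r'_i$ is at least as large as the one imposed on $r_i$, so $r_i\leq r'_i$. Consequently, if the greedy cannot place $r_i$ within $I_i$, no leftmost solution does, and by Observation~\ref{obs:left-sol-enough} no solution exists at all.

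The only non-routine point in this argument is justifying the reduction to the single opposite-coloured pointer, and I expect that to be the main (but minor) obstacle; it is handled by combining monotonicity of the representatives with the fact that all relevant separation constraints lower-bound $r_i$.
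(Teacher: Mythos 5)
Your proposal is correct and follows essentially the same route as the paper: the paper also invokes the ordering lemma (that $r_i\leq r_{i+1}$ may be assumed), rules out coinciding intervals of distinct colors upfront, and then greedily computes a leftmost solution left to right, declaring failure when some $r_i$ cannot be placed inside $I_i$. You merely spell out the details (the two color pointers and the componentwise-minimality induction) that the paper dismisses with ``it is easy to see,'' and those details are sound.
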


\subsection{Decision problem for $q\in \left(\frac 3 4, 1\right]$}

In this case, there might be more left-to-right orderings for the representatives leading to affirmative answers for the problem. To deal with them, we introduce a bipartite graph $H$ with vertices $\cal R \cup \cal B$ where a red and blue segment are adjacent if and only if the length of their intersection is at least~$q$. We denote the connected components of $H$ by $C_1,C_2\ldots$ and their sets of vertices by $V(C_1),V(C_2)\ldots$

\begin{lemma}\label{lem:max-length_span}
Suppose that we sort the segments of ${\cal R}\cap V(C_i)$ (resp, ${\cal B}\cap V(C_i)$) from left to right, and we take two consecutive segments $I_j$ and $I_k$. Then, $x_k-x_j\leq 2-2q$. 
\end{lemma}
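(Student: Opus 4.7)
The plan is to exploit that $I_j$ and $I_k$ lie in the same connected component $C_i$ of $H$, so $H$ contains a path between them. Since $H$ is bipartite, this path alternates colors; write it as
\[
I_j = R_0,\, B_1,\, R_1,\, B_2,\, \ldots,\, B_m,\, R_m = I_k,
\]
where the subscripts index positions along the path rather than segment indices of $\cal I$.

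The first step is the elementary observation that if two unit intervals with left endpoints $x_a \le x_b$ have intersection of length at least $q$, then $x_b - x_a \le 1-q$. Applying this to both edges of $H$ incident to each $B_t$, and using the triangle inequality on the pair $(R_{t-1}, B_t)$ and $(B_t, R_t)$, one obtains $|x_{R_t} - x_{R_{t-1}}| \le 2(1-q) = 2-2q$ for every $t \in \{1, \ldots, m\}$.

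The decisive step is the structural consequence of the hypothesis that $I_j$ and $I_k$ are consecutive among the red segments of $V(C_i)$. Every $R_t$ is a red vertex of $C_i$, so $R_t \in {\cal R}\cap V(C_i)$; hence either $R_t \in \{I_j, I_k\}$, or $x_{R_t} < x_j$, or $x_{R_t} > x_k$. In particular, no $R_t$ has a left endpoint strictly inside the open interval $(x_j, x_k)$. Let $t^*$ be the smallest index with $x_{R_{t^*}} \ge x_k$; it exists because $R_m = I_k$ satisfies this. By minimality $x_{R_{t^*-1}} < x_k$, and the dichotomy above then forces $x_{R_{t^*-1}} \le x_j$. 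Combining this with the previous bound,
\[
x_k - x_j \;\le\; x_{R_{t^*}} - x_{R_{t^*-1}} \;\le\; 2-2q,
\]
which is the claim. The case of two consecutive blue segments in ${\cal B}\cap V(C_i)$ is identical after swapping the roles of red and blue.

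I do not foresee a serious obstacle. The only subtle point is producing a single pair $(R_{t^*-1}, R_{t^*})$ of consecutive red vertices along the path that straddle the gap $[x_j, x_k]$; this is guaranteed by taking the first index at which the path has crossed weakly to the right of $x_k$, and uses nothing beyond connectivity of $C_i$ and the pigeonhole-style argument on the forbidden strip $(x_j, x_k)$.
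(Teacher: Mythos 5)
Your proof is correct and follows essentially the same route as the paper: both locate, along the red--blue alternating path from $I_j$ to $I_k$, a pair of consecutive red vertices that straddle the gap $[x_j,x_k]$ (your $(R_{t^*-1},R_{t^*})$ is the paper's chain $I_{j'},I_b,I_{k'}$ with $I_{j'}\in{\cal R}_1$, $I_{k'}\in{\cal R}_2$) and then bound their separation by $2(1-q)$ via the two incident edges of the intermediate blue vertex. The only cosmetic difference is that you use the triangle inequality where the paper does a three-way case analysis on the position of $x_b$.
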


\begin{proof}
Assume that $I_j$ and $I_k$ are red. Since they are consecutive, ${\cal R}\cap V(C_i)={\cal R}_1 \cup {\cal R}_2$, where ${\cal R}_1=\{I_m\in {\cal R}\cap V(C_i) \textrm{ s.t. } x_m\leq x_j\}$ and ${\cal R}_2=\{I_m\in {\cal R}\cap V(C_i) \textrm{ s.t. } x_m\geq x_k\}$. A path from $I_j$ to $I_k$ in $C_i$ starts at ${\cal R}_1$ and finishes at ${\cal R}_2$. Let us take a chain $I_{j'},I_b,I_{k'}$ of consecutive vertices in the path such that $I_{j'}\in {\cal R}_1$, $I_{b}\in {\cal B}\cap V(C_i)$ and $I_{k'}\in {\cal R}_2$. 
If $x_b\leq x_j$, since $I_{k'}$ and $I_b$ are adjacent in $H$, we have that $x_b\geq x_{k'}-(1-q)$. Since $x_j\geq x_b$ and $x_{k'}\geq x_k$, we obtain that $x_k-x_j\leq 1-q\leq 2-2q$. Analogously, if $x_b\geq x_k$, we obtain the same conclusion.
If $x_j<x_b<x_k$, the facts that the pairs $I_{j'},I_b$ and $I_{k'},I_b$ are adjacent in $H$ imply that $x_b\leq x_{j'}+1-q\leq x_{j}+1-q$ and $x_b\geq x_{k'}-(1-q)\geq x_{k}-(1-q)$. Thus, $x_k-x_j\leq 2-2q$.
\end{proof}

We define ${\cal J}_{C_i}$ as the interval spanned by $V(C_i)$, that is, its leftmost (resp., rightmost) point is the leftmost (resp., rightmost) point of the leftmost (resp., rightmost) segment(s) in $V(C_i)$.

\begin{lemma} \label{lem:graphH}
For any $I_j\in \cal R \cup \cal B$, if $I_j\subseteq {\cal J}_{C_i}$, then $I_j\in V(C_i)$.
\end{lemma}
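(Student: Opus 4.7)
The plan is to assume without loss of generality that $I_j$ is red (the blue case is symmetric), and to prove directly that $I_j$ is adjacent in $H$ to some blue segment of $V(C_i)$, which immediately forces $I_j \in V(C_i)$.

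First, I would dispose of the trivial case where $V(C_i)$ contains no blue segment. Since $H$ is bipartite, no red segment in $V(C_i)$ can then have any neighbor inside $V(C_i)$, so connectivity forces $|V(C_i)|=1$. In this situation $\mathcal{J}_{C_i}$ is a single unit interval, and $I_j \subseteq \mathcal{J}_{C_i}$ with $|I_j|=1$ gives $I_j=\mathcal{J}_{C_i}$, so $I_j \in V(C_i)$.

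For the main case, assume $V(C_i)$ contains blue segments $I_{b_1},\ldots,I_{b_s}$ in left-to-right order. The key observation is that the unit interval $I_j=[x_j,x_j+1]$ overlaps $I_{b_t}$ by at least $q$ if and only if $x_j \in [x_{b_t}-(1-q),\, x_{b_t}+(1-q)]$. Because Lemma~\ref{lem:max-length_span} gives $x_{b_{t+1}}-x_{b_t}\le 2-2q=2(1-q)$, these intervals (each of length $2(1-q)$) overlap consecutively, and their union is the single interval $B=[x_{b_1}-(1-q),\, x_{b_s}+(1-q)]$. So it suffices to show $x_j \in B$.

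The main obstacle is the boundary: the leftmost point $L$ of $\mathcal{J}_{C_i}$ might be determined by a red segment $I_L$ with $x_L<x_{b_1}$, so a priori $x_j$ could lie to the left of $x_{b_1}$. I would handle this by observing that, since $V(C_i)$ is connected and $H$ is bipartite, every red segment in $V(C_i)$ has at least one blue neighbor in $V(C_i)$. Hence $I_L$ has some blue neighbor $I_\beta \in V(C_i)$, and the edge condition forces $x_\beta - x_L \leq 1-q$; combined with $x_{b_1}\leq x_\beta$ this yields $x_{b_1}-(1-q)\leq x_L\leq x_j$. A symmetric argument using the rightmost point of $\mathcal{J}_{C_i}$ gives $x_j\leq x_{b_s}+(1-q)$. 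Thus $x_j \in B$, so $I_j$ overlaps some $I_{b_t}\in V(C_i)$ by at least $q$, giving an edge in $H$ from $I_j$ into $V(C_i)$, and therefore $I_j \in V(C_i)$.
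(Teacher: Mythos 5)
Your proof is correct and rests on the same key ingredient as the paper's, namely Lemma~\ref{lem:max-length_span} applied to the blue segments of $V(C_i)$: you show directly that their $q$-overlap windows of length $2(1-q)$ cover every admissible position for $x_j$, whereas the paper argues by contradiction that $I_j\notin V(C_i)$ would force two consecutive blue segments of $V(C_i)$ with left endpoints more than $2-2q$ apart. Your boundary analysis (bounding $x_L$ and $x_R$ via a blue neighbour of the extremal segment of $\mathcal{J}_{C_i}$) plays the same role as the paper's argument that blue segments of $V(C_i)$ must start on both sides of $x_j$, so this is essentially the same proof in direct rather than contrapositive form.
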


\begin{proof}
Suppose, for the sake of contradiction, that $I_j\notin V(C_i)$. Let $I_j$ be red.  

We start by arguing that there exists some $I_k\in {\cal B}\cap V(C_i)$ such that $x_k>x_j$. Suppose that this is not the case. Let $I_b$ be the rightmost segment in ${\cal B}\cap V(C_i)$. Then, $x_b<x_j$\footnote{$x_b=x_j$ is not possible because then both intervals would overlap by more than $q$ and $I_j\in V(C_i)$}. Since $I_j\notin V(C_i)$, $|I_j\cap I_b|<q$ and $x_j>x_b+1-q$. Let $I_r$ be the rightmost segment in ${\cal R}\cap V(C_i)$. Then, $x_r\leq x_b+1-q$ because otherwise $I_r$ would not intersect any segment in ${\cal B}\cap V(C_i)$ by at least $q$. Hence, all segments in $V(C_i)$ start to the left of $x_j$, contradicting the fact that $I_j\subseteq {\cal J}_{C_i}$.

Among all segments in ${\cal B}\cap V(C_i)$ starting to the right of $x_j$, let $I_k$ be the leftmost one. Since $I_j\notin V(C_i)$, $x_k>x_j+1-q$. By analogous arguments, there exist segments in ${\cal B}\cap V(C_i)$ starting to the left of $x_j$, and the rightmost one, called $I_{k'}$, satisfies $x_{k'}<x_j -(1-q)$. But then $I_{k'}$ and $I_k$ are consecutive segments in ${\cal B}\cap V(C_i)$ with $x_k-x_{k'}>2-2q$, contradicting Lemma~\ref{lem:max-length_span}.
\end{proof}



\begin{lemma}
For any two intervals ${\cal J}_{C_i}$ and ${\cal J}_{C_j}$ with $i\neq j$, their left endpoints do not coincide.
\end{lemma}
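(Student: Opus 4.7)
The plan is to argue by contradiction. Suppose there exist distinct indices $i\neq j$ such that the left endpoints of $\mathcal{J}_{C_i}$ and $\mathcal{J}_{C_j}$ coincide at some point $x$. By the definition of $\mathcal{J}_{C_i}$, there is a leftmost segment $I_a\in V(C_i)$ with $x_a=x$, and similarly a leftmost segment $I_b\in V(C_j)$ with $x_b=x$.

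Next I would split on the colors of $I_a$ and $I_b$. If $c_a=c_b$, this directly contradicts the standing assumption that no two intervals of the same color are at the same position. If $c_a\neq c_b$, then since $I_a$ and $I_b$ are both closed unit intervals starting at $x$, they coincide as subsets of the real line, so $|I_a\cap I_b|=1$. Because we are in the regime $q\leq 1$, we have $|I_a\cap I_b|\geq q$, so by the definition of $H$ the vertices $I_a$ and $I_b$ are adjacent. This forces them to lie in the same connected component of $H$, contradicting $i\neq j$.

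There is no real obstacle: the entire content of the lemma is the simple dichotomy ``either the no-duplicates hypothesis fires, or the two coincident unit intervals have different colors and the edge of $H$ between them merges the two components.'' The only thing to double-check is that the definition of adjacency in $H$ uses a non-strict inequality (``at least $q$''), so the boundary case $q=1$ is handled as well.
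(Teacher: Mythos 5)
Your proof is correct and follows essentially the same argument as the paper: if the coincident leftmost segments share a color this contradicts the no-duplicates assumption, and otherwise the two coincident unit intervals overlap by $1\geq q$, making them adjacent in $H$ and hence in the same component. The boundary check for $q=1$ is a nice touch but the paper handles it implicitly the same way.
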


\begin{proof}
With some abuse of notation, we denote by $I_i$ (resp, $I_j$) an interval of $V(C_i)$ (resp. $V(C_j)$) whose leftmost endpoint coincides with the leftmost point of ${\cal J}_{C_i}$ (resp., ${\cal J}_{C_j}$). Then $x_i\neq x_j$: Otherwise, $I_i$ and $I_j$ would have opposite colors (recall that the input does not contain two segments of the same color at the same position) and would be at the same position, so they would overlap by at least $q$ and they would be in the same connected component of $H$.
\end{proof}

In consequence, we can uniquely sort the intervals ${\cal J}_{C_i}$ from left to right according to their left endpoints. We rename the components $C_i$ so that this left-to-right order is ${\cal J}_{C_1},{\cal J}_{C_2}\ldots$

\begin{lemma} \label{lem:key-int-q}
In the decision problem for $q\in \left(\frac 3 4, 1\right]$, it is enough to consider realizations where, for any $I_{i'}\in C_i$ and $I_{j'}\in C_j$ with $i<j$, $r_{i'}\leq r_{j'}$. 
\end{lemma}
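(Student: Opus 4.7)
The plan is to start from any realization satisfying the separation property and show that we can reassign representatives among intervals of the same color (keeping the multiset of positions fixed) so that the desired cross-component order holds. The argument naturally splits into a structural observation about left endpoints, a direct argument for opposite-color pairs, and a reassignment argument for same-color pairs.

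First, I would establish the auxiliary claim that whenever $I_{i'}\in V(C_i)$ and $I_{j'}\in V(C_j)$ with $i<j$, then $x_{i'}<x_{j'}$. If the spans ${\cal J}_{C_i}$ and ${\cal J}_{C_j}$ are disjoint this is immediate. Otherwise they overlap on $[L_j,R_i]$, and I would argue the overlap has length strictly less than $1$: if $R_i-L_j\ge 1$, the rightmost interval of $V(C_i)$ would occupy $[R_i-1,R_i]\subseteq {\cal J}_{C_j}$, and Lemma~\ref{lem:graphH} would force it to lie in $V(C_j)$ as well, a contradiction. Consequently every $I_{i'}\in V(C_i)$ satisfies $x_{i'}\le R_i-1<L_j\le x_{j'}$ for every $I_{j'}\in V(C_j)$.

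Next, I would dispatch the opposite-color case directly from the separation property. If $I_{i'}$ and $I_{j'}$ have different colors and lie in different components of $H$, their intersection has length strictly less than $q$, so $x_{j'}>x_{i'}+1-q$. Should $r_{i'}>r_{j'}$, the separation property would give $r_{i'}\ge r_{j'}+q\ge x_{j'}+q>x_{i'}+1$, contradicting $r_{i'}\in I_{i'}$. So opposite-color inversions are impossible in any solution, independently of any reassignment.

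The remaining and, I expect, the most delicate case is when $I_{i'}$ and $I_{j'}$ share a color. The key move is that, for each color $c$, one may reassign the color-$c$ representatives to the color-$c$ intervals in sorted order: list the positions $p_1\le\cdots\le p_m$ and the color-$c$ intervals by increasing left endpoint, and match the $k$-th position to the $k$-th interval. This preserves the multiset of representative positions (and their colors), hence the separation property is unaffected. Validity of the reassignment is the standard Hall/pigeonhole argument on systems of unit intervals: if $p_k$ were to lie to the left of the $k$-th interval's left endpoint, then $p_1,\ldots,p_k$ would have to be matched by the original solution inside the first $k-1$ color-$c$ intervals, which is impossible; the symmetric case, with $p_k$ to the right of the $k$-th interval, is analogous. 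After this reassignment, for any same-color $I_{i'}\in V(C_i)$, $I_{j'}\in V(C_j)$ with $i<j$, the structural claim gives $x_{i'}<x_{j'}$, so $I_{i'}$ precedes $I_{j'}$ in the sorted color-$c$ list and hence $r_{i'}\le r_{j'}$, completing the proof.
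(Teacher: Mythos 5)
Your proof is correct and follows the same three-part skeleton as the paper's: a structural claim that left endpoints respect the component order, the opposite-color case via the separation property, and the same-color case via reassignment. The opposite-color step is essentially identical to the paper's (small overlap plus separation forces the order). You diverge in two places. For the structural claim, the paper argues directly that an inversion $x_{i'}>x_{j'}$ would force $I_{j'}\subseteq {\cal J}_{C_i}$ and then invokes Lemma~\ref{lem:graphH}; you instead bound the overlap of the two spans by arguing that otherwise the rightmost interval of $V(C_i)$ would land inside ${\cal J}_{C_j}$. As written, your containment $[R_i-1,R_i]\subseteq{\cal J}_{C_j}$ silently uses $R_i\le R_j$, which is true but needs a one-line justification (if $R_i>R_j$ then, since $L_i<L_j$, we would have ${\cal J}_{C_j}\subseteq{\cal J}_{C_i}$, and Lemma~\ref{lem:graphH} would absorb all of $V(C_j)$ into $V(C_i)$); you should add that line, or simply use the paper's shorter direct argument. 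For the same-color case, the paper swaps the two offending representatives locally (implicitly iterating until no inversion remains), whereas your global per-color sorted reassignment with the Hall/pigeonhole argument on unit intervals is a genuinely different and somewhat cleaner route: it removes all same-color inversions in one step and avoids any termination argument for repeated swaps, at the cost of having to verify feasibility of the sorted matching, which you do correctly using the unit-length assumption. Both routes rest on the same observation that reassignment within a color preserves the multiset of (position, color) pairs and hence the separation property.
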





\begin{proof}
Let the realization $\{r_1,r_2,\ldots,r_n\}$ be a yes-certificate for the problem.

If ${\cal J}_{C_i}\cap {\cal J}_{C_j}=\emptyset$, the conclusion is clear. Otherwise, we denote the positions of the left and right endpoints of ${\cal J}_{C_i}$ by $\ell({\cal J}_{C_i})$ and $r({\cal J}_{C_i})$, respectively. We start by showing that $x_{i'}\leq x_{j'}$. Indeed, if $x_{i'}> x_{j'}$, we would have $x_{j'}\geq \ell({\cal J}_{C_j})> \ell({\cal J}_{C_i})$ and $x_{j'}<x_{i'}\leq r({\cal J}_{C_i})-1$. Thus,  $I_{j'}\subseteq {\cal J}_{C_i}$, contradicting Lemma~\ref{lem:graphH}. 

If $c_{i'}\neq c_{j'}$, $r_{i'}\leq r_{j'}$ is implied by $x_{i'}\leq x_{j'}$ and $|I_{i'} \cap I_{j'}|<q$. 
If $c_{i'}=c_{j'}$ and $r_{i'} > r_{j'}$, the fact that $x_{i'}\leq x_{j'}$ allows us to reassign the representatives ($r_{i'}$ becomes the representative of $I_{j'}$ and vice versa), and this change does not decrease the length of \mcsi(s).
\end{proof}



The previous lemma suggests a greedy approach similar to the one in the previous subsection. Before presenting it, it remains to explain how to find representations of $V(C_i)$ satisfying the separation property.

\begin{lemma} \label{lem:alt-middle-q}
In any realization of $V(C_i)$ satisfying the separation property, there is exactly one alternation between red and blue representatives.
\end{lemma}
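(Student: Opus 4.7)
The plan is a proof by contradiction. Suppose there is a valid realization of $V(C_i)$ whose left-to-right color sequence of representatives has at least two alternations. Then it must contain three consecutive representatives with pattern $c\,\bar c\,c$: take the last representative of the first monochromatic block, the first representative of the second block, and the first representative of the third block. Without loss of generality $c$ is red; denote their positions by $p_1<p_2<p_3$. The separation property forces $p_2-p_1\ge q$ and $p_3-p_2\ge q$, hence $p_3-p_1\ge 2q$.

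Next I would use the density of same-color segments in $V(C_i)$ to exhibit a red segment whose representative cannot be placed. Let $I_1$ and $I_3$ be the red segments of $V(C_i)$ containing $p_1$ and $p_3$. Then $x_{I_1}\le p_1\le p_2-q$ and $x_{I_3}\ge p_3-1\ge p_2+q-1$. Consider the red segments of $V(C_i)$ sorted by left endpoint; by Lemma~\ref{lem:max-length_span}, consecutive entries in this sequence differ by at most $2-2q$. Since $q>\tfrac34$ we have the key inequality $2-2q<2q-1$, and the sequence contains the values $x_{I_1}\le p_2-q$ and $x_{I_3}\ge p_2+q-1$. A monotone sequence whose jumps are at most $2-2q$ cannot skip over the open interval $(p_2-q,\,p_2+q-1)$ of length $2q-1>2-2q$, so some red segment $J^{\star}\in V(C_i)$ must satisfy $x_{J^{\star}}\in(p_2-q,\,p_2+q-1)$.

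Because $J^{\star}=[x_{J^{\star}},x_{J^{\star}}+1]\subseteq(p_2-q,\,p_2+q)$, every point of $J^{\star}$ lies at distance strictly less than $q$ from $p_2$. In particular, the representative $r_{J^{\star}}$, which is red, must be within $q$ of the blue representative $p_2$, contradicting the separation property. Hence the realization has at most one alternation; and since $V(C_i)$ contains segments of both colors (otherwise it is an isolated vertex of $H$ and the statement is vacuous), exactly one alternation occurs.

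The main obstacle is the structural/quantitative step of locating the segment $J^{\star}$, which hinges on the strict inequality $2-2q<2q-1$. This is exactly the feature that distinguishes the regime $q\in(\tfrac34,1]$: the permitted gap between consecutive same-color segments in $V(C_i)$ is strictly smaller than the width $2q-1$ of the "forbidden zone" around $p_2$, so no monotone chain of red segments stretching across $p_2$ can avoid dropping a segment into that zone.
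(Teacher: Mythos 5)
Your proof is correct, but it takes a genuinely different route from the paper's. The paper argues graph-theoretically inside the component $C_i$ of $H$: it first establishes the auxiliary claim that, for any red interval, the representatives of all its blue neighbors in $H$ lie on the same side of its own representative (via $r_{b'}-r_b\le 3-2q<2q$), and then chases a path in $C_i$ from $I_{r'}$ to a suitably chosen blue interval to produce two adjacent intervals whose representatives would have to be both at distance $\ge 3q$ and at distance $\le 2-q$, impossible for $q>3/4$. You instead bypass the adjacency structure entirely and argue metrically: from a red--blue--red triple at positions $p_1<p_2<p_3$ you get $x_{I_1}\le p_2-q$ and $x_{I_3}\ge p_2+q-1$, and then Lemma~\ref{lem:max-length_span} (consecutive same-color left endpoints in $V(C_i)$ differ by at most $2-2q$) forces, since $2-2q<2q-1$ exactly when $q>3/4$, some red segment $J^\star$ of $V(C_i)$ with $x_{J^\star}\in(p_2-q,\,p_2+q-1)$, hence $J^\star\subseteq(p_2-q,\,p_2+q)$ and no admissible position for its representative. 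Each step checks out: the pigeonhole landing point satisfies $a_{j^*+1}\le p_2-q+(2-2q)<p_2+q-1$, and $J^\star$ necessarily carries a representative because the realization covers all of $V(C_i)$. Your approach is shorter, reuses Lemma~\ref{lem:max-length_span} (which the paper proves immediately beforehand but does not invoke here), and makes the role of the threshold $q>3/4$ transparent as the single inequality $2-2q<2q-1$; the paper's path argument is more self-contained with respect to the structure of $H$ but considerably more involved. You are also slightly more careful than the paper in noting that the one-color (isolated-vertex) case must be set aside for the statement to read ``exactly one'' alternation.
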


\begin{proof}
We start by proving this auxiliary claim: In any realization of $V(C_i)$ satisfying the separation property, for any red interval $I_r$  in $V(C_i)$, the representatives of the blue intervals in $V(C_i)$ adjacent to $I_r$ in $H$ are all on the same side of $r_r$.\footnote{Obviously, the claim also holds exchanging red and blue.} Indeed, let $I_b,I_{b'}$ be two blue intervals in $V(C_i)$ adjacent to $I_r$ in $H$. Suppose, for the sake of contradiction, that there exists a realization of $V(C_i)$ satisfying the separation property where $r_b<r_r<r_{b'}$. Since 
the pairs $I_b,I_r$ and $I_{b'},I_r$ are adjacent in $H$, we have $x_b\geq x_r-(1-q)$ and $x_{b'}\leq x_r+(1-q)$. Hence, $r_{b'}-r_b\leq x_r+2-q-x_r+1-q=3-2q<2q$, since $q>3/4$. Thus, the triple $r_b,r_r,r_{b'}$ does not satisfy the separation property.

We next move to the statement of the lemma. Assume, for the sake of contradiction, that in some realization satisfying the separation property we have some triple $r_r<r_b<r_{r'}$, where $I_r,I_{r'}\in {\cal R}\cap V(C_i)$ and $I_b\in {\cal B}\cap V(C_i)$. As we have seen, the representatives of the red intervals in $V(C_i)$ adjacent to $I_b$ in $H$ are all on the same side of $r_b$, say to its left. This means that $I_{r'}$ is not a neighbor of $I_b$ in $H$. It also implies that all blue representatives have the representatives of their neighbors to their left, and all red representatives have the representatives of their neighbors to their right.

Let us take a path $P$ in $C_i$ between $I_{r'}$ and $I_b$, and let us traverse it starting from $I_{r'}$. We define $I_{b'}$ as the first blue interval in $P$ such that $r_{b'}<r_{r'}$ (it might happen that $I_{b'}=I_b$). Notice that $I_{b'}$ is not a neighbor of $I_{r'}$ in $H$. Hence, any path $P'$ from $I_{r'}$ to $I_{b'}$ in $H$ consists of at least three edges. See Fig.~\ref{fig:lem_9}.

\begin{figure}
    \centering
    \includegraphics{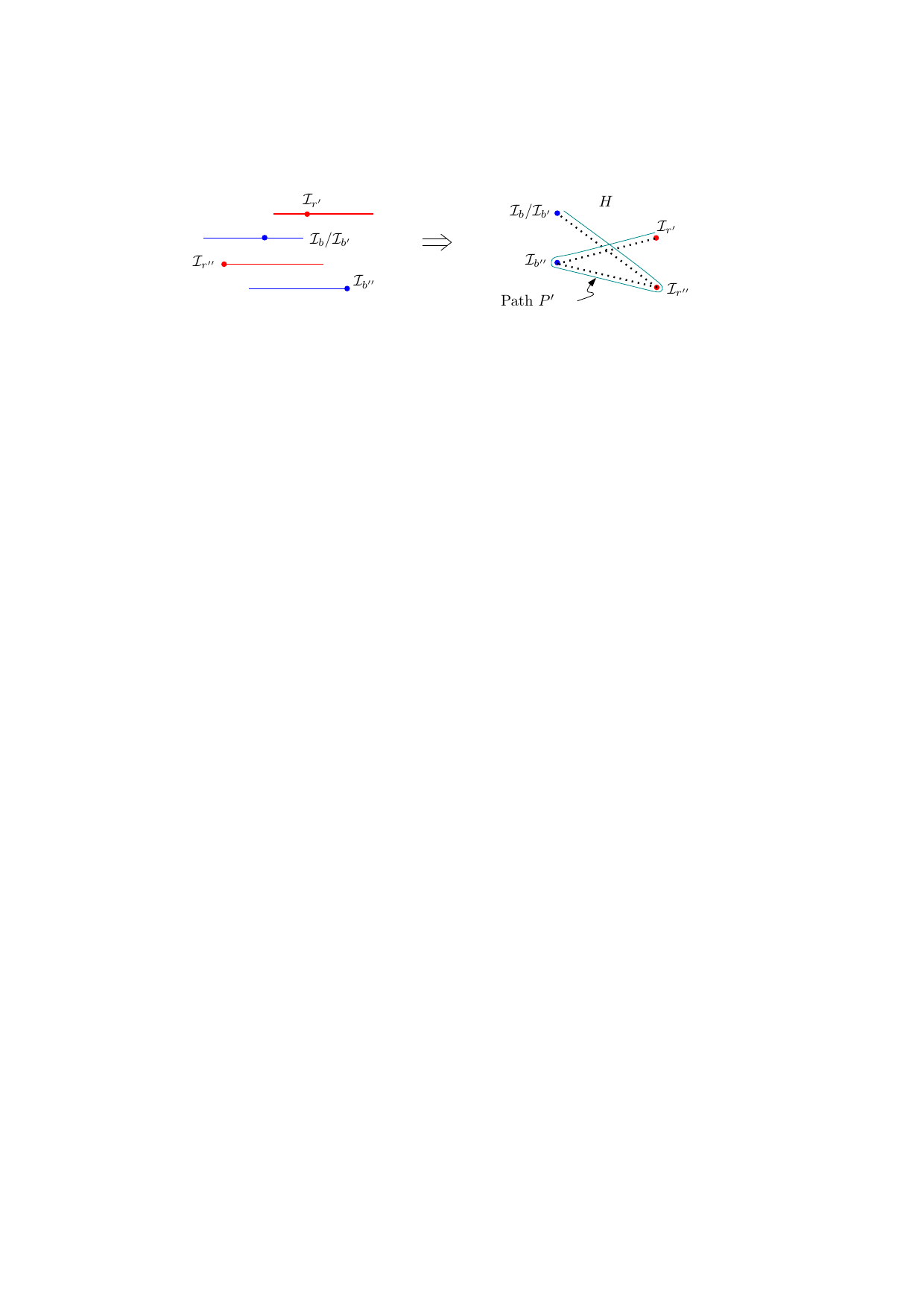}
    \caption{Illustration for the proof of Lemma~\ref{lem:alt-middle-q}.}
    \label{fig:lem_9}
\end{figure}

Denote by $I_{r''}$ the neighbor of $I_{b'}$ in $P'$. As we have argued earlier, $r_{r''}< r_{b'}$. By hypothesis, $r_{b'}<r_{r'}$. Let $I_{b''}$ be the neighbor of $I_{r''}$ in $P'$ different from $I_{b'}$. Since all the blue segments that are in the interior of $P'$ have representatives to the right of $r_{r'}$, we obtain that $r_{r'}<r_{b''}$. Hence, we have $r_{r''}<r_{b'}<r_{r'}<r_{b''}$. Since the realization satisfies the separation property, we derive that $r_{b''}-r_{r''}\geq 3q$. However, $I_{b''}$ and $I_{r''}$ are adjacent in $H$, so $r_{b''}-r_{r''}\leq 2-q$.
Since $q>3/4$, we reach a contradiction.
\end{proof}

Recall that we assume that $x_1=0$. Our algorithm uses the subroutine \texttt{subrep}, given in Algorithm~\ref{alg:subroutine}. Roughly speaking, this subroutine is used to find representations of $V(C_i)$ that can be concatenated with some fixed representation of $V(C_{i-1})$ and where the color of the rightmost representative is fixed. In detail, the subroutine takes as input $(i,c,c',x)$, and returns a leftmost representation of $V(C_i)$ where the color of the rightmost representative is $c$,  where the separation property is satisfied within the representation and also with respect to a representative of color $c'$ placed at $x$ (corresponding to the rightmost representative in some precomputed representation of $V(C_{i-1})$), and where all representatives are at a position greater than or equal to $x$ (recall Lemma~\ref{lem:key-int-q}).

Since the color of the rightmost representative is required to be $c$, by Lemma~\ref{lem:alt-middle-q} in the representation we first encounter all representatives of color $\bar{c}$ and afterwards all representatives of color $c$. For this reason, the subroutine processes first all intervals of color $\bar{c}$ and afterwards all intervals of color $c$.

\begin{algorithm}
\begin{algorithmic}[1]
\vspace{0.25cm}
\Require  $(i,c,c',x)$
\Ensure  a leftmost representation of $V(C_i)$ where the color of the rightmost representative is $c$, where the separation property is satisfied within the representation and also with respect to a representative of color $c'$ placed at $x$, and where all representatives are at a position greater or equal than $x$ (if such a representation exists)
\If{$x=+\infty$}
\State stop and return $null$
\EndIf
\If{$c'=c$} \Comment{In this case, intervals of color $\bar{c}$ must be at distance at least $q$ from $x$}
\For{all intervals $I_j\in V(C_i)$ of color $\bar{c}$}
\State $r_j:=\max\{x+q,x_j\}$
\If{$r_j>x_j+1$}
\State stop and return $null$
\EndIf
\EndFor
\State $y:=$ maximum among all the $r_j$'s found in the previous loop
\For{all intervals $I_j\in V(C_i)$ of color $c$}
\State $r_j:=\max\{y+q,x_j\}$
\If{$r_j>x_j+1$}
\State stop and return $null$
\EndIf
\EndFor
\Else \Comment{In this case, intervals of color $\bar{c}$ must be on $x$ or to its right}
\For{all intervals $I_j\in V(C_i)$ of color $\bar{c}$}
\State $r_j:=\max\{x,x_j\}$
\If{$r_j>x_j+1$}
\State stop and return $null$
\EndIf
\EndFor
\State $y:=$ maximum among all the $r_j$'s found in the previous loop
\For{all intervals $I_j\in V(C_i)$ of color $c$}
\State $r_j:=\max\{y+q,x_j\}$
\If{$r_j>x_j+1$}
\State stop and return $null$
\EndIf
\EndFor
\EndIf
\State return the values $r_j$'s
\end{algorithmic}
\caption{Subroutine \texttt{subrep}}
\label{alg:subroutine}
\end{algorithm}

The main algorithm is outlined in Algorithm~\ref{alg:alg-middle-q}. Given a representation of some $V(C_i)$, in the pseudocode we use $f(\cdot)$ to denote the position of the rightmost representative in the representation. If the representation is $null$, we use the convention $f(null)=+\infty$.

The algorithm processes the components of $H$ in order $C_1,C_2\ldots$. For every $C_i$, using the subroutine \texttt{subrep}, it computes (if they exist) two leftmost representations of $V(C_i)$ satisfying the separation property: one where the rightmost representative is red (called $P^r[i]$ in the pseudocode), and the other where it is blue  (called $P^b[i]$). For each of $P^r[i]$ or $P^b[i]$, there are two candidates: either the preceding representation of $V(C_{i-1})$ finishes with a red representative (representation $P_1$ in the pseudocode), or with a blue representative (representation $P_2$). Among both, the algorithm chooses one minimizing $f(\cdot)$. It also keeps variables $prev^r[i]$ and $prev^r[i]$ that store which of $P_1$ or $P_2$ was chosen (or, more precisely, if the associated representation of $V(C_{i-1})$ finishes with a red or blue representative), and that allow to reconstruct the final solution. Finally, the variable $lastr$ (resp., $lastb$) stores the position of the rightmost representative in $P^r[i]$ (resp., $P^b[i]$).

\begin{algorithm}
\begin{algorithmic}[1]
\vspace{0.25cm}
\Require  $\cal I$ 
\Ensure  a solution, if it exists; otherwise, ``there is no solution"
 \State compute $V(C_1),V(C_2),\ldots, V(C_m)$ \label{line:ini1-middle-q} \Comment{$C_1,\ldots,C_m$ are sorted according to the left-to-right order of the left endpoint of ${\cal J}_{C_i}$}
 \State $lastr:=-1$, $lastb:=-1$ \Comment{$lastr$ (resp., $lastb$) store the position of the rightmost representative in the computed representation of $V(C_{i-1})$ finishing with a red (resp., blue) representative}
\For{$i=1,2,\ldots,m$}
\State $P_1:=\texttt{subrep}(i,r,r,lastr)$
\State $P_2:=\texttt{subrep}(i,r,b,lastb)$
\State $P^r[i]:=$ the one between $P_1,P_2$ minimizing $f(\cdot)$
\State $prev^r[i]:=$ $r,$ if $P_1$ minimizes $f(\cdot)$; $b,$ if $P_2$ minimizes $f(\cdot)$; $undef$, if $f(P_1)=+\infty$, $f(P_2)=+\infty$
\State $P_1:=\texttt{subrep}(i,b,r,lastr)$ \label{line:alg2-line}
\State $P_2:=\texttt{subrep}(i,b,b,lastb)$
\State $P^b[i]:=$ the one between $P_1,P_2$ minimizing $f(\cdot)$
\State $prev^b[i]:=$ $r,$ if $P_1$ minimizes $f(\cdot)$; $b,$ if $P_2$ minimizes $f(\cdot)$; $undef$, if $f(P_1)=+\infty$, $f(P_2)=+\infty$
\State $lastr:=f(P^r[i])$, $lastb:=f(P^b[i])$
\If{$lastr=+\infty$ and $lastb=+\infty$}
\State stop and return ``there is no solution"
\EndIf
\EndFor
\State reconstruct a solution
\end{algorithmic}
\caption{Decision problem for $q\in \left(\frac 3 4, 1\right]$}
\label{alg:alg-middle-q}
\end{algorithm}

\begin{theorem} \label{thm:dec-mid}
For $k=2$, the decision problem for $q\in \left(\frac 3 4, 1\right]$ can be solved in $O(n)$ time.
\end{theorem}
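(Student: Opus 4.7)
The plan is to verify that Algorithm~\ref{alg:alg-middle-q} correctly decides the problem and runs in linear time. Correctness rests on three facts already established: Observation~\ref{obs:left-sol-enough} (it suffices to look for a leftmost solution), Lemma~\ref{lem:key-int-q} (representatives of different components of $H$ preserve the left-to-right order $C_1,C_2,\ldots$), and Lemma~\ref{lem:alt-middle-q} (inside each $V(C_i)$ there is exactly one alternation between red and blue, so once the color of the rightmost representative of the component is fixed, all representatives of the opposite color must lie to its left). Together these facts decompose the problem into a sequential left-to-right computation over $C_1,\ldots,C_m$ in which the only information that must be carried from $C_{i-1}$ to $C_i$ is the color and position of the rightmost representative of $V(C_{i-1})$; this is precisely what the pair $(lastr,lastb)$ records.

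First I would establish correctness of \texttt{subrep}. Fix $(i,c,c',x)$. By Lemma~\ref{lem:alt-middle-q}, in any valid representation of $V(C_i)$ whose rightmost representative has color $c$, all representatives of color $\bar{c}$ precede all representatives of color $c$; within each color class the left-to-right order of representatives matches the sorted order of intervals (otherwise representatives of the same color could be swapped). The bichromatic separation constraint therefore collapses to a single inequality: the leftmost $c$-representative lies at distance at least $q$ to the right of the rightmost $\bar{c}$-representative. Placing each $\bar{c}$-representative at the maximum of $x_j$ and the constraint imposed by $x$ (which is $x+q$ when $c'=c$ and $x$ when $c'=\bar c$), then computing $y$ and placing each $c$-representative at $\max\{y+q,x_j\}$, thus yields the unique leftmost feasible assignment with these boundary data (or correctly detects infeasibility via $r_j>x_j+1$). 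An exchange argument shows that any other feasible representation dominates this one pointwise from the right.

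Next I would prove correctness of the main loop by induction on $i$, with invariant: after iteration $i$, $P^r[i]$ (resp.\ $P^b[i]$) is a leftmost representation of $V(C_1)\cup\cdots\cup V(C_i)$ whose rightmost representative is red (resp.\ blue) and which minimizes the position of that rightmost representative; moreover it exists iff at least one such representation exists. The step I expect to be the main obstacle is the greedy shortcut of keeping only two candidate predecessors $P^r[i-1]$ and $P^b[i-1]$: one must argue that, among all leftmost extensions of $V(C_1)\cup\cdots\cup V(C_{i-1})$ ending in a fixed color $c'$, only the one minimizing the rightmost coordinate is worth carrying forward. This follows from a monotonicity property of \texttt{subrep}: with $c,c'$ fixed, decreasing $x$ weakly decreases every $r_j$ it computes and can only turn an infeasible call into a feasible one, as is immediate from inspecting the two \texttt{max} expressions in Algorithm~\ref{alg:subroutine}. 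Consequently the four calls \texttt{subrep}$(i,c,c',\cdot)$ performed at iteration $i$ cover all leftmost extensions, and taking the minimum of $f(\cdot)$ preserves the invariant; the arrays $prev^r,prev^b$ then enable a standard backward reconstruction of an explicit solution.

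Finally, for the running time: line~\ref{line:ini1-middle-q} can be carried out in $O(n)$ by a single left-to-right sweep, since by Lemma~\ref{lem:max-length_span} each component of $H$ spans an interval of length at most $1+(2-2q)$, so determining whether the next input interval belongs to the current component or starts a new one requires comparing it only against the most recent interval of the opposite color already placed. Each of the four calls to \texttt{subrep} at iteration $i$ performs $O(|V(C_i)|)$ work, so the main loop costs $O\!\left(\sum_i |V(C_i)|\right)=O(n)$, and the reconstruction adds $O(n)$ more, giving the claimed $O(n)$ bound.
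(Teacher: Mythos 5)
Your proposal is correct and follows essentially the same route as the paper: both reduce to leftmost solutions via Observation~\ref{obs:left-sol-enough}, Lemmas~\ref{lem:key-int-q} and~\ref{lem:alt-middle-q}, then run an induction over the components $C_1,\ldots,C_m$ keeping one candidate per terminal color, justified by the fact that pushing the boundary representative left can only help (the paper phrases this as pointwise domination of the computed representation relative to a fixed leftmost solution, you phrase it as monotonicity of \texttt{subrep} in its argument $x$ --- the same idea). The running-time analysis also matches the paper's.
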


\begin{proof}
We first argue that Algorithm~\ref{alg:alg-middle-q} is correct. Clearly, if the algorithm returns a representation, such representation is a solution to the problem. 

Conversely, by Observation~\ref{obs:left-sol-enough}, Lemma~\ref{lem:key-int-q} and Lemma~\ref{lem:alt-middle-q}, if there is a solution, there is a leftmost solution $LS$ with the shape considered by the algorithm. We prove the following statement by induction: Suppose that, in $LS$, the representation of $V(C_i)$ finishes with a blue representative. Then the algorithm computes some $P^b[i]\neq null$, and in $P^b[i]$ every representative is at the same position of the corresponding representative in $LS$ or to its left. An analogous statement holds for red color.

For $i=1$, the representation of $V(C_1)$ in $LS$ is equal to \texttt{subrep}$(1,b,r,-1)$ and \texttt{subrep}$(1,b,b,-1)$, so it is equal to $P^b[1]$. The reason for this is that the left-to-right order of the representatives is the same in both representations, and the leftmost representation for a prescribed ordering of the representatives is unique. Next, suppose that, in $LS$, the representations of $V(C_{i-1})$ and $V(C_i)$ finish with, say, a red and a blue representative, respectively. By induction hypothesis, the algorithm computes some $P^r[i-1]\neq null$, and in $P^r[i-1]$ every representative is at the same position of the corresponding representative in $LS$ or to its left. In line~\ref{line:alg2-line}, $P_1$ is set to \texttt{subrep}$(i,b,r,lastr)$, where $lastr$ is the position of the rightmost representative of $P^r[i-1]$. Since this representative is at the same position of the corresponding representative in $LS$ or to its left, the position of every representative in \texttt{subrep}$(i,b,r,lastr)$ is also set to the same position of the corresponding representative in $LS$ or to its left. Since the algorithm chooses as $P^b[i]$ the one between \texttt{subrep}$(i,b,r,lastr)$ and \texttt{subrep}$(i,b,b,lastr)$ minimizing the position of the rightmost representative, the conclusion follows.

Regarding the running time, we start by describing the computation of the connected components of $H$: We process the input segments in left-to-right order. Let the color of $I_i$ be $c$. Any segment $I_k$ with $x_i-(1-q)\leq x_k \leq x_i+(1-q)$ and with color $\bar{c}$ is in the same component as $I_i$. We make a linear scan in the set of segments of color $\bar{c}$ starting from the leftmost unprocessed segment and add to the component the ones satisfying the condition above. We repeat this process with the rightmost segment of the component and alternating colors, $c$ and $\bar{c}$, at each repetition. If at some repetition no new segment is added to the component, we start a new component from the interval of smallest index that has not been added to any component yet. We repeat this process until all input segments get processed. This generates all the connected components of $H$ in $O(n)$ time. Additionally, a call \texttt{subrep}$(i,\cdot,\cdot,\cdot)$ takes $O(|V(C_i)|)$. Since the algorithm performs at most four calls of this type for every $i$, the total running time is $O(n)$.
\end{proof}

 \section{Decision problem for $q\in \left(\frac{1}{2},\frac 3 4\right]$}
 \label{sec:dec-long}

This case requires a more involved algorithm and demands the use of a more sophisticated algorithmic machinery to achieve an efficient running time.

\subsection{Foundations of the algorithm}

We derive some additional properties of leftmost solutions.

\begin{definition}
A position $x$ is \emph{forbidden} for color $c$ if there exists an interval of color $\bar{c}$ with leftmost point in the interval $(x-q,x+q-1)$. A position $x$ is \emph{forbidden} for the representative of interval $I_i$ if $x$ is forbidden for $c_i$.
\end{definition}

Notice that, if $x$ is a forbidden position for the representative of $I_i$, there exists an interval $I_j$ of the opposite color such that $x$ divides $I_j$ into two portions of length smaller than $q$. Thus, choosing $x$ as a representative for $I_i$ makes it impossible to find a representative for $I_j$ not violating the separation property. We conclude that, while aiming for a solution, no representative can be placed at a forbidden position.

\begin{definition}
A representation of a subset of intervals is \emph{valid} if it satisfies the separation property and no representative is placed at a forbidden position. 
\end{definition}

We point out that, even though the previous definition can be applied to a proper subset of the set of intervals, the notion of \emph{forbidden} position is always defined with respect to the whole set of intervals.

Our first goal is to show that leftmost solutions are composed of pieces of the following shape:

\begin{definition} \label{def:tab-sol}
A \emph{tabular subsolution ${\cal{T}}_i$ starting at $I_i$} is a valid representation of a subset of intervals including $I_i$ such that: (i) $r_i=x_i$; (ii) for all representatives $r_j$ of ${\cal{T}}_i$ with $j\neq i$, we have $r_j>r_i$ and $r_j$ is placed at a distance $q$ to the right of a representative of opposite color; (iii) if for two intervals of the same color $I_k, I_j$ we have that ${\cal{T}}_i$ contains a representative $r_k$ for $I_k$, $k\neq i$ and $r_k \in I_j$, then ${\cal{T}}_i$ contains a representative $r_j$ for $I_j$ and $r_j=r_k$.
\end{definition}

A tabular subsolution starting at $I_i$ will sometimes be called a tabular subsolution \emph{with index~$i$}. 

Suppose that $I_i$ is red. Notice that there does not exist any tabular subsolution starting at $I_i$ if $x_i$ is a forbidden position for the representative of $I_i$. If this is not the case, a tabular subsolution starting at $I_i$ is formed by the red representative $r_i$ placed at $x_i$, possibly followed by a set of blue representatives placed at $x_i+q$, which is in turn possibly followed by a set of red representatives placed at $x_i+2q$, and so on. An example is given in Fig.~\ref{fig:tabular_solution}.

\begin{figure} 
    \centering
    \includegraphics{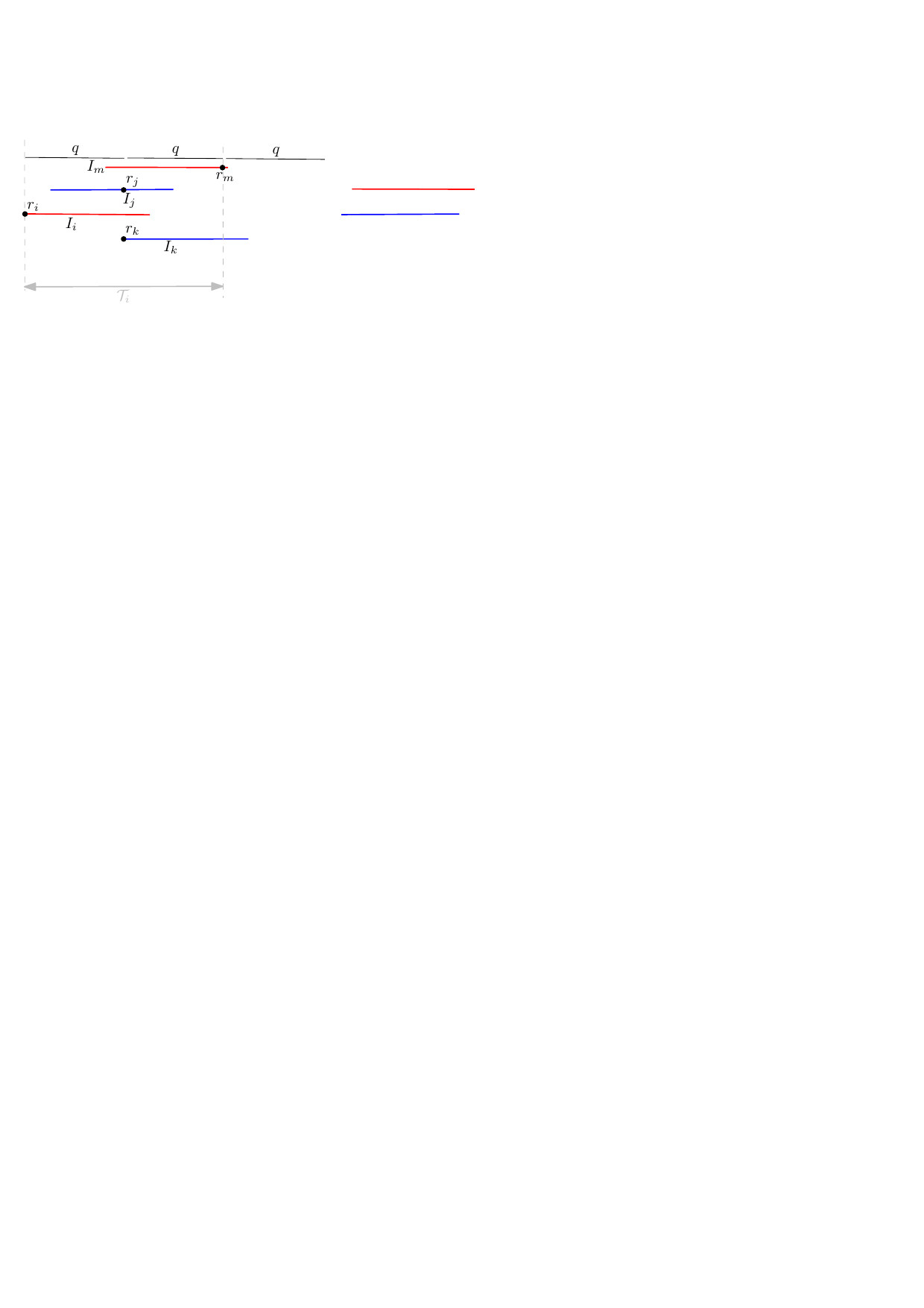}
    \caption{A tabular solution ${\cal{T}}_i$. We note ${\cal{T}}_i$ is maximal and illustrates case (i)  in Observation~\ref{obs:end-Mi}.}
    \label{fig:tabular_solution}
\end{figure}


We can now give a characterization of leftmost solutions:

\begin{lemma}  \label{lem:prun-tab}
Any leftmost solution is a sequence of non-overlaping tabular subsolutions.
\end{lemma}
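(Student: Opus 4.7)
The plan is to take an arbitrary leftmost solution $S$ and show that its representatives partition naturally into groups, each of which is a tabular subsolution. I will define an auxiliary graph $G$ on the representatives of $S$ with two types of edges: (a) edges between opposite-colored representatives at distance exactly $q$, and (b) edges between same-colored representatives at the same position. Each connected component of $G$ will be claimed to be a tabular subsolution, and since these components partition $S$, non-overlap is automatic (distinct tabulars share no representatives), and we can order them left-to-right to obtain the required sequence.

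Fix a connected component $C$ and let $r_i$ be its leftmost representative. Property (i) of Definition~\ref{def:tab-sol}, namely $r_i = x_i$, follows from the leftmost property of $S$: if instead $r_i$ were at distance $q$ from an opposite-color representative to its left, that representative would be linked to $r_i$ by an edge of type (a) and thus belong to $C$, contradicting the minimality of $r_i$ within $C$. For property (ii), I will argue that any $r_j \in C$ with $j \neq i$ satisfies $r_j > r_i$ and sits at distance $q$ to the right of a representative of opposite color in $C$. The strict inequality uses the input assumption that no two same-colored intervals occupy the same position (otherwise $r_j = r_i$ would collapse the two intervals, or force an edge of type (a) violating the minimality of $r_i$). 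The distance-$q$ condition follows from the leftmost property: $r_j$ is either at $x_j$ or at distance $q$ to the right of an opposite-color representative. I will show that the first case can only arise for $r_j \neq r_i$ by forcing another same-color same-position coincidence, which is forbidden.

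Property (iii) is the critical step. Assume $r_k \in C$ with $k \neq i$, $r_k \in I_j$, and $c_j = c_k$; we must show $r_j = r_k$ and $r_j \in C$. By (ii) there is an opposite-color representative $r_a \in C$ with $r_a = r_k - q$. The separation property gives $r_j \geq r_a + q = r_k$ or $r_j \leq r_a - q = r_k - 2q$; the latter is incompatible with $r_j \geq x_j \geq r_k - 1$ because $2q > 1$. Hence $r_j \geq r_k$. For the upper bound, by the leftmost property $r_j$ is either $x_j$ (forcing $r_j = x_j \leq r_k$ and thus $r_j = r_k$) or at distance $q$ to the right of some opposite-color representative $r_{a'}$; if $r_{a'} \neq r_a$, then the separation constraint between $r_{a'}$ and $r_k$ forces $r_{a'} \geq r_k + q$, hence $r_j = r_{a'} + q \geq r_k + 2q > r_k + 1 \geq x_j + 1$, contradicting $r_j \in I_j$. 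So $r_{a'} = r_a$ and $r_j = r_k$. The equality $r_j = r_k$ and equal colors link $r_j$ and $r_k$ via an edge of type (b), placing $r_j$ in $C$.

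The main obstacle I expect is property (iii), which crucially leverages the hypothesis $q > \tfrac{1}{2}$ at two different places in the case analysis. Without $q > \tfrac{1}{2}$, alternative leftmost placements for $r_j$ become available and the equality $r_j = r_k$ may fail, breaking the entire decomposition. Once property (iii) is handled, properties (i) and (ii), together with the validity of the tabular (inherited from the validity of $S$, since taking a subset of representatives preserves the separation property and being forbidden is a per-position property independent of the subset) and the disjointness of distinct components of $G$, complete the proof.
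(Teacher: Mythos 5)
Your decomposition ends up being the same as the paper's, just packaged differently: the paper directly declares a new tabular subsolution at every interval $I_i$ with $r_i=x_i$ and no opposite-color representative at $r_i-q$, and groups the representatives by the half-open windows $[x_{i_j},x_{i_{j+1}})$, whereas you recover exactly these blocks as connected components of your auxiliary graph (your argument for property (i) shows that a component's leftmost representative is precisely such a ``fresh start''). Your treatment of property (iii) is the same case analysis as the paper's, correctly using $2q>1$ twice; the only slip there is that the dichotomy should be on the \emph{position} of $r_{a'}$ rather than on whether $r_{a'}=r_a$ as a representative (if $r_{a'}$ is a different representative at the same position $r_k-q$, separation does \emph{not} force $r_{a'}\geq r_k+q$, but the conclusion $r_j=r_k$ still follows). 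One small advantage of your formulation is that the opposite-color representative at $r_j-q$ invoked in property (ii) is automatically inside the same component, a point the paper leaves implicit.

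Two steps need tightening. First, your justification of the second half of property (ii) (``the first case can only arise \ldots by forcing another same-color same-position coincidence, which is forbidden'') does not work as stated: the problematic case is a representative $r_j=x_j$ with \emph{no} opposite-color representative at $r_j-q$, and nothing forbidden happens there per se. The correct argument in your framework is that such an $r_j$ has no edge reaching any position below $r_j$ (its only incident edges go to opposite-color representatives at $r_j+q$, and any same-position same-color neighbour would itself require an opposite-color representative at $r_j-q$), so every representative reachable from $r_j$ sits at position $\geq r_j$; hence $r_j$ would be the minimum of its own component, contradicting $j\neq i$. Second, ``non-overlapping'' cannot be dismissed as the components being disjoint as sets of representatives: what the rest of the paper uses (e.g.\ the compatibility definition) is the spatial statement that all representatives of one subsolution lie strictly to the left of the starting position of the next. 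This does hold, but it needs a short separation argument: if a component $C$ had a representative at a position $p\geq x_{i'}$ for another component's start $x_{i'}$, taking $p$ minimal gives an opposite-color representative of $C$ at $p-q<x_{i'}$, and the separation property forces $x_{i'}=p$ and $r_{i'}$ to be at distance exactly $q$ from that representative, which would pull $r_{i'}$ into $C$ via a type-(a) edge, a contradiction. With these two repairs the proof is complete and equivalent to the paper's.
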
 

\begin{proof}
We select the indices of every interval $I_i$ such that in the leftmost solution $r_i=x_i$ and $r_i$ \emph{is not}
at a distance $q$ to the right of a representative of opposite color. Let $i_1<i_2<\cdots<i_m$ be the set of selected indices. Notice that the left-to-right order of the representatives of $I_{i_1},I_{i_2},\ldots,I_{i_m}$ is the same as the left-to-right order of the intervals. We want to prove that, for every $j\in \{1,2,\ldots,m-1\}$, the representatives contained in the interval $[x_{i_j},x_{i_{j+1}})$ form a tabular subsolution with index~$i_j$. Using the same arguments, one can prove that the representatives contained in the interval $[x_{i_m},+\infty)$ also form a tabular subsolution.

Let $I^{+}_{i_j}$ be the set of intervals whose representative is contained in the interval $[x_{i_j},x_{i_{j+1}})$. Since the given representation is a solution, the representation of $I^{+}_{i_j}$ in $[x_{i_j},x_{i_{j+1}})$ is valid. Additionally, by definition, $r_{i_j}=x_{i_j}$, i.e., condition (i) of Definition~\ref{def:tab-sol} holds. 

We next show that there does not exist any $I_k\in I^{+}_{i_j}\setminus I_{i_j}$ such that $r_k=r_{i_j}$. If $c_k\neq c_{i_j}$, the conclusion is clear. If $c_k=c_{i_j}$, we suppose for the sake of contradiction that $r_k=r_{i_j}$. Since there are no two intervals of the same color at the same position, we derive that $r_k\neq x_k$. Since the solution is leftmost, $r_k$ is at a distance $q$ from a representative of opposite color located to its left. Thus, so is $r_{i_j}$, which contradicts the way in which we have selected $i_j$. In consequence, for all $I_k\in I^{+}_{i_j}\setminus I_{i_j}$ we have that $r_k>r_{i_j}$. By construction of the selected indices and definition of leftmost solution, for all intervals in $I^{+}_{i_j}\setminus I_{i_j}$ their representatives are placed at a distance $q$ from a representative of distinct color located to its left. Thus, condition (ii) of Definition~\ref{def:tab-sol} also holds. 

It only remains to verify condition (iii) of Definition~\ref{def:tab-sol}. Let $I_k\in I^{+}_{i_j}\setminus I_{i_j}$ (if it exists). Since there is a representative of color $\bar{c_k}$ at $r_k-q$, the next point to the left of $r_k$ (if any) where there are representatives of color $c_k$ (in the given solution) lies in the interval $(-\infty,r_k-2q]$. Since $2q>1$, any interval $I_{\ell}$ of color $c_k$ containing $r_k$ does not contain any point in $(-\infty,r_k-2q]$. Thus, $r_{\ell}\geq r_k$. Suppose, for the sake of contradiction, that $r_{\ell}> r_k$. Since $I_{\ell}$ contains $r_k$ and $r_{\ell}> r_k$, we derive that $r_{\ell}\neq x_{\ell}$. Thus, $r_{\ell}$ is at distance $q$ from a representative of color $\bar{c_k}$ located to its left. In consequence, $r_{\ell}\in[r_k+2q,\infty)$. Again, since $I_{\ell}$ contains $r_k$, it does not contain any point in $[r_k+2q,\infty)$. Hence, in the given leftmost solution, $I_{\ell}$ has its representative at $r_k$ as required in condition (iii).
\end{proof}


The previous lemma justifies the strategy of our algorithm: Try to construct a sequence of tabular subsolutions such that the separation property is satisfied and the sequence contains a representative for every interval. An example is given in Figure~\ref{fig:concatenating}. Before presenting our algorithm, we need to show additional properties of tabular subsolutions.

\begin{figure}[tb]
    \begin{subfigure}{\linewidth}
    \centering
    \includegraphics{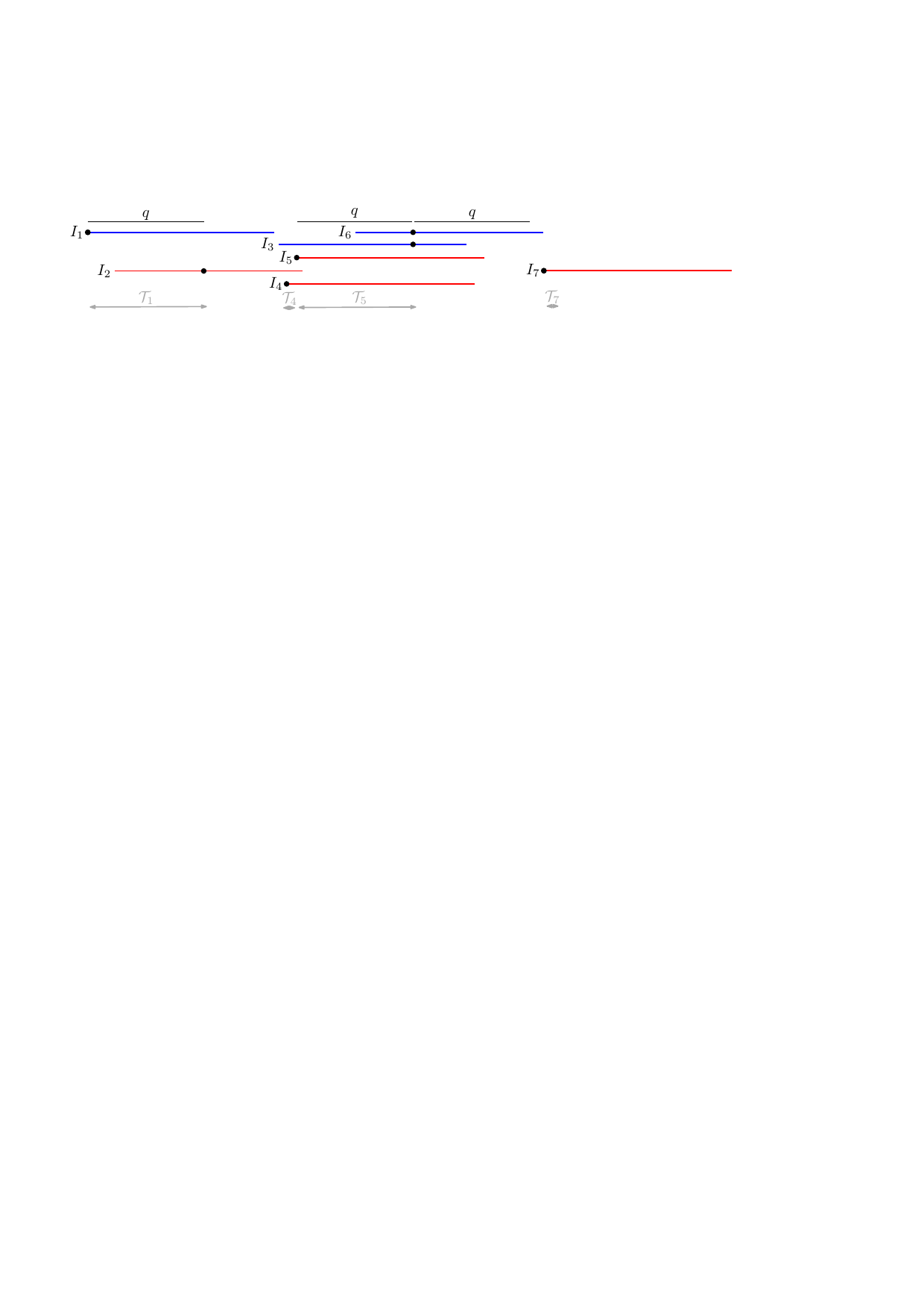}
       \caption{A leftmost solution composed of four tabular subsolutions ${\cal{T}}_1,{\cal{T}}_4, {\cal{T}}_5,{\cal{T}}_7$.}
    \label{fig:concatenating-1}
    \end{subfigure}
    \\[1cc]
    \begin{subfigure}{\linewidth}
    \centering
    \includegraphics{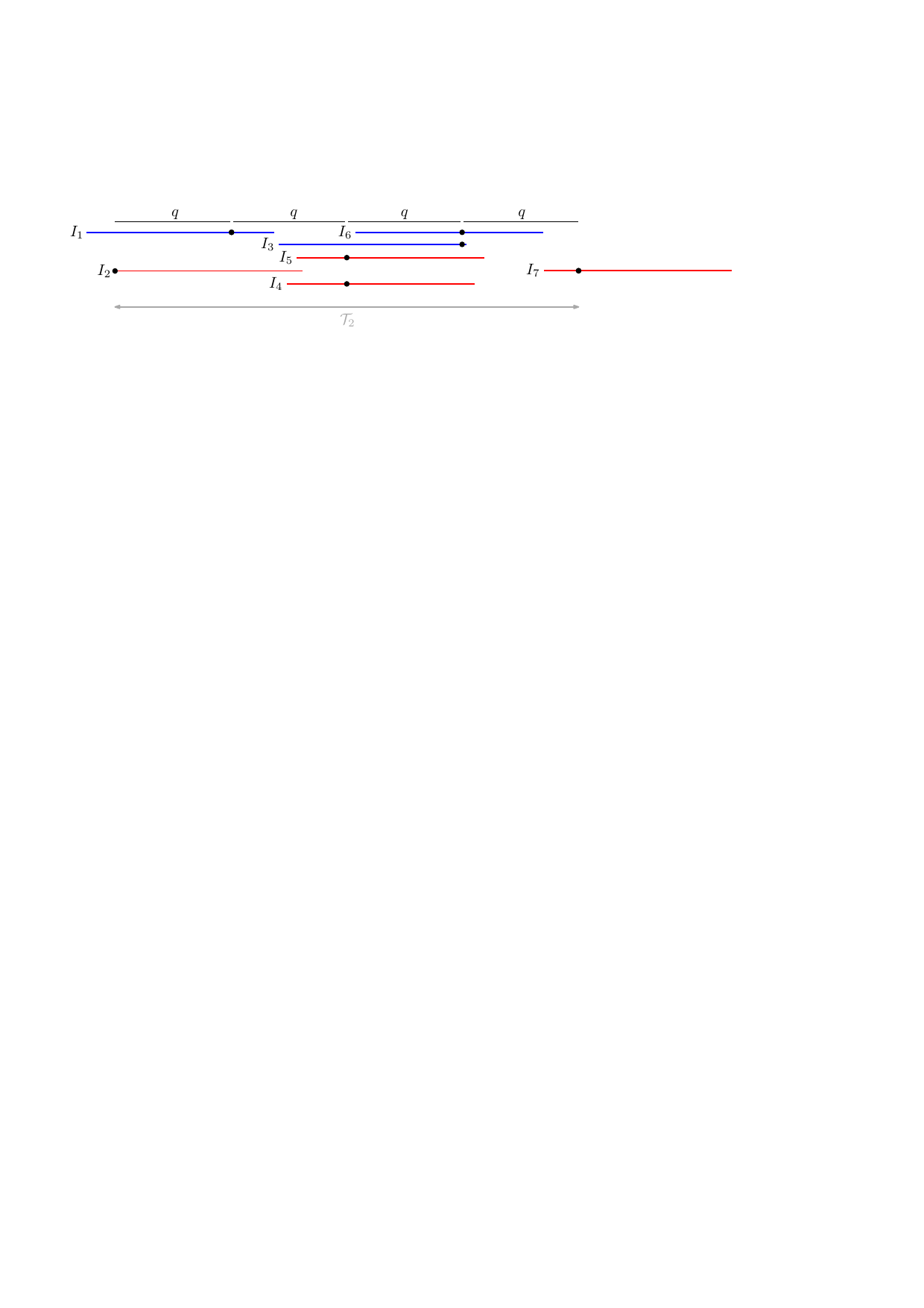}
       \caption{Another leftmost solution, composed of only one tabular subsolution, ${\cal{T}}_2$.}
    \label{fig:concatenating-2}
    \end{subfigure}
    \caption{Two leftmost solutions for the same input.}
    \label{fig:concatenating}
\end{figure}

\begin{lemma} \label{lem:pres-ord-tab}
\begin{itemize}
\item [(a)] Let $I_j,I_k$, with $j<k$, have the same color. If $I_j$ and $I_k$ have representative in some tabular subsolution ${\cal{T}}_i$, $r_j\leq r_k$. 
\item [(b)] The intervals of the same color represented in a tabular subsolution ${\cal{T}}_i$ form a continuous sequence (i.e., no intermediate interval is skipped).
\end{itemize}
\end{lemma}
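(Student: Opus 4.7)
My plan is to first extract the rigid ``grid'' structure implicit in Definition~\ref{def:tab-sol}, and then exploit it for both parts. Iterating condition~(ii) backwards from any representative of ${\cal T}_i$ decreases its position by exactly $q$ and flips its color at each step, and the chain must terminate at $r_i$. Consequently every representative in ${\cal T}_i$ lies on the arithmetic progression $r_i, r_i + q, r_i + 2q, \ldots, r_{\max}$, with color $c_i$ at the even steps and color $\bar c_i$ at the odd ones, and every intermediate position on this chain hosts at least one representative. Letting $p_0 < p_1 < \cdots < p_L$ be the positions of the color-$c$ representatives in ${\cal T}_i$, we therefore have $p_{\ell+1} - p_\ell = 2q$.

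For part~(a), both $r_j$ and $r_k$ belong to $\{p_0,\ldots,p_L\}$, so $|r_j - r_k|$ is either $0$ or at least $2q > 1$. On the other hand, $j < k$ gives $x_j \leq x_k$, and since $r_j \leq x_j + 1$ and $r_k \geq x_k$, we obtain $r_j \leq x_k + 1 \leq r_k + 1$, i.e.\ $r_j - r_k \leq 1$. This rules out $r_j > r_k$, giving $r_j \leq r_k$.

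For part~(b), let $I_{j'}$ have color $c$ with $j < j' < k$; the case $j' = i$ is immediate, so assume $j' \neq i$. The aim is to exhibit a position $p_{\ell^*} \in I_{j'}$ whose representative is not $I_i$, so that condition~(iii) forces $I_{j'} \in {\cal T}_i$. As a preparatory step, I will show $x_{j'} > r_i$: if $j = i$ this is just $x_{j'} > x_i = r_i$ (same color, $j' > i$), and if $j \neq i$ then condition~(ii) yields $r_j \geq p_1 = r_i + 2q$, so $x_j \geq r_j - 1 > r_i$ and hence $x_{j'} \geq x_j > r_i$. In particular, $r_i \notin I_{j'}$, so whenever $I_{j'}$ contains some $p_\ell$, the representative at $p_\ell$ is not $I_i$ (either because $c \neq c_i$, or because $p_\ell > p_0 = r_i$), and condition~(iii) closes the case.

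The main obstacle is ruling out the possibility that $I_{j'}$ contains no $p_\ell$ at all. Since the $p_\ell$'s are spaced by $2q > 1 = |I_{j'}|$, in that event $I_{j'}$ fits strictly inside some gap $(p_{\ell^*}, p_{\ell^*+1})$; combining $p_{\ell_j} = r_j \leq x_j + 1 \leq x_{j'} + 1$ with $p_{\ell_k} = r_k \geq x_k \geq x_{j'}$ (using part~(a)) forces $\ell_j \leq \ell^* < \ell_k$, which in turn places $x_{j'}$ in the open interval $(p_{\ell^*}, p_{\ell^*} + 2q - 1)$. But ${\cal T}_i$ hosts a color-$\bar c$ representative at the midpoint $p_{\ell^*} + q$ (it lies on the backward chain from $r_k$), and the existence of the color-$c$ interval $I_{j'}$ whose left endpoint lies in $(p_{\ell^*} + q - q,\, p_{\ell^*} + q + q - 1)$ is exactly what makes $p_{\ell^*} + q$ a \emph{forbidden} position for $\bar c$, contradicting the validity of ${\cal T}_i$. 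Hence $I_{j'}$ must contain some $p_\ell$, and part~(b) follows.
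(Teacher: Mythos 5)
Your proof is correct. For part (b) it is essentially the paper's argument: a skipped interval of color $c$ would have to fit strictly inside a gap of length $2q>1$ between consecutive occupied grid positions of that color, and then its left endpoint turns the midpoint of that gap --- which hosts a representative of color $\bar{c}$ --- into a forbidden position for $\bar{c}$, contradicting the validity of ${\cal{T}}_i$; you are in fact more explicit than the paper about why any interval containing an occupied grid position must be represented (via condition (iii) of Definition~\ref{def:tab-sol}) and about excluding the degenerate cases where $I_{j'}$ lies entirely before $p_0$ or after $p_L$. For part (a) you take a genuinely different and arguably cleaner route: the paper notes that $r_j>r_k$ together with $j<k$ would place $r_k$ inside $I_j$ and then invokes condition (iii) to force $r_j=r_k$, whereas you observe that two same-color representatives of ${\cal{T}}_i$ differ by $0$ or by at least $2q>1$, while $r_j-r_k\leq x_j+1-x_k\leq 1$; both mechanisms work, and yours needs only the grid structure. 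One small inaccuracy to fix: in your preparatory step for part (b), the claim $x_{j'}>r_i$ is justified by ``$r_j\geq p_1=r_i+2q$'', which is only valid when $c=c_i$; for $c=\bar{c_i}$ one has $p_0=r_i+q$, so only $r_j\geq r_i+q$ follows and $x_{j'}>r_i$ can actually fail. This is harmless, since the claim is needed only in the case $c=c_i$ (to rule out $p_\ell=p_0=r_i$), while for $c\neq c_i$ the representative at $p_\ell$ differs from $r_i$ by color alone; you should simply restrict that preparatory step to the case $c=c_i$.
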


\begin{proof}
(a) If $I_j\cap I_k=\emptyset$, the claim is trivial. Otherwise, if $j=i$ the claim follows from condition (ii) of Definition~\ref{def:tab-sol}. Otherwise assume, for the sake of contradiction, that $r_j>r_k$. This, together with $j<k$, implies that $r_k$ is contained in $I_j$. But then condition (iii) of Definition~\ref{def:tab-sol} implies that $r_k=r_j$.

(b) The proofs for the two colors are almost identical, so we prove the statement for $c_i$. By definition, the intervals of color $c_i$ represented in ${\cal{T}}_i$ are $I_i$ (which can be seen as the rightmost interval in $\cal I$ containing $x_i$), all intervals of color $c_i$ containing $x_i+2q$ (which form a continuous sequence),\ldots, and all intervals of color $c_i$ containing $x_i+2\nu q$ (which form a continuous sequence), for some positive integer $\nu$. Hence, if the statement were not true, there would exist some $I_k$ of color $c_i$ and some $\rho\in\{0,1,\ldots,\nu -1\}$ such that $I_k$ is to the right of $x_i+2\rho q$ and to the left of $x_i+2(\rho+1)q$. By definition of ${\cal{T}}_i$, there is a representative of color $\bar{c_i}$ at position $x_i+(2\rho +1) q$. However, this position is forbidden due to $x_i+2(\rho+1)q-1>x_k>x_i+2\rho q$. We obtain a contradiction.
%
%
\end{proof}

We point out that the complete set of intervals whose representative appears in ${\cal{T}}_i$ does not necessarily form a set of consecutive intervals. For example, in Figure~\ref{fig:concatenating-1}, ${\cal{T}}_5$ contains representatives for $I_3,I_5,I_6$.

Notice that we have defined tabular subsolutions in a way that there might be many tabular subsolutions starting at $I_i$. To be able to work with them in an efficient way, we introduce the concept of \emph{maximal} tabular subsolution. Intuitively, a maximal tabular subsolution is a tabular solution that cannot be extended further to the right. The formal definition is as follows:

\begin{definition} The \emph{maximal tabular subsolution starting at $I_i$}, denoted $M[i]$, is the tabular subsolution starting at $I_i$ containing the largest number of representatives of intervals. \end{definition}



We can observe the following.

\begin{observation} \label{obs:end-Mi}
Suppose that the rightmost representatives of $M[i]$ have color $c$ and are placed at $x_i+kq$, for some integer $k\geq 0$. Then:
(i) there is no interval of color $\bar{c}$ that contains the point $x_i+(k+1)q$; or
(ii) there is an interval of color $\bar{c}$ that contains the point $x_i+(k+1)q$, but the position $x_i+(k+1)q$ is forbidden for color $\bar{c}$.
\end{observation}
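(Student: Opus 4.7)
The plan is to argue by contradiction: suppose the rightmost representatives of $M[i]$ lie at $x_i + kq$ with color $c$, and that neither (i) nor (ii) holds, so some interval of color $\bar c$ contains $x_i + (k+1)q$ and that position is not forbidden for color $\bar c$. I will construct a tabular subsolution ${\cal{T}}_i'$ starting at $I_i$ that strictly extends $M[i]$ in the number of represented intervals, contradicting the maximality of $M[i]$.

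The construction is to take $M[i]$ and add a representative at $x_i + (k+1)q$ for every interval of color $\bar c$ containing that point; by hypothesis at least one such interval exists. Verifying that ${\cal{T}}_i'$ is a tabular subsolution splits into (a) validity (separation property and no forbidden position), and (b) conditions (i)--(iii) of Definition~\ref{def:tab-sol}. For (a), every new representative lies at distance exactly $q$ from the rightmost old ones (which have opposite color $c$) and at distance at least $2q > q$ from every earlier representative, so the separation property is preserved; the forbidden-position check at $x_i + (k+1)q$ is granted by assumption. Conditions (i) and (ii) are immediate, since $r_i = x_i$ is untouched and every new representative sits exactly $q$ to the right of an opposite-color representative at $x_i + kq$.

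The step that demands care is condition (iii), together with the soundness of the construction, i.e., that no interval ends up with two different representatives. Both follow from the regime $q > 1/2$, giving $1/q < 2$: if an interval $I_a$ of color $\bar c$ contained both $x_i + mq$ and $x_i + (k+1)q$, then $|m-(k+1)|q \leq 1$ together with the fact that $m$ and $k+1$ share the same parity (determined by color $\bar c$) would force $|m-(k+1)|$ to be an even integer strictly less than $2$, hence $m = k+1$. Consequently, no interval of color $\bar c$ already represented in $M[i]$ can contain $x_i + (k+1)q$ (otherwise its existing representative would have to sit at $x_i + (k+1)q$, contradicting that the rightmost ones lie at $x_i + kq$), so the construction adds one fresh representative per new interval without conflict. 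The same parity/distance argument also prevents a violation of condition (iii) between any old representative of color $\bar c$ and a newly represented interval, while condition (iii) among the new representatives holds by construction. Hence ${\cal{T}}_i' \supsetneq M[i]$, the desired contradiction.

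The principal obstacle is exactly this parity/distance bookkeeping, since it is the only place where the hypothesis $q > 1/2$ is used and where one must be careful not to create a duplicate assignment; the remaining verifications reduce to one-line checks.
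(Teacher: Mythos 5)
Your proof is correct: the paper states this as an unproved observation, treating it as immediate from the definition of maximality (an $M[i]$ that could be validly extended by one more column of representatives at $x_i+(k+1)q$ would not contain the largest number of representatives), and your argument simply carries out that verification in full. The parity/distance bookkeeping you flag as the delicate step is indeed the only nontrivial check, and your use of $q>\frac{1}{2}$ to force $|m-(k+1)|=0$ correctly rules out both duplicate assignments and violations of condition (iii) of Definition~\ref{def:tab-sol}.
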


For example, in Figure~\ref{fig:tabular_solution}, $M[i]={\cal{T}}_i$ and it cannot be extended further due to case (i). In Figure~\ref{fig:concatenating-1}, $M[4]={\cal{T}}_4$ contains only one point due to case (ii): extending it further requires placing the representatives of $I_3$ and $I_6$ at $x_4+q$, which is a forbidden position (it would leave no feasible representative for $I_5$).

It can happen that $M[i]=\emptyset$. It can also happen that  $M[i]$ includes representatives at $x_i+q$ of intervals that begin to the left of $x_i$ (for example, in Figure~\ref{fig:concatenating-2},  $M[2]={\cal{T}}_2$ contains a representative for $I_1$).

We use $\max^r[i]$ and $\max^b[i]$, respectively, to denote the index of the rightmost red and rightmost blue interval with representative in $M[i]$, respectively.
If $M[i]$ does not contain a representative for any interval of some color, we set the parameter to $-1$.




Next, we classify the set of intervals with the same color as $I_i$, as follows: 
\begin{itemize}
    \item We denote by $\mathcal{I}^{s,<}_i$ the set of intervals $I_j$ such that $x_j<x_i$ (i.e., $j<i$).
    \item We denote by $\mathcal{I}^{s,>}_i$ the set of intervals $I_j$ such that $x_j>x_i$ (i.e., $j>i$).
\end{itemize}

\begin{figure}
    \centering   \includegraphics[scale=1]{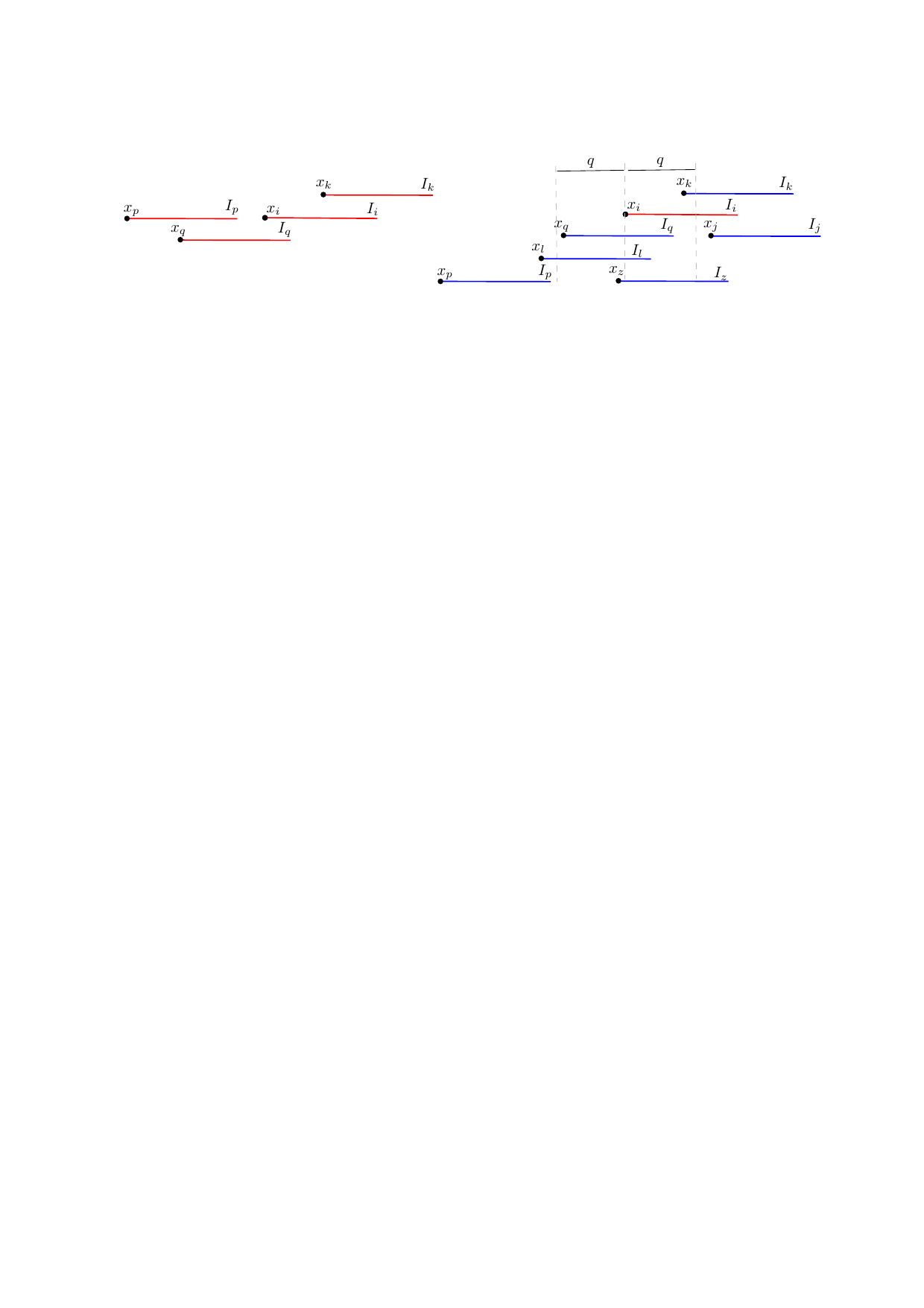}
    \caption{Illustration of the classification of intervals with respect to $I_i$. Left: $\mathcal{I}^{s,<}_i=\{I_p,I_q\}$, 
    $\mathcal{I}^{s,>}_i=\{I_k\}$. Right: 
    $\mathcal{I}^{o,<<}_i=\{I_p\}$, 
      $\mathcal{I}^{o,<}_i=\{I_l\}$, 
   $\mathcal{I}^{o,f}_i=\{I_q\}$, 
     $\mathcal{I}^{o,>}_i=\{I_z\}$, 
  $\mathcal{I}^{o,>>}_i=\{I_k,I_j\}$.}
    \label{fig:all_intervals}
\end{figure}

We also classify the intervals with color opposite from that of $I_i$, as follows: 
\begin{itemize}
    \item We denote by  $\mathcal{I}^{o,<<}_i$ the set of intervals $I_j$ such that $x_j\leq x_i-(1+q)$.
    \item We denote by  $\mathcal{I}^{o,<}_i$ the set of intervals $I_j$ such that $x_j \in(x_i-(1+q)),x_i-q]$.
    \item We denote by $\mathcal{I}^{o,f}_i$ the set of intervals $I_j$ such that $x_j\in (x_i-q,x_i-(1-q))$.
    \item We denote by $\mathcal{I}^{o,>}_i$ the set of intervals $I_j$ such that $x_j\in [x_i-(1-q),x_i)$.
    \item We denote by $\mathcal{I}^{o,>>}_i$ the set of intervals $I_j$ such that $x_j\in (x_i,+\infty)$.
\end{itemize}

See Fig.~\ref{fig:all_intervals}  for an illustration. 
     If there exists $I_j$ with color opposite from that of $I_i$ such that $x_j=x_i$, then we place it in $\mathcal{I}^{o,>}_i$ if $j<i$, and we place it in $\mathcal{I}^{o,>>}_i$ if $j>i$.\footnote{Recall that pairs of intervals of different color at the same position are sorted arbitrarily.}

For each of these families, we use the notation $\mathcal{\overrightarrow{\mathcal{I}}}^{\cdot,\cdot}_i$ and $\mathcal{\overleftarrow{\mathcal{I}}}^{\cdot,\cdot}_i$ to denote the rightmost and leftmost interval of that family, respectively. Given some interval $I_j$, we denote its index by $\mathrm{ind}(I_j)$ (i.e., $\mathrm{ind}(I_j)=j$).  If the argument of $\mathrm{ind}(\cdot)$ is the rightmost or leftmost interval of an empty set, we set the value of $\mathrm{ind}(\cdot)$ to $-1$.

We want to understand under which conditions $M[i]$ (or part of it) might belong to a leftmost solution. We start by observing that, if $\mathcal{I}^{o,f}_i\neq \emptyset$, then $M[i]=\emptyset$ because $x_i$ is a forbidden position. 

\begin{lemma} \label{lem:conditions}
Let $LS$ be a leftmost solution. Let us decompose $LS$ into a sequence of non-overlaping tabular subsolutions as in the proof of Lemma~\ref{lem:prun-tab}, that is, a new tabular subsolution begins at every interval $I_i$ such that in $LS$ $r_i=x_i$ and $r_i$ \emph{is not}
at a distance $q$ to the right of a representative of opposite color. Let ${\cal{T}}_l$ be the leftmost tabular subsolution of $LS$, and ${\cal{T}}_i$ be any other tabular subsolution of LS.
Denote by $L[i]$ the set of representatives in $LS$ to the left of $x_i$. The following conditions are satisfied:
\begin{itemize} 
    \item[(C0)] $\mathcal{I}^{s,<}_l=\mathcal{I}^{o,<<}_l=\mathcal{I}^{o,<}_l=\emptyset$.
    \item[(C1)] For all $I_j\in \mathcal{I}^{s,<}_i$, $r_j\in L[i]$. 
    \item[(C2)] For all $I_j\in \mathcal{I}^{o,<<}_i$, $r_j\in L[i]$. 
    \item[(C3)] For all $I_j\in \mathcal{I}^{o,<}_i$, $r_j\in L[i]$ and $r_j\leq x_i-q$. 
\end{itemize}
\end{lemma}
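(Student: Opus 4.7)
The plan is to prove each of (C0)--(C3) by contradiction, leveraging the positional structure of the tabular subsolution decomposition ${\cal T}_l = {\cal T}_{i_1}, {\cal T}_{i_2}, \ldots, {\cal T}_{i_s}$ of $LS$ produced as in the proof of Lemma~\ref{lem:prun-tab}, together with the three defining conditions of Definition~\ref{def:tab-sol}. The first preliminary fact I would record is that $x_{i_1} < x_{i_2} < \cdots$ strictly (two selected indices at equal $x$-coordinate would force two representatives of distinct color to coincide, violating the separation property), that every representative in ${\cal T}_{i_p}$ lies in $[x_{i_p}, x_{i_{p+1}})$ by construction, and consequently that $r_l = x_l$ is the unique leftmost representative of $LS$.

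For (C0), I would handle the three sets in turn. The cases $I_j \in \mathcal{I}^{o,<<}_l$ and $I_j \in \mathcal{I}^{o,<}_l$ are short: in the first, $r_j \leq x_j + 1 \leq x_l - q < x_l$ contradicts the leftmostness of $r_l$; in the second, separation between $r_j$ and $r_l = x_l$ forces $r_j \leq x_l - q$ (again contradicting leftmostness) or $r_j \geq x_l + q$, which combined with $r_j \leq x_j + 1 \leq x_l - q + 1$ gives $2q \leq 1$, contradicting $q > 1/2$. The delicate case is $I_j \in \mathcal{I}^{s,<}_l$: here I would place $r_j$ into some ${\cal T}_{i_{p'}}$ and rule out $p' = 1$ by pitting Lemma~\ref{lem:pres-ord-tab}(a) (giving $r_j \leq r_l = x_l$) against condition (ii) of Definition~\ref{def:tab-sol} (giving $r_j > r_l$), and rule out $p' > 1$ by noting $x_l < x_{i_{p'}} \leq r_j \leq x_j + 1 < x_l + 1$, whence $r_j \in I_l$, and then applying condition (iii) inside ${\cal T}_{i_{p'}}$ (with $I_j$ as the $I_k$ and $I_l$ as the $I_j$ of the condition) to force ${\cal T}_{i_{p'}}$ to represent $I_l$, contradicting that $I_l$'s unique representative $r_l = x_l$ lives in ${\cal T}_l$.

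For (C1)--(C3) the arguments specialize smoothly. (C1) is essentially the same three-case analysis with ${\cal T}_i, I_i$ in the roles of ${\cal T}_l, I_l$: assuming $r_j \geq x_i$, the cases $p' < p$ directly contradict $r_j \geq x_i$ via $r_j < x_{i_{p'+1}} \leq x_{i_p} = x_i$, the case $p' = p$ is killed by Lemma~\ref{lem:pres-ord-tab}(a) plus condition (ii), and $p' > p$ is killed by condition (iii) forcing $I_i$ into the strictly later subsolution while $r_i = x_i$ already belongs to ${\cal T}_i$. (C2) is immediate from $r_j \leq x_j + 1 \leq x_i - q < x_i$. (C3) reuses the separation-plus-length argument of the $\mathcal{I}^{o,<}_l$ case of (C0) with $r_i = x_i$ replacing $r_l$, yielding $r_j \leq x_i - q < x_i$. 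The main obstacle I anticipate is precisely the subcase in (C0) and (C1) where the same-color witness $I_j$ has its representative inside a strictly later tabular subsolution; the key observation that unlocks it is that the unit length of the intervals automatically gives $r_j \in I_l$ (respectively $r_j \in I_i$) as soon as $r_j \geq x_l$ (respectively $r_j \geq x_i$), which triggers condition (iii) of Definition~\ref{def:tab-sol} and produces the contradiction via the uniqueness of representatives in $LS$.
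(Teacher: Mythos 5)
Your proof is correct, and for most of the lemma ((C0) restricted to $\mathcal{I}^{o,<<}_l$ and $\mathcal{I}^{o,<}_l$, and conditions (C2) and (C3)) it coincides with the paper's argument: representatives of far-away opposite-color intervals are forced left of $x_i$ by position alone, and for $\mathcal{I}^{o,<}$ the separation property plus the unit length and $2q>1$ rule out the right-hand alternative. Where you genuinely diverge is in the same-color cases ($\mathcal{I}^{s,<}_l$ in (C0) and $\mathcal{I}^{s,<}_i$ in (C1)). The paper argues directly from the definition of a leftmost solution: if $r_j\neq x_j$ then $r_j$ sits at distance exactly $q$ to the right of an opposite-color representative $r_k$; separation from $r_i=x_i$ forces $r_k<x_i$ (the alternative $r_k\geq x_i+q$ would push $r_j$ to $x_i+2q$, outside the unit interval $I_j\ni x_i$), and then the dichotomy $r_k<r_i-q$ versus $r_k=r_i-q$ either yields $r_j<x_i$ or contradicts the selection rule for tabular-subsolution start indices. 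You instead locate $r_j$ in a piece ${\cal T}_{i_{p'}}$ of the decomposition and dispose of $p'=p$ via Lemma~\ref{lem:pres-ord-tab}(a) against condition (ii) of Definition~\ref{def:tab-sol}, and of $p'>p$ via the observation that unit length forces $r_j\in I_i$, so condition (iii) would demand a second representative for $I_i$ inside a later piece. Both routes are sound; the paper's is more self-contained (it uses only the leftmost property and separation), while yours leans more heavily on the already-established structural machinery of Lemma~\ref{lem:prun-tab} and Definition~\ref{def:tab-sol}, which makes the hardest subcase arguably cleaner but only works because those pieces are certified tabular subsolutions satisfying condition (iii).
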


\begin{proof}
(C0): Intervals from the set $ \mathcal{I}^{o,<<}_l$ lie completely to the left of $x_l$, so any solution places its representatives to the left of $x_l$. Since $LS$ is a solution and the leftmost representative of $LS$ is placed at $x_l$, we conclude that $\mathcal{I}^{o,<<}_l=\emptyset$. 

Next suppose, for the sake of contradiction, that $\mathcal{I}^{s,<}_l\neq \emptyset$ and let $I_j\in\mathcal{I}^{s,<}_l$. If $x_j<x_i-1$ (i.e., $I_j$ lies completely to the left of $I_i$), the previous argument applies. Otherwise, $x_j\in[x_i-1,x_i)$ (i.e., $I_j$ contains $x_i$). Since the leftmost representative of $LS$ is placed at $x_l$, $r_j$ is placed at $x_l$ or to its right. Thus, $r_j\neq x_j$ and $r_j$ is placed at a distance $q$ from a representative of distinct color located to its left. Since the leftmost representative of $LS$ is placed at $x_l$, such representative of distinct color is to the right of $x_l$. Hence, $r_j$ is at distance at least $2q$ from $x_l$, which is not possible because $I_j$ contains both points and has length 1 ($<2q$).

Finally, regarding $\mathcal{I}^{o,<}_l$, these intervals have a portion of length smaller than $q$ on or to the right of $x_l$ and have color opposite to that of $I_l$. Thus, any solution that sets $r_l=x_l$ places the representatives of intervals in $\mathcal{I}^{o,<}_l$ to the left of $x_l$. Since $LS$ starts at $x_l$, we again conclude that $\mathcal{I}^{o,<}_l=\emptyset$.

The proofs of (C1), (C2) and (C3) are similar to the analogous cases in condition (C0).

(C1): Let $I_j\in \mathcal{I}^{s,<}_i$.  If $x_j<x_i-1$, we have that $r_j<r_i=x_i$. Since $M[i]$ only contains representatives on or to the right of $x_i$, $r_j$ is necessarily contained in $L[i]$. If $x_j\in[x_i-1,x_i)$, there are two possibilities: (i) If $r_j=x_j$, we automatically conclude that $r_j\in L[i]$. (ii) If $r_j$ is placed at a distance $q$ from a representative of distinct color $r_k$ located to its left, as in the analogous case in (C0), $r_k$ cannot be to the right of $x_i$. Hence $r_k<x_i=r_i$. Since $LS$ is a solution, $r_k\leq r_i-q$. If $r_k<r_i-q$, then $r_j<r_i$ and we are done. If $r_k=r_i-q$, then $r_i$ is at a distance $q$ from a representative of distinct color $r_k$ located to its left, which contradicts the way in which we have subdivided the leftmost solution into tabular subsolutions.

(C2): Intervals in $\mathcal{I}^{o,<<}_i$ are to the left of $I_i$ and leave a gap with $I_i$ greater than or equal to $q$.  Therefore, no matter where the representative is, it is compatible with the representative of $I_i$ at $x_i$. The representative is to the left of $x_i$, so it is necessarily contained in $L[i]$.

(C3): Let $I_j\in\mathcal{I}^{o,<<}_i$. This interval have a portion of length smaller than $q$ on or to the right of $x_i$. Thus, if $r_i=x_i$, $r_j$ can only be placed to the left of $x_i$ because $I_i$ and $I_j$ have distinct colors. Additionally, the separation property enforces $r_j\leq x_i-q$.
\end{proof}

\begin{definition}
Let $k<i$. A tabular subsolution ${\cal{T}}_i$ is \emph{compatible} with a tabular subsolution ${\cal{T}}_k$ if the rightmost representative of ${\cal{T}}_k$ is to the left of $x_i$ and every pair of representatives of different colors, one of which is in ${\cal{T}}_i$ and the other in ${\cal{T}}_k$, are at distance at least $q$.
\end{definition}

\begin{figure}
    \centering
    \includegraphics{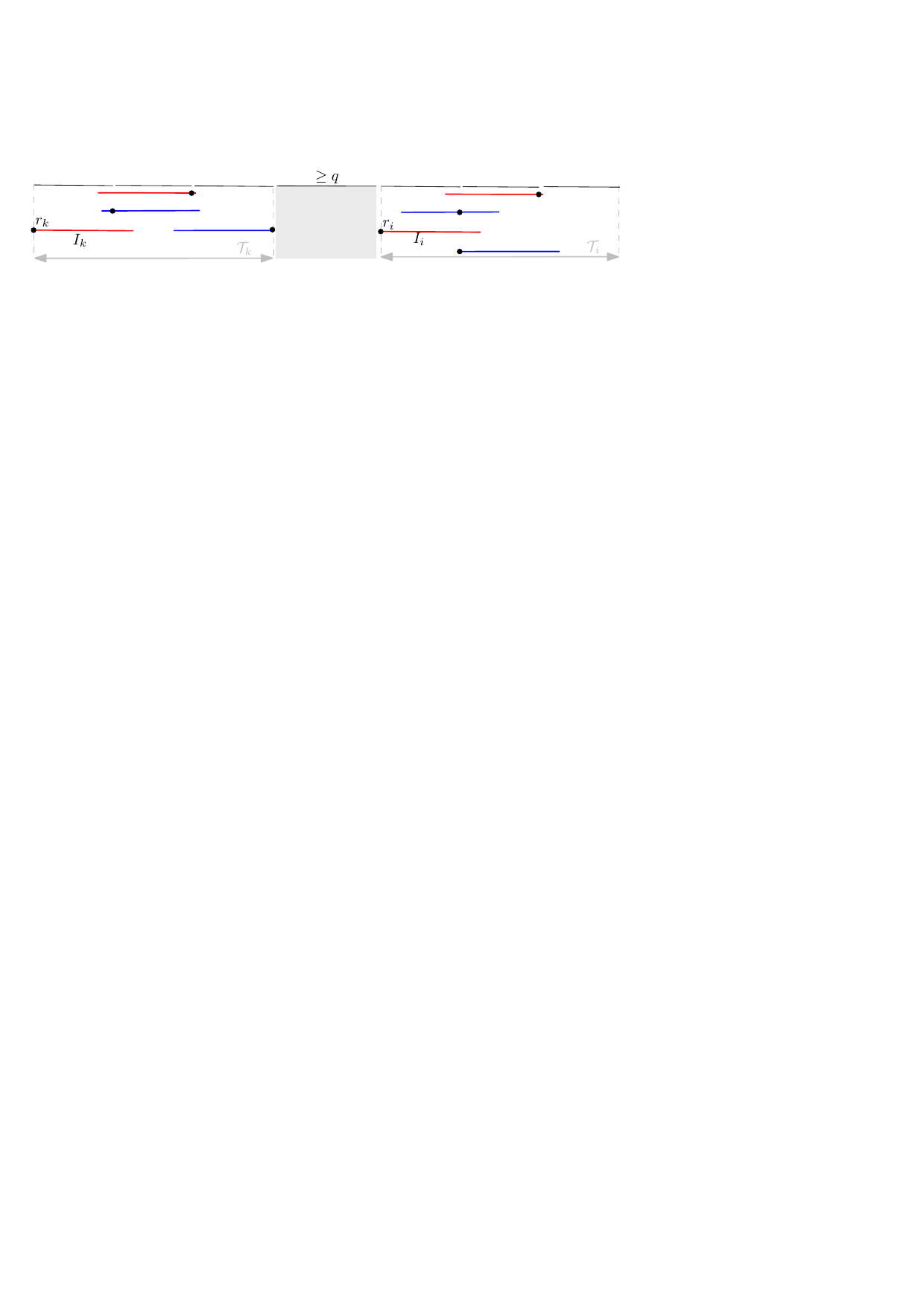}
    \caption{${\cal{T}}_k$ and ${\cal{T}}_i$ are compatible.}
    \label{fig:compatible_tabular_solution}
\end{figure}

Fig.~\ref{fig:compatible_tabular_solution} shows an example of two compatible tabular subsolutions.

Conditions (C0)-(C3) have been defined for a leftmost solution, but they can be defined analogously for a given sequence of tabular subsolutions. In the following, we also discuss the satisfiability of these conditions for sequences of tabular subsolutions for which we do not necessarily know that they form a solution.

\begin{definition}
A sequence ${\cal{S}}$ of tabular subsolutions is \emph{super valid} if it satisfies the following: (i) the leftmost tabular subsolution 
of ${\cal{S}}$ satisfies condition (C0); 
(ii) for every pair of consecutive tabular subsolutions ${\cal{T}}_k,{\cal{T}}_i$, with $k<i$, ${\cal{T}}_k$ and ${\cal{T}}_i$ are compatible, and ${\cal{T}}_i$ satisfies conditions (C1), (C2) and (C3).
\end{definition}

We have seen in Lemma~\ref{lem:conditions} that (C0)-(C3) are necessary conditions for a leftmost solution. In general terms, the next goal is to prove that they are also sufficient. Before, we need another technical lemma:

\begin{lemma} \label{lem:seq-cons-2}
Let ${\cal{S}}$ be a super valid sequence of tabular subsolutions. 
Suppose that the rightmost representative of ${\cal{S}}$ is at position $x$ and has color $c$. Then, 
${\cal{S}}$ contains representatives exactly for all intervals $I_j$ of color $c$ such that $x_j\leq x$ and for all intervals $I_b$ of color $\bar{c}$ such that $x_b< x+q-1$.
\end{lemma}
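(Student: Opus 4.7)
The plan is to argue by strong induction on the number $m$ of tabular subsolutions in ${\cal{S}}$, proving in parallel both inclusions of the claimed set equality: every interval described in the statement has at least one representative in ${\cal{S}}$, and no other interval does.

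For the base case $m=1$, I would write ${\cal{S}}=({\cal{T}}_l)$ and let $x=x_l+kq$ be the rightmost representative, of color $c$. Part (ii) of Definition~\ref{def:tab-sol} forces the tabular structure to contain at least one representative at every row $x_l+\ell q$ for $\ell=0,\ldots,k$, with colors alternating starting from $c_l$ at $x_l$. Part (iii) of the same definition then forces every interval of the row's color containing that row's point to be represented there (except $I_l$ itself, which can only contain the row $x_l$ because $2q>1$ excludes it from any later row). Combining Lemma~\ref{lem:pres-ord-tab}(b) with the validity of ${\cal{T}}_l$ (no row position is forbidden) implies that the allowed left endpoints of intervals of each color form precisely a union of closed subintervals of length $1-q$ sitting immediately to the left of each row position; this exactly matches the two claimed families for $(x,c)$. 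Condition (C0) eliminates any intervals to the left of $x_l$ that might otherwise be missed, $\mathcal{I}^{o,f}_l=\emptyset$ by validity, and the intervals in $\mathcal{I}^{o,>}_l$ all contain the first opposite-color row $x_l+q$ (when $k\geq 1$) and so are represented.

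For the inductive step, I would let ${\cal{S}}'$ be the super valid prefix obtained by removing ${\cal{T}}_{i_m}$, apply the induction hypothesis to ${\cal{S}}'$ with rightmost representative $x'$ of color $c'$, and then incorporate ${\cal{T}}_{i_m}$. Conditions (C1), (C2) and (C3) for ${\cal{T}}_{i_m}$ guarantee that every interval of color $c_{i_m}$ with $x_j<x_{i_m}$ and every interval of opposite color with $x_j\leq x_{i_m}-q$ already lies in ${\cal{S}}'$; once $\mathcal{I}^{o,f}_{i_m}=\emptyset$ is noted (again by validity), the intervals represented inside ${\cal{T}}_{i_m}$ are analyzed exactly as in the base case, with $x_{i_m}$ playing the role of $x_l$. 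Merging the induction hypothesis for ${\cal{S}}'$ with the analysis of ${\cal{T}}_{i_m}$ then yields the claimed characterization of represented intervals for $(x,c)$.

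The main obstacle I expect lies in the boundary intervals of color $\bar{c_{i_m}}$ with $x_b\in[x_{i_m}-(1-q),x_{i_m})$ (those in $\mathcal{I}^{o,>}_{i_m}$): these always contain the leftmost opposite-color row $x_{i_m}+q$ and so are represented in ${\cal{T}}_{i_m}$, yet they may also fall inside the range prescribed by the induction hypothesis for ${\cal{S}}'$; I will need to show that this potential double representation is consistent with the claimed set by a case distinction based on whether $c=c_{i_m}$ and on the parity of $k$. A secondary delicate point is the exclusion side, confirming that no representative of ${\cal{S}}$ lies strictly past $x$ or within $[x+q-1,+\infty)$ for the opposite color; this follows because ${\cal{T}}_{i_m}$ has no row past $x$ by choice of $x$, and the induction hypothesis together with the compatibility condition of super validity places all representatives of ${\cal{S}}'$ strictly to the left of $x_{i_m}$.
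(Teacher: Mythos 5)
Your proposal reaches the same conclusion by a genuinely different route. The paper argues by contradiction: it takes an arbitrary unrepresented interval $I_j$ in the claimed range and locates the tabular subsolution whose condition (C0)--(C3) that interval would violate (condition (iii) of Definition~\ref{def:tab-sol} forces the subsolution to start at the representative contained in $I_j$, or the leftmost same-color representative to the right of $I_j$ pins down a subsolution for which $I_j$ lies in $\mathcal{I}^{s,<}$, $\mathcal{I}^{o,<<}$ or $\mathcal{I}^{o,<}$). You instead induct on the number of subsolutions and give a constructive ``row'' analysis of the last one: every interval in the claimed range either contains a row of its color (hence is represented by Definition~\ref{def:tab-sol}(iii)) or falls into $\mathcal{I}^{s,<}_{i_m}\cup\mathcal{I}^{o,<<}_{i_m}\cup\mathcal{I}^{o,<}_{i_m}$ (hence is represented by hypothesis (C1)--(C3)), with validity ruling out the in-between positions. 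This is essentially the contrapositive of the paper's argument, localized to the final subsolution; note that since (C1)--(C3) are hypotheses of super validity for $\mathcal{T}_{i_m}$ itself, your induction hypothesis for $\mathcal{S}'$ is actually never needed --- the whole argument can be run directly on the last subsolution, which is a small simplification over both your plan and the paper's.

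Two of your intermediate claims are off, though neither is fatal. First, the set of left endpoints of intervals of a given color that contain a row at position $p$ is the length-$1$ window $[p-1,p]$, not a window of length $1-q$; what matters is that the gaps between consecutive same-color windows, of length $2q-1>0$, consist exactly of left endpoints that would make the intermediate opposite-color row forbidden, contradicting validity --- this is the step you should make explicit, since it is where the real work of the lemma lives (it is the paper's ``all points in $(x_j-q+1,x_j+q)$ are forbidden'' step). Second, intervals of $\mathcal{I}^{o,>}_{i_m}$ do \emph{not} always contain a row at $x_{i_m}+q$ of $\mathcal{T}_{i_m}$: if $\mathcal{T}_{i_m}$ is a single representative, there is no such row, and the correct resolution is that these intervals then satisfy $x_b\geq x+q-1$ and so lie outside the claimed range. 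Relatedly, the ``main obstacle'' you flag --- double representation of boundary intervals --- is a non-issue for this statement, which only asserts \emph{which} intervals carry a representative; no parity case distinction is needed there.
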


\begin{proof}
Let us start with color $c$. Suppose, for the sake of contradiction, that there exists an interval $I_j$ of color $c$ such that $x_j\leq x$ and ${\cal{S}}$ does not contain any representative for $I_j$. We distinguish two main cases.

Suppose first that there exists some $r_k\in {\cal{S}}$ such that $c_k=c$ and $I_j$ contains $r_k$. Then, condition (iii) of Definition~\ref{def:tab-sol} together with the fact that ${\cal{S}}$ does not contain any representative for $I_j$ imply that ${\cal{S}}$ contains a tabular subsolution ${\cal{T}}_k$ of index $k$, and thus $r_k=x_k$. 
Since $I_j$ contains $r_k$($=x_k$), $I_j\in \mathcal{I}^{s,<}_k$. If ${\cal{T}}_k$ is the leftmost tabular subsolution of ${\cal{S}}$, this contradicts the fact that ${\cal{T}}_k$ satisfies condition (C0). Otherwise, since $L[k]$ does not contain any representative for $I_j$, it contradicts the fact that ${\cal{T}}_k$ satisfies condition (C1).

Suppose next that $I_j$ does not contain any representative of ${\cal{S}}$ of color $c$. Let $r_k$ be a leftmost representative of ${\cal{S}}$ of color $c$ to the right of $I_j$ (notice that $r_k$ exists because in this case $x$ is a representative of ${\cal{S}}$ of color $c$ to the right of $I_j$). If ${\cal{S}}$ contains a tabular subsolution of index $k$, we apply the same arguments as in the paragraph above and reach a contradiction. Otherwise, $r_k$ is at distance $q$ from a representative $r_m$ of color $\bar{c}$ located to its left. We observe that ${\cal{S}}$ contains a tabular subsolution ${\cal{T}}_m$: Otherwise, there would be a representative $r_p$ of color $c$ at $r_m-q$, but $r_p>x_j+1$ is not possible (by definition of $r_k$), $x_j\leq r_p\leq x_j+1$ is also not possible (because $I_j$ does not contain any representative of color $c$), and $r_p<x_j$ is also not possible (because then $r_m$ would be a forbidden position for color $\bar{c}$ due to $I_j$).

Since $x_j+1<r_k$ and $r_m=r_k-q$, we derive that $r_m>x_j-q+1$. Since all points in the interval $(x_j-q+1,x_j+q)$ are forbidden positions for $\bar{c}$, we obtain that $x_m=r_m\geq x_j+q$. Thus, $I_j\in \mathcal{I}^{o,<<}_m\cup \mathcal{I}^{o,<}_m$. If ${\cal{T}}_m$ is the leftmost tabular subsolution of ${\cal{S}}$, this contradicts the fact that ${\cal{T}}_m$ satisfies condition (C0). Otherwise, since $L[m]$ does not contain any representative for $I_j$, it either contradicts the fact that ${\cal{T}}_m$ satisfies (C2) or that it satisfies (C3).

It remains to argue about color $\bar{c}$. Suppose that there exists an interval $I_b$ of color $\bar{c}$ such that $x_b< x+q-1$ and ${\cal{S}}$ does not contain any representative for $I_b$. If there exists some $r_k\in {\cal{S}}$ such that $c_k=\bar{c}$ and $I_b$ contains $r_k$, or if there exists a representative of ${\cal{S}}$ of color $\bar{c}$ to the right of $I_b$, we apply the same arguments as will color $c$.

In the remaining case for color $\bar{c}$, observe that $x_b< x+q-1$ and the fact that $x$ is not a forbidden position for color $c$ imply that $x_b\leq x-q$. If $x$ is the leftmost representative of some tabular subsolution ${\cal{T}}_m$, this implies that $I_b\in \mathcal{I}^{o,<<}_m\cup \mathcal{I}^{o,<}_m$, contradicting one of the necessary conditions. Otherwise, ${\cal{S}}$ contains a representative of color $\bar{c}$ at $x-q$. Since, in our current case, such representative cannot be to the right of or contained in $I_b$, we have $x-q<x_b$, which contradicts the fact that $x_b\leq x-q$.
\end{proof}

The following is a direct consequence of the previous lemma.

\begin{corollary}\label{cor:seq-cons-2}
Let ${\cal{S}}$ be a super valid sequence of tabular subsolutions. 
    For each color, ${\cal{S}}$ 
     contains representatives for a consecutive set of intervals of that color, including the leftmost interval of the color.
\end{corollary}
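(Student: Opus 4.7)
The plan is to obtain the corollary by a short combinatorial translation of Lemma~\ref{lem:seq-cons-2}. Since the statement is already announced as a direct consequence of the lemma, I do not anticipate a substantial obstacle; the work essentially amounts to rephrasing a threshold condition on left endpoints as a prefix condition in the left-to-right order.

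Concretely, I would let $x$ be the position of the rightmost representative of $\cal S$ and $c$ be its color, and apply Lemma~\ref{lem:seq-cons-2} to obtain that the intervals of color $c$ represented in $\cal S$ form the set $\{I_j : c_j = c,\; x_j \le x\}$, while the intervals of color $\bar c$ represented in $\cal S$ form the set $\{I_b : c_b = \bar c,\; x_b < x+q-1\}$.

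Next, I would invoke the standing assumption that no two intervals of the same color share a left endpoint. Consequently, within each color class the intervals admit a unique strict left-to-right order, and each of the two threshold sets displayed above is a \emph{prefix} of its color class in that order. By definition, a prefix is consecutive (no intermediate interval of the color is skipped) and, whenever nonempty, contains the leftmost interval of the color.

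Finally, to verify the ``including the leftmost interval'' clause, I would check nonemptiness. For color $c$, the interval associated with the rightmost representative lies in the prefix, so it is nonempty and hence captures the leftmost interval of color $c$. For color $\bar c$, whenever its prefix is nonempty the same argument closes the case; the only point worth a moment of care is that the leftmost tabular subsolution $\mathcal T_l$ of $\cal S$ satisfies condition (C0), which in particular forces $I_l$ to be the leftmost interval of color $c_l$ and guarantees that at least one of the two color prefixes is nonempty and contains its leftmost interval.
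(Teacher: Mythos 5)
Your proposal is correct and matches the paper's treatment: the paper presents this corollary as a direct consequence of Lemma~\ref{lem:seq-cons-2}, and your argument simply makes explicit the routine translation from the lemma's threshold conditions on left endpoints to prefix (hence consecutive, leftmost-containing) sets within each color class. The only detail worth noting is that for the color opposite to that of the rightmost representative the prefix may be empty, in which case the ``including the leftmost interval'' clause is read as vacuous; you handle this adequately.
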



\begin{lemma} \label{lem:alm-sol}
Let ${\cal{S}}$ be a super valid sequence of tabular subsolutions such that the rightmost tabular subsolution is a maximal tabular subsolution containing a representative for $I_n$. Then, $\cal{S}$ is a solution to the decision problem.
\end{lemma}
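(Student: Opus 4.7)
The plan is to show that $\mathcal{S}$ is a valid realization by verifying two things: every input interval receives exactly one representative, and the separation property holds between every pair of representatives of different colors. Since each tabular subsolution is by definition a valid representation of its own intervals, and since consecutive tabular subsolutions in $\mathcal{S}$ are compatible, the real work is non-local: handling intervals that lie beyond the last subsolution, and handling pairs of representatives lying in non-consecutive subsolutions.

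For coverage, I will invoke Lemma~\ref{lem:seq-cons-2}: letting $x$ be the position of the rightmost representative of $\mathcal{S}$ and $c$ its color, every interval of color $c$ with left endpoint at most $x$ and every interval of color $\bar{c}$ with left endpoint strictly less than $x+q-1$ is represented. I will split on the color of $I_n$. If $c_n = c$, then $x_n \leq x$ places every interval of color $c$ in the covered range; for a hypothetical unrepresented $I_b$ of color $\bar{c}$ we would have $x_b \in [x+q-1,\,x]$, so $I_b$ would contain $x+q$, forcing case~(ii) of Observation~\ref{obs:end-Mi} (since case~(i) fails), which in turn would produce an interval of color $c$ with left endpoint strictly greater than $x$, contradicting $x_j \leq x_n \leq x$. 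If instead $c_n = \bar{c}$, the separation property inside the last tabular subsolution yields $r_n \leq x-q$, and combined with $2q > 1$ this places $x_n$, hence every $x_b$ for $b \leq n$, strictly below $x+q-1$, so Lemma~\ref{lem:seq-cons-2} accounts for everything.

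For separation, I will sort all representatives of $\mathcal{S}$ by position and prove the key step: two position-consecutive representatives of different colors lie either in the same tabular subsolution (where internal validity gives distance $\geq q$) or in two subsolutions of $\mathcal{S}$ that must be consecutive in $\mathcal{S}$ itself. Indeed, any intermediate ${\cal{T}}_{i_m}$ contributes at least its anchor at $x_{i_m}$, which by chaining the compatibility inequalities sits strictly between the two representatives, contradicting position-consecutivity; thus the compatibility clause of super validity directly delivers separation. Extending to arbitrary different-coloured pairs $r < r''$ is a routine intermediate-value argument: for any $r'$ between them, at least one of $r'-r$ or $r''-r'$ pairs different-coloured representatives and is therefore already $\geq q$, hence $r'' - r \geq q$.

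The step I expect to be the most delicate is the coverage argument in the case $c_n = c$, where maximality of the last tabular subsolution, the definition of forbidden position, and the sorted input ordering must be combined precisely via Observation~\ref{obs:end-Mi} to extract the contradiction. A secondary point that I will make sure to spell out is that no input interval can appear in two distinct tabular subsolutions of $\mathcal{S}$, so each interval truly gets a single well-defined representative; this should follow from the strict position-ordering of tabular subsolutions forced by compatibility, but it deserves a short clean verification.
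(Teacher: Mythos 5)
Your proof is correct and follows essentially the same route as the paper's: coverage via Lemma~\ref{lem:seq-cons-2}/Corollary~\ref{cor:seq-cons-2} combined with maximality of the last subsolution and Observation~\ref{obs:end-Mi}, and separation via compatibility of consecutive subsolutions. Your case split on the colour of $I_n$ is equivalent to the paper's split on whether $r_n$ is the rightmost representative, and your chaining argument for separation across non-consecutive subsolutions (as well as the remark on double representation) merely spells out steps that the paper asserts without detail.
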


\begin{proof}
For $\cal{S}$ to be a solution, we need to prove that it satisfies the separation property and that it contains representatives for all intervals.

Within a tabular subsolution the separation property is obviously satisfied. Additionally, the fact that consecutive subsolutions are compatible guarantees that the separation property is satisfied between pairs of consecutive subsolutions. This is enough to guarantee that the property is satisfied in the entire sequence ${\cal{S}}$.

Let us next prove that $\cal{S}$ contains representatives for all intervals. Let ${\cal{T}}_i$ be the rightmost tabular subsolution of ${\cal{S}}$, and $x$ be the position of the representative for $I_n$ in ${\cal{T}}_i$. Suppose that $I_n$ has color $c$. 
By Corollary~\ref{cor:seq-cons-2}, ${\cal{S}}$ contains representatives for a consecutive set of intervals of color $c$ including $I_n$ and the leftmost interval of color $c$. Thus, ${\cal{S}}$ contains representatives for all intervals of color $c$. To prove that it also contains representatives for all intervals of color $\bar{c}$, we consider two cases. In both cases, we denote by $I_{n'}$ the rightmost interval of color $\bar{c}$.

Suppose first that $x$ is the position of the rightmost representative of $\cal{S}$. By Lemma~\ref{lem:seq-cons-2}, ${\cal{S}}$ contains representatives for all intervals $I_b$ of color $\bar{c}$ such that $x_b< x+q-1$. Thus, if $x_{n'}<x+q-1$, we are done. Notice that $x_{n'}>x+q$ is not possible, because we would have $x_n\leq x<x+q<x_{n'}$, which contradicts the fact that $I_n$ is a rightmost interval from the set. Hence, the remaining case is $x_{n'}\in [x+q-1,x+q]$. Then $I_{n'}$ contains $x+q$. Since $x$ is the rightmost position of ${\cal{T}}_i$ and ${\cal{T}}_i$ is maximal, ${\cal{T}}_i$ does not contain representatives at $x+q$ because this is a forbidden position for color $\bar{c}$. In consequence, there exists an interval of color $c$ with leftmost point in $(x,x+2q-1)$. This contradicts the fact that $I_n$ is a rightmost interval from the set.

Finally, suppose that $x$ is not the position of the rightmost representative of $\cal{S}$. 
Notice that the rightmost representative of $\cal{S}$ cannot be at $x+2q,x+3q\ldots$, because this would contradict the fact that $I_n$ is a rightmost interval from the set. Thus, the rightmost representative of $\cal{S}$ is at $x+q$. Since all intervals $I_b$ of color $\bar{c}$ satisfy $x_b\leq x_n\leq x$ and at least one of them contains $x+q$, we derive that $I_{n'}$ contains $x+q$. Thus, $\cal{S}$ contains a representative for $I_{n'}$ and, by Corollary~\ref{cor:seq-cons-2}, ${\cal{S}}$  contains representatives for all intervals of color $\bar{c}$.
\end{proof}

Lemma~\ref{lem:alm-sol} is the key lemma used by our algorithm to construct a solution. However, checking (C1)-(C3) every time that we want to concatenate tabular subsolutions is an expensive operation. Our final step before presenting the algorithm is to show that, when $\mathcal{I}^{o,<}_i\neq \emptyset$, checking (C3) is enough. Before, we need to introduce a few definitions.

\begin{definition}
Given a sequence ${\cal{S}}$ of tabular subsolutions, an interval $I_j$ is \emph{redundant} for ${\cal{S}}$ if there exists a tabular subsolution ${\cal{T}}_i$ of ${\cal{S}}$ different from the leftmost one such that $I_j\in \mathcal{I}^{s,<}_i$ and $I_j$ contains $x_i$.
\end{definition}

For example, in Figure~\ref{fig:concatenating-1}, $I_4$ is redundant for the sequence ${\cal{S}}=\{{\cal{T}}_1,{\cal{T}}_5,{\cal{T}}_7\}$.

\begin{definition}
A sequence ${\cal{S}}$ of tabular subsolutions is \emph{almost valid} if it satisfies the same conditions of a super valid sequence except for the fact that some intervals that are redundant for ${\cal{S}}$ do not have a representative in ${\cal{S}}$.
\end{definition}


Notice that several types of conditions for several indices might be violated in an almost valid sequence. A redundant interval without representative makes (C1) be violated for the relevant $i$ and possibly for other tabular subsolutions with index greater than $i$. Additionally, it might also make (C2) and (C3) be violated for other tabular subsolutions.

Notice also that, for the purpose of finding a solution, almost valid sequences are also sufficient: The separation property is satisfied, and the sequence contains representatives for all intervals except for some redundant ones. Let $I_j$ be redundant for ${\cal{S}}$ and such that $I_j\in \mathcal{I}^{s,<}_i$, $I_j$ contains $x_i$ and ${\cal{S}}$ contains a tabular subsolution of index $i$ different from the leftmost one. Since $I_j$ has the same color as $I_i$ and contains $x_i$, we can simply add a representative for $I_j$ at $x_i$. If we are interested in a leftmost solution, afterwards we move $r_j$ as much to the left as possible (and we obtain that $r_j=x_j$ or $r_j$ is at a distance $q$ from a representative of distinct color located to its left).

\begin{definition}
A sequence ${\cal{S}}$ of tabular subsolutions is \emph{valid} if it is super valid or almost valid.
\end{definition}


Our final key lemma deals with the case where we need to test if some $M[i]\neq \emptyset$ can be concatenated with a given valid sequence ${\cal{S}}$ of tabular subsolutions, and we have that condition (C3) must be checked because $\mathcal{I}^{o,<}_i\neq \emptyset$. Lemma~\ref{lem:C4-enough} states that, in this case, checking (C3) for $\mathcal{\overrightarrow{\mathcal{I}}}^{o,<}_i$ is enough (in particular, we do not need to check (C1) and (C2)).

\begin{definition}
Let $i,k$ be such that $k<i$, $M[k]\neq \emptyset$, and $M[i]\neq \emptyset$. The \emph{longest compatible} tabular subsolution of $M[k]$ with respect to $M[i]$ is the longest tabular subsolution starting at $I_k$ that is compatible with $M[i]$.
\end{definition}


\begin{lemma} \label{lem:C4-enough}
Let ${\cal{S}}$ be a valid sequence of tabular subsolutions. Let $M[k]$ be the rightmost subsolution of ${\cal{S}}$, $I_i$ be an interval of color $c$ with $i>k$, and ${\cal{T}}_k$ be the longest compatible tabular subsolution of $M[k]$ with respect to $M[i]$. Suppose that $M[i]\neq \emptyset$, ${\cal{T}}_k\neq \emptyset$, and $\mathcal{I}^{o,<}_i\neq \emptyset$. We define ${\cal{S'}}= ({\cal{S}} \setminus M[k]) \cup {\cal{T}}_k \cup M[i]$. 
If ${\cal{S'}}$ contains a representative for $\mathcal{\overrightarrow{\mathcal{I}}}^{o,<}_i$  at a position $\leq x_i-q$, then ${\cal{S'}}$ forms a valid sequence of tabular subsolutions.
\end{lemma}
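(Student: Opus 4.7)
The plan is to verify that $\mathcal{S}'$ satisfies the defining conditions of a valid sequence of tabular subsolutions: pairwise compatibility between consecutive members, and conditions (C0) or (C1)--(C3) for each member (with the caveat that some intervals may lack representatives provided they are redundant in $\mathcal{S}'$). Since $\mathcal{S}$ is already valid and the transformation only modifies the portion at or to the right of $x_k$, every tabular subsolution strictly preceding $\mathcal{T}_k$ is unchanged together with its $L[\cdot]$ set, and therefore inherits its validity from $\mathcal{S}$. The verification thus reduces to four items: (i)~conditions and predecessor-compatibility for $\mathcal{T}_k$; (ii)~compatibility of $\mathcal{T}_k$ with $M[i]$; (iii)~conditions (C1)--(C3) for $M[i]$; and (iv)~redundancy in $\mathcal{S}'$ of every interval represented in $M[k]$ but not in $\mathcal{T}_k$.

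For (i), observe that $\mathcal{T}_k$ and $M[k]$ both start at $r_k=x_k$ and extend upward through rows at positions $x_k, x_k+q, x_k+2q,\ldots$, with $\mathcal{T}_k$ stopping at some row $\mu$ no later than $M[k]$'s final row $\nu$. Hence $\mathcal{T}_k$'s representative positions form a prefix of $M[k]$'s, so compatibility of $\mathcal{T}_k$ with its predecessor in $\mathcal{S}$ is inherited from $M[k]$'s, and the conditions (C0) or (C1)--(C3) for $\mathcal{T}_k$ coincide with those for $M[k]$ because they depend only on $L[k]$. Item (ii) is exactly the definition of ``longest compatible''.

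The core of the proof is (iii). Let $w\leq x_i-q$ denote the position of the representative of $\overrightarrow{\mathcal{I}}^{o,<}_i$ granted by the hypothesis; since every representative of $M[i]$ sits at position $\geq x_i>w$, this representative must lie in $\mathcal{S}'' := \mathcal{S}'\setminus M[i]$. For (C3), the rightmost element $\overrightarrow{\mathcal{I}}^{o,<}_i$ is handled directly by the hypothesis; for any other $I_j\in \mathcal{I}^{o,<}_i$, I would invoke a Lemma~\ref{lem:seq-cons-2}-style argument adapted to $\mathcal{S}''$---exploiting that $\mathcal{S}''$ contains a representative of color $\bar c$ at position $w$---to derive that $r_j\in\mathcal{S}''$, and then combine the separation of representatives between non-overlapping tabular subsolutions with Lemma~\ref{lem:pres-ord-tab}(a) within a single subsolution to conclude $r_j\leq r_{\overrightarrow{\mathcal{I}}^{o,<}_i}\leq x_i-q$. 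The same existence argument yields (C2) for $\mathcal{I}^{o,<<}_i$, whose intervals lie even further to the left. For (C1), intervals $I_j\in\mathcal{I}^{s,<}_i$ with $x_j<x_i-1$ are handled analogously; intervals with $x_j\in[x_i-1,x_i)$ contain $x_i=r_i$ and therefore, if missing from $\mathcal{S}''$, are redundant with respect to $M[i]$. The same redundancy observation dispatches item (iv): any interval dropped when $M[k]$ is truncated to $\mathcal{T}_k$ either receives a representative from $M[i]$ (because it contains some $x_i+\rho q$) or contains $x_i$ and is redundant in $\mathcal{S}'$.

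The main obstacle I expect is formulating a clean variant of Lemma~\ref{lem:seq-cons-2} applicable to the possibly-almost-valid sequence $\mathcal{S}''$: one must track carefully which representatives are forced to be present as a function of the rightmost representative of $\mathcal{S}''$ and its color, and verify that the bound $w\leq x_i-q$ suffices to force representatives for every interval of $\mathcal{I}^{o,<<}_i\cup\mathcal{I}^{o,<}_i$ except possibly redundant ones. The bookkeeping for (iv)---checking that the truncation level $\mu$ of $\mathcal{T}_k$ indeed forces every dropped interval to either be absorbed by $M[i]$ or to be redundant for its starting point $x_i$---will also demand a careful case analysis on the position and color of row $\mu+1$ of $M[k]$ relative to the grid of $M[i]$.
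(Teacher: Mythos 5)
Your overall architecture matches the paper's: the verification reduces to checking conditions (C1)--(C3) for $M[i]$ in $\mathcal{S}'$ modulo redundant intervals, and for (C2)--(C3) your combination of a Lemma~\ref{lem:seq-cons-2}/Corollary~\ref{cor:seq-cons-2}-type consecutiveness argument anchored at the representative $r_m$ of $I_m=\mathcal{\overrightarrow{\mathcal{I}}}^{o,<}_i$ with Lemma~\ref{lem:pres-ord-tab}(a) is exactly what the paper does. However, your treatment of (C1) has a genuine gap. You split $\mathcal{I}^{s,<}_i$ into intervals with $x_j<x_i-1$ (``handled analogously'') and intervals containing $x_i$ (redundant). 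But the ``analogous'' argument, anchored at $r_m$ --- a representative of color $\bar c$ --- only forces representatives for color-$c$ intervals with $x_j<r_m+q-1$: that is the asymmetric guarantee of Lemma~\ref{lem:seq-cons-2} for the color \emph{opposite} to that of the rightmost representative of the prefix. Since $r_m$ can be strictly smaller than $x_i-q$, the band $x_j\in[r_m+q-1,\,x_i-1)$ is in general nonempty, and intervals there are neither reached by the consecutiveness argument nor redundant.

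That band is where the paper's proof does its real work and where the hypotheses are actually consumed: such an $I_j$ must contain $r_m+q$ (one first shows $x_j\le r_m+q$, else $I_j$ would contain $x_i$ and be redundant); if $M[k]$ has color-$c$ representatives at $r_m+q$, they survive into $\mathcal{T}_k$ because that position is compatible with $M[i]$, so $I_j$ is represented; otherwise, by Observation~\ref{obs:end-Mi}, $r_m+q$ must be forbidden for color $c$, which produces an interval $I_h$ of color $\bar c$ with $x_h\in(r_m,r_m+2q-1)$, and then $\mathcal{I}^{o,f}_i=\emptyset$ (a consequence of $M[i]\neq\emptyset$) forces $I_h\in\mathcal{I}^{o,<}_i$ with $x_h>x_m$, contradicting the maximality of $I_m=\mathcal{\overrightarrow{\mathcal{I}}}^{o,<}_i$. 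None of this appears in your proposal; you flag ``careful bookkeeping'' ahead but do not identify the problematic case or the mechanism that resolves it, and without it the claim that (C1) holds is unproven. (Your separate item (iv) is also not quite the right invariant to track --- validity only constrains the interval classes appearing in (C1)--(C3) at each junction, and global coverage is recovered later via Lemma~\ref{lem:alm-sol} --- but that is a framing issue rather than an error.)
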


\begin{proof}
To prove the claim, we need to show that, in ${\cal{S'}}$, $M[i]$ satisfies conditions (C1)-(C3) except for redundant intervals for ${\cal{S'}}$. We define $I_m=\mathcal{\overrightarrow{\mathcal{I}}}^{o,<}_i$.

Conditions (C2) and (C3) are easy: Let $I_s$ be an interval in $\mathcal{I}^{o,<<}_i\cup \mathcal{I}^{o,<}_i$ that is not redundant for ${\cal{S'}}$. Notice that $s<m$. Since $I_m$ has a representative in ${\cal{S'}}$, Corollary~\ref{cor:seq-cons-2} together with the fact that $s<m$ imply that $I_s$ also has a representative in ${\cal{S'}}$. If $I_s\in \mathcal{I}^{o,<}_i$, we additionally need to prove that $r_s\leq x_i-q$. If $r_s$ is contained in ${\cal{T}}_k$, by Lemma~\ref{lem:pres-ord-tab}a we have that $r_s\leq r_m\leq x_i-q$. If it is contained in another tabular subsolution of $\mathcal{S'}$, then it lies to the left of $r_m$, so also at a position $\leq x_i-q$.


To prove that condition (C1) is satisfied, we distinguish several cases.

Let $I_j\in \mathcal{I}^{s,<}_i$ be an interval that is not redundant for ${\cal{S'}}$. We know that $I_m$ has a representative in ${\cal{S'}}$, and we want to show that $I_j$ also does. Notice that the portion of  ${\cal{S'}}$ containing all representatives on or to the left of $r_m$ is also a valid sequence of tabular subsolutions. By Lemma~\ref{lem:seq-cons-2}, such a portion contains representatives for all intervals $I_b$ of color $c$ such that $x_b< r_m+q-1$ and $I_b$ is not redundant for ${\cal{S'}}$. Hence, if $x_j< r_m+q-1$, such a portion contains a representative for $I_j$, and thus the whole ${\cal{S'}}$ also does.

It remains to look at the case where $x_j\geq r_m+q-1$. We first show that $x_j\leq r_m+q$: Since $I_m\in \mathcal{I}^{o,<}_i$, we have that $r_m\geq x_m>x_i-1-q$. If $x_j> r_m+q$, we would obtain that $x_j>x_i-1$. We obtain a contradiction because this implies that $I_j$ contains $x_i$ (so $I_j$ is redundant for ${\cal{S'}}$). Thus, $r_m+q-1\leq x_j\leq r_m+q$. This implies that $I_j$ contains $r_m+q$. Recall that $r_m\leq x_i-q$. If $r_m=x_i-q$, $I_j$ contains $x_i$ and we again obtain a contradiction. Hence, in the remaining of the proof we can assume that $r_m< x_i-q$.

If $M[k]$ has representatives of color $c$ at position $r_m+q$, such representatives are included in ${\cal{T}}_k$ because their position is compatible with a representative of color $c$ at $x_i$. Hence, in this case, $I_j$ has a representative in ${\cal{S'}}$. The only remaining case is when $M[k]$ does not have representatives of color $c$ at position $r_m+q$. We will show that such a case leads to a contradiction. Since there is at least one interval of color $c$ containing $r_m+q$ (interval $I_j$), we derive that $M[k]$ does not have representatives there because $r_m+q$ is a forbidden position. Thus, there exists an interval $I_h$ of color $\bar{c}$ such that $x_h\in (r_m, r_m+2q-1)$. Since $r_m< x_i-q$, we have that $x_h<x_i+q-1$. Since $M[i]\neq \emptyset$, $\mathcal{I}^{o,f}_i=\emptyset$ and thus $x_h\leq x_i-q$. Additionally, $x_h>r_m\geq x_m$. Thus, $I_h\in \mathcal{I}^{o,<}_i$. Since $x_h>x_m$, we obtain a contradiction with the fact that $I_m= \mathcal{\overrightarrow{\mathcal{I}}}^{o,<}_i$.
\end{proof}

\subsection{Algorithm}

\begin{definition}
We say that a non-empty maximal tabular subsolution $M[i]$ is \emph{appendable} if there exists a valid sequence of tabular subsolutions with $M[i]$ as the rightmost tabular subsolution.
\end{definition}

The algorithm is given in Algorithm~\ref{alg:alg-main}. For each $i$ with $M[i]\neq \emptyset$, the algorithm determines if $M[i]$ is appendable. To this end, there are two main options: If $\mathcal{I}^{s,<}_i=\mathcal{I}^{o,<<}_i=\mathcal{I}^{o,<}_i=\emptyset$ (condition (C0)), $M[i]$ is appendable because on its own it forms a valid sequence of tabular subsolutions (lines~\ref{line:leftmost-1}-\ref{line:leftmost-2}). Otherwise, the algorithm looks for some $k<i$ such that $M[k]$ is appendable, and some portion ${\cal{T}}_k$ of $M[k]$ is compatible with $M[i]$ and is such that conditions (C1)-(C3) are satisfied (and then it sets $p[i]:=k)$. Here there are two cases. If $\mathcal{I}^{o,<}_i\neq \emptyset$, to satisfy (C1)-(C3) we simply check that $M[k]$ contains a representative for $\mathcal{\overrightarrow{\mathcal{I}}}^{o,<}_i$ at a position $\leq x_i-q$ (lines~\ref{line:notempty-1}-\ref{line:last-1}). Otherwise, we check that the valid sequence of tabular subsolutions finishing with $M[k]$ (witnessing that $M[k]$ is appendable) contains representatives for $\mathcal{\overrightarrow{\mathcal{I}}}^{s,<}_i$ and $\mathcal{\overrightarrow{\mathcal{I}}}^{o,<<}_i$ (lines~\ref{line:last-case}-\ref{line:last-2}). Finally, the problem has a solution if and only if the algorithm finds some $i$ such that $M[i]$ is appendable and contains a representative for $I_n$ (lines~\ref{line:imp}-\ref{line:final}).

In the next lemmas we will use the following notation: We denote by $\mathcal{I}^{s,<<}_i$ the set of intervals $I_j$ such that $I_j\in \mathcal{I}^{s,<}_i$ and $I_j$ does not contain $x_i$.

\begin{lemma} \label{lem:corr-lemma}
If the algorithm sets $\mathrm{append}[i]=T$, then $M[i]\neq \emptyset$ and there exists a valid sequence of tabular subsolutions finishing with $M[i]$. Additionally, the indices of such a valid sequence can be obtained by reversing the sequence $i,p[i],p[p[i]]\ldots$ and possibly removing some of the indices. 
\end{lemma}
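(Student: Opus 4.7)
The plan is to proceed by strong induction on $i$. The base case occurs when the algorithm sets $\mathrm{append}[i]=T$ via the test on condition (C0), namely when $\mathcal{I}^{s,<}_i=\mathcal{I}^{o,<<}_i=\mathcal{I}^{o,<}_i=\emptyset$ (lines~\ref{line:leftmost-1}--\ref{line:leftmost-2}). In this case, $\{M[i]\}$ is trivially super valid, the chain $i,p[i],\ldots$ collapses to the single element $i$, and no removal is needed.

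For the inductive step, assume $\mathrm{append}[i]=T$ is set because the algorithm found some $k<i$ with $\mathrm{append}[k]=T$, $M[k]\neq \emptyset$, $M[i]\neq\emptyset$, and the extra conditions on $M[k]$ pass, so it sets $p[i]:=k$. By the induction hypothesis applied to $k$, there is a valid sequence ${\cal S}_k$ whose rightmost tabular subsolution is $M[k]$, and whose sequence of indices is obtained from $k,p[k],p[p[k]],\ldots$ by possibly removing some entries. I split into two cases according to the branch that accepted $k$.

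In the first case, $\mathcal{I}^{o,<}_i \neq \emptyset$ and the algorithm has verified that $M[k]$ contains a representative for $\mathcal{\overrightarrow{\mathcal{I}}}^{o,<}_i$ at a position $\leq x_i-q$. I would first argue that the longest compatible tabular subsolution ${\cal T}_k$ of $M[k]$ with respect to $M[i]$ is nonempty and still contains this witness, by noting that every representative of $M[i]$ lies at $x_i,x_i+q,x_i+2q,\ldots$, so a representative at position $\leq x_i-q$ is automatically at distance $\geq q$ from all of them (one only needs to check colors, which is immediate since the witness has color $\bar{c_i}$). Lemma~\ref{lem:C4-enough} applied with ${\cal S}:={\cal S}_k$ then yields that $({\cal S}_k\setminus M[k])\cup {\cal T}_k \cup M[i]$ is a valid sequence whose rightmost tabular subsolution is $M[i]$. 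Since ${\cal T}_k$ still has starting index $k$, the chain of indices of the new sequence is that of ${\cal S}_k$ followed by $i$, which matches $i,p[i],p[p[i]],\ldots$ with the removals inherited from the hypothesis.

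In the second case, $\mathcal{I}^{o,<}_i = \emptyset$ and the algorithm has verified that ${\cal S}_k$ contains representatives for $\mathcal{\overrightarrow{\mathcal{I}}}^{s,<}_i$ and $\mathcal{\overrightarrow{\mathcal{I}}}^{o,<<}_i$. Combined with Corollary~\ref{cor:seq-cons-2}, this gives representatives for every non-redundant interval in $\mathcal{I}^{s,<}_i$ (condition (C1)) and for every interval in $\mathcal{I}^{o,<<}_i$ (condition (C2)), while (C3) is vacuous. Compatibility of $M[k]$ with $M[i]$ follows from $\mathcal{I}^{o,<}_i=\mathcal{I}^{o,f}_i=\emptyset$: any representative of color $\bar{c_i}$ in $M[k]$ must lie at a position $\leq x_i-q$, while every representative of $M[i]$ is on or to the right of $x_i$, so the separation property holds across the join. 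Hence ${\cal S}_k \cup M[i]$ is valid and its index chain is that of ${\cal S}_k$ followed by $i$. The main obstacle is the truncation bookkeeping in the first case: ensuring that shortening $M[k]$ to ${\cal T}_k$ does not break the validity of the earlier part of ${\cal S}_k$ nor lose the witness for $\mathcal{\overrightarrow{\mathcal{I}}}^{o,<}_i$. The first issue is absorbed entirely by Lemma~\ref{lem:C4-enough}, so the proof can treat it as a black box; the second is handled by the compatibility computation sketched above.
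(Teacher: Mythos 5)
Your base case and your first inductive case (when $\mathcal{I}^{o,<}_i\neq\emptyset$) follow the paper's argument essentially verbatim: nonemptiness of the longest compatible tabular subsolution ${\cal T}_k$ is witnessed by the representative of $\mathcal{\overrightarrow{\mathcal{I}}}^{o,<}_i$, and Lemma~\ref{lem:C4-enough} does the rest. The problem is your second case. There you concatenate the \emph{entire} $M[k]$ (hence all of ${\cal S}_k$) with $M[i]$ and claim compatibility ``for free'' from $\mathcal{I}^{o,<}_i=\mathcal{I}^{o,f}_i=\emptyset$. That inference is not sound: these hypotheses only constrain the \emph{left endpoints} of the intervals of color $\bar{c_i}$, not the positions of the representatives already placed in $M[k]$. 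Two things can go wrong. First, $M[k]$ is maximal and may well extend to $x_i$ or beyond, so it can contain representatives of color $c_i$ at positions within distance $q$ of $M[i]$'s $\bar{c_i}$-representatives at $x_i+q$, and its rightmost representative need not lie to the left of $x_i$ as the definition of compatibility requires. Second, even restricted to positions left of $x_i$, an interval of $\mathcal{I}^{o,>}_i$ (which is \emph{not} required to be empty) has color $\bar{c_i}$ and admits representatives in $(x_i-q,x_i)$; such a position is not forbidden for $\bar{c_i}$ (forbidden positions induced by $I_i$ start only at $x_i+1-q>x_i$), so $M[k]$ may legitimately contain one, yet it clashes with $r_i=x_i$.

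This is exactly the point the paper's proof spends most of its effort on: in this case it builds a truncated sequence $\tilde{\cal S}$ by replacing $M[k]$ with its longest compatible tabular subsolution with respect to $M[i]$, and, when that is empty, \emph{deleting} $M[k]$ from the sequence and iterating on earlier subsolutions (plus a separate argument for the degenerate case $\tilde{\cal S}=\emptyset$). This deletion step is the reason the lemma statement says the index sequence is obtained from $i,p[i],p[p[i]],\ldots$ by ``possibly removing some of the indices''; your construction never removes anything, so it cannot produce the claimed sequence in general. After truncation one must also re-verify that the witnesses for (C1) and (C2) survive (the paper argues their positions are compatible with $M[i]$, hence retained) and deal with the possibility that $\mathcal{\overrightarrow{\mathcal{I}}}^{s,<<}_i$ or $\mathcal{\overrightarrow{\mathcal{I}}}^{o,<<}_i$ is redundant for ${\cal S}_k$ rather than actually represented — a subtlety your appeal to Corollary~\ref{cor:seq-cons-2} glosses over. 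Your case~2 therefore has a genuine gap and needs to be reworked along these lines.
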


\begin{proof} 
We prove the claim by induction on $i$. We notice that the algorithm only sets $\mathrm{append}[i]=T$ when $M[i]\neq \emptyset$, so we focus on the rest of the claim.

If $i=1$, we have that $\mathcal{I}^{s,<}_i=\mathcal{I}^{o,<<}_i=\mathcal{I}^{o,<}_i=\emptyset$. In this case, the algorithm sets $\mathrm{append}[i]=T$. By definition, the sequence formed only by $M[1]$ forms a valid sequence of tabular subsolutions.
If $i>1$ and $\mathcal{I}^{s,<}_i=\mathcal{I}^{o,<<}_i=\mathcal{I}^{o,<}_i=\emptyset$, the same argument applies.

If $i>1$ and $\mathcal{I}^{s,<}_i=\mathcal{I}^{o,<<}_i=\mathcal{I}^{o,<}_i=\emptyset$ does not hold, there are two main cases. Suppose first that $\mathcal{I}^{o,<}_i\neq \emptyset$. The algorithm only sets $\mathrm{append}[i]=T$ if it finds some $k\in\{1,2,\ldots,i-1\}$ such that $\mathrm{append}[k]=T$ and $M[k]$ contains a representative for $\mathcal{\overrightarrow{\mathcal{I}}}^{o,<}_i$ at a position $\leq x_i-q$. 
By the induction hypothesis, there exists a valid sequence of tabular subsolutions $\mathcal{S}$ finishing with $M[k]$. Let ${\cal{T}}_k$ be the longest compatible tabular subsolution of $M[k]$ with respect to $M[i]$. Notice that ${\cal{T}}_k\neq \emptyset$ because it contains, at least, the representative for $\mathcal{\overrightarrow{\mathcal{I}}}^{o,<}_i$ (since the position of this representative is to the left of $x_i$ and compatible with $M[i]$). Let us define  ${\cal{S'}}= ({\cal{S}} \setminus M[k]) \cup {\cal{T}}_k \cup M[i]$. Since, in ${\cal{S'}}$, $\mathcal{\overrightarrow{\mathcal{I}}}^{o,<}_i$ contains a representative at a position $\leq x_i-q$, by Lemma~\ref{lem:C4-enough} we obtain that ${\cal{S'}}$ forms a valid sequence of tabular subsolutions. The indices of this valid sequence can be obtained by taking the sequence of indices for $\mathcal{S}$ and adding $i$ at the end.

Finally, suppose that $i>1$, $\mathcal{I}^{s,<}_i=\mathcal{I}^{o,<<}_i=\mathcal{I}^{o,<}_i=\emptyset$ does not hold, and $\mathcal{I}^{o,<}_i= \emptyset$. Without loss of generality, let us suppose that $I_i$ is red. In this case, the algorithm only sets $\mathrm{append}[i]=T$ if it finds some $k\in\{1,2,\ldots,i-1\}$ such that $\mathrm{append}[k]=T$ and the valid sequence of tabular subsolutions $\mathcal{S}$ finishing with $M[k]$ (which exists by induction hypothesis) satisfies the following: the index of the rightmost red and blue interval with representative in $\mathcal{S}$ is greater than or equal to $\mathrm{ind}(\mathcal{\overrightarrow{\mathcal{I}}}^{s,<}_i)$ and $\mathrm{ind}(\mathcal{\overrightarrow{\mathcal{I}}}^{o,<<}_i)$, respectively. The difference with the previous case is that now we do not know whether the relevant representatives are in $M[k]$ or in previous tabular subsolutions of $\mathcal{S}$, so we produce a new sequence $\mathcal{\tilde{S}}$ containing only the relevant portion of $\mathcal{S}$ to be concatenated with $M[i]$. Initially, $\mathcal{\tilde{S}}=\mathcal{S}$ and then we possibly reduce $\mathcal{\tilde{S}}$ as follows:
Let ${\cal{T}}_k$ be the longest compatible tabular subsolution of $M[k]$ with respect to $M[i]$. If ${\cal{T}}_k\neq \emptyset$, replace $M[k]$ by ${\cal{T}}_k$ in $\mathcal{\tilde{S}}$, and stop. Otherwise, remove $M[k]$ from $\mathcal{\tilde{S}}$, and repeat with the new rightmost tabular subsolution of the obtained sequence. 
At the end of the proof, we will discuss the case when, at the end of this procedure, $\mathcal{\tilde{S}}$ is empty. For now, let us assume it is not, let $k'$ be the index of  the rightmost tabular subsolution of $\mathcal{\tilde{S}}$ at the end of this procedure, and let ${\cal{T}}_{k'}$ be the longest compatible tabular subsolution of $M[k']$ with respect to $M[i]$. Notice that $\mathcal{\tilde{S}}$ continues to be valid. Additionally, ${\cal{T}}_{k'}$ and $M[i]$ are compatible. Thus, it only remains to show that, in $\mathcal{\tilde{S}}\cup M[i]$, $M[i]$ satisfies conditions (C1), (C2) and (C3) except for intervals that are redundant for $\mathcal{\tilde{S}}\cup M[i]$. Since $\mathcal{I}^{o,<}_i= \emptyset$, we do not need to worry about (C3).

Let us start with (C1). Since all intervals $I_j\in \mathcal{I}^{s,<}_i$ such that $I_j$ contains $x_i$ are redundant for $\mathcal{\tilde{S}}\cup M[i]$, it is enough to check (C1) for intervals in $\mathcal{I}^{s,<<}_i$. If this set is empty, we are done. Otherwise, since the index of the rightmost red interval with representative in $\mathcal{S}$ is greater than or equal to $\mathrm{ind}(\mathcal{\overrightarrow{\mathcal{I}}}^{s,<}_i)$, by Corollary~\ref{cor:seq-cons-2}, $\mathcal{S}$ contains a representative for $\mathcal{\overrightarrow{\mathcal{I}}}^{s,<<}_i$ unless this interval is redundant for $\mathcal{S}$. Suppose first that it is not. Then $\mathcal{S}$ contains a representative for it. Notice that so does $\mathcal{\tilde{S}}\cup M[i]$ because the position of the representative is compatible with $M[i]$. Thus, by Corollary~\ref{cor:seq-cons-2}, in $\mathcal{\tilde{S}}\cup M[i]$, $M[i]$ satisfies (C1) except for intervals that are redundant for the sequence. Suppose next that $\mathcal{\overrightarrow{\mathcal{I}}}^{s,<<}_i$ is redundant for $\mathcal{S}$. Then there exists some $k'$ such that $\mathcal{\overrightarrow{\mathcal{I}}}^{s,<<}_i\in \mathcal{I}^{s,<}_{k'}$, $\mathcal{\overrightarrow{\mathcal{I}}}^{s,<<}_i$ contains $x_{k'}$ and $\mathcal{S}$ contains a tabular subsolution with index $k'$. Since $k<i$, we also have $k'\leq k <i$. Thus, $\mathcal{S}$ contains a representative for $I_{k'}$ at $x_{k'}<x_i$. Again, so does $\mathcal{\tilde{S}}\cup M[i]$ because the position of the representative is compatible with $M[i]$. Since intervals in $\mathcal{I}^{s,<<}_i$ have index smaller than $k'$, by Corollary~\ref{cor:seq-cons-2}, in $\mathcal{\tilde{S}}\cup M[i]$, $M[i]$ satisfies (C1), except for intervals that are redundant for the sequence.

We next argue about (C2). Since the index of the rightmost blue interval with representative in $\mathcal{S}$ is greater than or equal to $\mathrm{ind}(\mathcal{\overrightarrow{\mathcal{I}}}^{o,<<}_i)$, $\mathcal{S}$ contains a representative for $\mathcal{\overrightarrow{\mathcal{I}}}^{o,<<}_i$ unless this interval is redundant for $\mathcal{S}$.
Let us show that it is not: If $\mathcal{\overrightarrow{\mathcal{I}}}^{o,<<}_i$ was redundant for $\mathcal{S}$, there would exist some $k'$ such that $\mathcal{\overrightarrow{\mathcal{I}}}^{o,<<}_i\in \mathcal{I}^{s,<}_{k'}$, $\mathcal{\overrightarrow{\mathcal{I}}}^{o,<<}_i$ contains $x_{k'}$ and $\mathcal{S}$ contains a tabular subsolution with index $k'$. Let $j=\mathrm{ind}(\mathcal{\overrightarrow{\mathcal{I}}}^{o,<<}_i)$ (i.e.,  $I_j=\mathcal{\overrightarrow{\mathcal{I}}}^{o,<<}_i$). We know that $x_j\leq x_i-(1+q)$ and, since $I_j$ contains $x_{k'}$, $x_{k'}\leq x_j+1\leq x_i-q<x_i-(1-q)$. Since $\mathcal{I}^{o,<}_{i}\cup \mathcal{I}^{o,f}_{i}=\emptyset$, we obtain  $x_{k'}\leq x_i-(1+q)$. Thus, there would be an interval ($I_{k'}$) of opposite color from $I_i$ with $x_{k'}\leq x_i-(1+q)$ but to the right of $\mathcal{\overrightarrow{\mathcal{I}}}^{o,<<}_i$, a contradiction. In consequence, $\mathcal{\overrightarrow{\mathcal{I}}}^{o,<<}_i$ is not redundant for $\mathcal{S}$ and $\mathcal{S}$ contains its representative. Additionally, $\mathcal{\tilde{S}}\cup M[i]$ also does because the position of the representative is compatible with $M[i]$. By Corollary~\ref{cor:seq-cons-2}, in $\mathcal{\tilde{S}}\cup M[i]$, $M[i]$ satisfies (C2) except for intervals that are redundant for the sequence.

We conclude that $\mathcal{\tilde{S}}\cup M[i]$ forms a valid sequence of tabular subsolutions. Its indices can be obtained by taking the sequence of indices for $\mathcal{\tilde{S}}$ and adding $i$ at the end.

Finally, let us discuss the case when $\mathcal{\tilde{S}}=\emptyset$. From the previous paragraphs it is clear that this can only happen when $\mathcal{I}^{o,<<}_{i}=\emptyset$ and $\mathcal{I}^{s,<<}_i=\emptyset$. In this case of the proof, this necessarily implies that $\mathcal{I}^{s,<}_i\neq \emptyset$ (and recall that by assumption $\mathcal{I}^{o,<}_i=\emptyset$). All of this implies that the only intervals with index smaller than $i$ are intervals of red color containing $x_i$. Thus, $M[i]$ forms a valid sequence of tabular subsolutions because, even though those intervals do not have a representative in $M[i]$, they are redundant for $M[i]$.
\end{proof}

\setlength{\textfloatsep}{0pt}
\begin{algorithm}
\begin{algorithmic}[1]
\vspace{0.25cm}
\Require  $\cal I$ 
\Ensure  a solution, if it exists; otherwise, ``there is no solution"
\For{$i=1,2,\ldots,n$}
 \State compute $\mathcal{\overrightarrow{\mathcal{I}}}^{o,<}_i$, $\mathcal{\overrightarrow{\mathcal{I}}}^{s,<}_i$, $\mathcal{\overrightarrow{\mathcal{I}}}^{o,<<}_i$ \label{line:ini1}
\EndFor
\For{$i=1,2,\ldots,n$} \label{line:secondloop-1}
\State compute $M[i]$, $\max^r[i]$, $\max^b[i]$
\State $\mathrm{append}[i]:=F$ 
\State $p[i]:=\emptyset$ \Comment{$p[i]$ stores an index $k$ such that the predecessor of $M[i]$ in a valid sequence of tabular subsolutions is a tabular subsolution starting at $I_k$}
\State $last^r[i]:=-1$, $last^b[i]:=-1$ \Comment{$last^r[i]$ and $last^b[i]$ store the index of the rightmost red and blue interval with representative in a valid sequence of tabular subsolutions finishing with $M[i]$} \label{line:secondloop-2}
\EndFor
\For{$i=1,2,\ldots,n$ such that $M[i]\neq \emptyset$}
 \If {$\mathcal{I}^{s,<}_i=\mathcal{I}^{o,<<}_i=\mathcal{I}^{o,<}_i=\emptyset$} \label{line:leftmost-1}
 \State $\mathrm{append}[i]:=T$ \label{line:firstT}
 \State $last^r[i]:=\max^r[i]$, $last^b[i]:=\max^b[i]$ \label{line:leftmost-2}
 \ElsIf{$\mathcal{I}^{o,<}_i\neq \emptyset$} \label{line:notempty-1}
        \If{there exists $k\in\{1,2,\ldots,i-1\}$ such that $\mathrm{append}[k]=T$ and $M[k]$ contains a representative for $\mathcal{\overrightarrow{\mathcal{I}}}^{o,<}_i$ at a position $\leq x_i-q$} \label{line:secondT}
        \State $\mathrm{append}[i]:=T$ \label{line:ext-1}
        \State $p[i]:=k$
        \State compute $last^r[i]$, $last^b[i]$ \label{line:last-1}
        \EndIf
   \Else \label{line:last-case}
    \State \Comment{Store in $req^r[i]$ and $req^b[i]$ the highest index of the intervals of red and blue color whose representatives must be contained in a predecessor of $M[i]$}
        \If{$I_i$ is red}
            \State $req^r[i]=\mathrm{ind}(\mathcal{\overrightarrow{\mathcal{I}}}^{s,<}_i)$, $req^b[i]=\mathrm{ind}(\mathcal{\overrightarrow{\mathcal{I}}}^{o,<<}_i)$ 
        \Else
            \State $req^r[i]=\mathrm{ind}(\mathcal{\overrightarrow{\mathcal{I}}}^{o,<<}_i)$, $req^b[i]=\mathrm{ind}(\mathcal{\overrightarrow{\mathcal{I}}}^{s,<}_i)$
        \EndIf
      \If{there exists $k\in\{1,2,\ldots,i-1\}$ such that $\mathrm{append}[k]=T$, $last^r[k]\geq req^r[i]$ and $last^b[k]\geq req^b[i]$} \label{line:thirdT}
        \State $\mathrm{append}[i]:=T$ \label{line:ext-2}
        \State $p[i]:=k$
        \State compute $last^r[i]$, $last^b[i]$ \label{line:last-2}
          \EndIf
  \EndIf
\If{$\mathrm{append}[i]=T$ and $M[i]$ contains a representative for $I_n$} \label{line:imp}
  \State reconstruct and return a solution based on the reverse of the sequence $i,p[i],p[p[i]],\ldots$ \label{line:recons}
  \State finish 
\EndIf
\EndFor
\State return ``there is no solution" \label{line:final}
\end{algorithmic}
\caption{Decision problem for $q\in(1/2,3/4]$}
\label{alg:alg-main}
\end{algorithm}

The previous lemma helps us prove that the algorithm is correct:

\begin{lemma} \label{lem:corr-alg}
Algorithm~\ref{alg:alg-main} is correct.
\end{lemma}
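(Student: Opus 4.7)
The plan is to establish correctness in two directions: soundness (whatever the algorithm outputs is indeed a solution) and completeness (whenever a solution exists, the algorithm finds one).

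For soundness, I would simply chain Lemma~\ref{lem:corr-lemma} with Lemma~\ref{lem:alm-sol}. The algorithm only reaches line~\ref{line:recons} at an index $i$ with $\mathrm{append}[i]=T$ and $M[i]$ containing a representative for $I_n$. Lemma~\ref{lem:corr-lemma} then furnishes a valid sequence of tabular subsolutions finishing with $M[i]$, whose indices can be extracted by reversing $i,p[i],p[p[i]],\ldots$, and Lemma~\ref{lem:alm-sol} guarantees that this sequence is (or, after filling in representatives for redundant intervals that may be missing in an almost valid sequence, becomes) a solution to the decision problem.

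For completeness I would begin by invoking Observation~\ref{obs:left-sol-enough} and Lemma~\ref{lem:prun-tab} to obtain a leftmost solution $LS$ together with its canonical decomposition into non-overlapping tabular subsolutions ${\cal{T}}_{i_1},\ldots,{\cal{T}}_{i_m}$. The core of the argument is an induction on $j$ showing that the algorithm sets $\mathrm{append}[i_j]=T$, and moreover that the bookkeeping values $last^r[i_j]$ and $last^b[i_j]$ dominate the indices of the rightmost red and blue intervals represented in ${\cal{T}}_{i_1}\cup\cdots\cup{\cal{T}}_{i_j}$. The base case $j=1$ is immediate from condition (C0) of Lemma~\ref{lem:conditions}. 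For the inductive step I would split along the two algorithmic branches: when $\mathcal{I}^{o,<}_{i_j}\neq\emptyset$, condition (C3) locates a representative of $\mathcal{\overrightarrow{\mathcal{I}}}^{o,<}_{i_j}$ at position $\leq x_{i_j}-q$, and I would argue that this representative must already appear in some $M[k]$ the algorithm has previously marked appendable, so the test on line~\ref{line:secondT} fires; when $\mathcal{I}^{o,<}_{i_j}=\emptyset$, conditions (C1) and (C2) combined with Corollary~\ref{cor:seq-cons-2} and the inductive hypothesis yield $last^r[k]\geq req^r[i_j]$ and $last^b[k]\geq req^b[i_j]$, so the test on line~\ref{line:thirdT} succeeds. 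Finally, ${\cal{T}}_{i_m}$ represents $I_n$, and by maximality so does $M[i_m]$; hence the check at line~\ref{line:imp} is triggered at iteration $i_m$ and the algorithm returns a solution.

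The step I expect to be the main obstacle is the inductive one. The algorithm's tests refer to the maximal subsolution $M[i_{j-1}]$, whereas the decomposition of $LS$ provides only the possibly shorter ${\cal{T}}_{i_{j-1}}$; one has to verify that the extra representatives of $M[i_{j-1}]$ neither destroy compatibility with $M[i_j]$ nor cause the representatives referenced in (C1)--(C3) to fall outside the part of the sequence actually used by the algorithm. This is precisely where the longest compatible tabular subsolution device and Lemma~\ref{lem:C4-enough} do the heavy lifting, by certifying that a single check of (C3) at $\mathcal{\overrightarrow{\mathcal{I}}}^{o,<}_{i_j}$ is enough to recover (C1) and (C2) in the first branch. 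Once this alignment between $LS$ and the maximal objects manipulated by the algorithm is carefully done, the rest is straightforward bookkeeping.
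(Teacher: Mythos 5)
Your proposal is correct and follows essentially the same route as the paper: soundness by chaining Lemma~\ref{lem:corr-lemma} with Lemma~\ref{lem:alm-sol} (after completing redundant intervals), and completeness by induction over the decomposition of a leftmost solution guaranteed by Observation~\ref{obs:left-sol-enough}, Lemma~\ref{lem:prun-tab} and Lemma~\ref{lem:conditions}, with the two branches handled via Lemma~\ref{lem:C4-enough} and the $last^r/last^b$ bookkeeping. The only cosmetic difference is that you fold the domination of $last^r[i_j]$ and $last^b[i_j]$ into the induction hypothesis, whereas the paper derives it at the point of use from Lemma~\ref{lem:seq-cons-2} by comparing the algorithm's sequence with a truncated copy of the leftmost solution; both resolve the alignment issue you correctly identify as the crux.
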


\begin{proof}
If the algorithm answers ``yes", it is because, for some $i$, it has set $\mathrm{append}[i]=T$ and $M[i]$ contains a representative for $I_n$. By Lemma~\ref{lem:corr-lemma}, in this case there exists a valid sequence $\mathcal{S}$ of tabular subsolutions finishing with $M[i]$, and $M[i]$ is a maximal tabular subsolution containing a representative for $I_n$. If $\mathcal{S}$ is almost valid, as argued before it can easily be augmented to a super valid sequence. Lemma~\ref{lem:alm-sol} then implies that this super valid sequence of tabular subsolutions is a solution.

Next, we argue that it is not possible that there exists a solution and the algorithm answers ``no". If there exists a solution, by Observation~\ref{obs:left-sol-enough},  Lemma~\ref{lem:prun-tab} and Lemma~\ref{lem:conditions}, there exists a solution $\mathcal{S}$  with the shape of a sequence of non-overlaping tabular subsolutions satisfying conditions (C0)-(C3). Let $i_1,i_2,\ldots,i_m$ be the sequence of indices of the tabular subsolutions of $\mathcal{S}$, again with the convention that a new tabular subsolution begins at every interval $I_i$ such that, in $\mathcal{S}$, $r_i=x_i$ and $r_i$ is not
at a distance $q$ to the right of a representative of opposite color. In the next paragraphs, we prove by induction that, for each of these indices, the algorithm sets $\mathrm{append}[\cdot]:=T$. Since $\mathcal{S}$  contains a representative for $I_n$, one of these indices will then satisfy both conditions of line~\ref{line:imp} and thus the algorithm will answer ``yes".

Lemma~\ref{lem:conditions} guarantees that $\mathcal{I}^{s,<}_{i_1}=\mathcal{I}^{o,<<}_{i_1}=\mathcal{I}^{o,<}_{i_1}=\emptyset$. Clearly, $M[i_1]\neq \emptyset$. Thus, in line~\ref{line:firstT} the algorithm sets $\mathrm{append}[i_1]:=T$. Let us next assume that the algorithm sets $\mathrm{append}[\cdot]:=T$ for $i_1,i_2,\ldots,i_j$. Since $i_{j+1}$ is greater than all of these indices, $\mathrm{append}[\cdot]=T$ already (for $i_1,i_2,\ldots,i_j$) when $i_{j+1}$ is processed. However, we notice that the sequence of indices for $i_j$ (and the other indices) found by the algorithm (i.e., the reverse of $p[i_j],p[p[i_j]]\ldots$) might differ from the sequence $i_1,i_2,\ldots,i_j$. If $\mathcal{I}^{s,<}_{i_{j+1}}=\mathcal{I}^{o,<<}_{i_{j+1}}=\mathcal{I}^{o,<}_{i_{j+1}}=\emptyset$, the algorithm sets $\mathrm{append}[i_{j+1}]:=T$.
It remains to consider two more cases.

Suppose first that $\mathcal{I}^{o,<}_{i_{j+1}}\neq \emptyset$. By Lemma~\ref{lem:conditions}, in $\mathcal{S}$ all intervals in $\mathcal{I}^{o,<}_{i_{j+1}}$ have representative in $L[i_{j+1}]$ at a position $\leq x_{i_{j+1}}-q$. In particular, this is the case for $\mathcal{\overrightarrow{\mathcal{I}}}^{o,<}_{i_{j+1}}$. Let $i_{j'}\in \{i_1,i_2,\ldots,i_j\}$ be the index of the tabular subsolution of $\mathcal{S}$ containing the representative of $\mathcal{\overrightarrow{\mathcal{I}}}^{o,<}_{i_{j+1}}$. By induction hypothesis, $\mathrm{append}[i_{j'}]=T$. Additionally, $M[i_{j'}]$ contains a representative for $\mathcal{\overrightarrow{\mathcal{I}}}^{o,<}_{i_{j+1}}$ at a position $\leq x_{i_{j+1}}-q$. Thus, $i_{j'}$ satisfies all conditions in line~\ref{line:secondT} and hence the algorithm sets $\mathrm{append}[i_{j+1}]:=T$.

Finally, let us suppose that $\mathcal{I}^{o,<}_{i_{j+1}}= \emptyset$ and at least one of $\mathcal{I}^{s,<}_{i_{j+1}}$ or $\mathcal{I}^{o,<<}_{i_{j+1}}$ is non-empty. Without loss of generality, assume that $I_{i_{j+1}}$ is red. By Lemma~\ref{lem:conditions}, in $\mathcal{S}$ all intervals in $\mathcal{I}^{s,<}_{i_{j+1}}\cup \mathcal{I}^{o,<<}_{i_{j+1}}$ have representative in $L[i_{j+1}]$. In particular, this is the case for $\mathcal{\overrightarrow{\mathcal{I}}}^{s,<}_{i_{j+1}}$ and $\mathcal{\overrightarrow{\mathcal{I}}}^{o,<<}_{i_{j+1}}$. Let $i_{j'}\in \{i_1,i_2,\ldots,i_j\}$ be the smallest index such that the tabular subsolutions of $\mathcal{S}$ with index at most $i_{j'}$ contain representatives for $\mathcal{\overrightarrow{\mathcal{I}}}^{s,<}_{i_{j+1}}$ and $\mathcal{\overrightarrow{\mathcal{I}}}^{o,<<}_{i_{j+1}}$. Let $\mathcal{S'}$ be the sequence of tabular subsolutions obtained from $\mathcal{S}$ by removing the tabular subsolutions with index greater than $i_{j'}$, and replacing the tabular subsolution of index $i_{j'}$ by $M[i_{j'}]$. Clearly, $\mathcal{S'}$ contains representatives for $\mathcal{\overrightarrow{\mathcal{I}}}^{s,<}_{i_{j+1}}$ and $\mathcal{\overrightarrow{\mathcal{I}}}^{o,<<}_{i_{j+1}}$. Additionally, by Lemma~\ref{lem:conditions}, $\mathcal{S'}$ is a super valid sequence of tabular subsolutions. On the other hand, by induction hypothesis, $\mathrm{append}[i_{j'}]=T$. Thus, by Lemma~\ref{lem:corr-lemma}, the algorithm has found a valid sequence $\mathcal{S''}$ of tabular subsolutions finishing with $M[i_{j'}]$. 
Lemma~\ref{lem:seq-cons-2} determines the rightmost interval of each color represented in a valid sequence of tabular subsolutions finishing with $M[i_{j'}]$, unless this interval is redundant for the corresponding sequence. If it were, there would be an interval of the same color with higher index that has a representative in the sequence, contradicting Lemma~\ref{lem:seq-cons-2}. 
Thus, the rightmost intervals of each color represented in $\mathcal{S'}$ and $\mathcal{S''}$ are the same.
This implies that $last^r[i_{j'}]\geq \mathrm{ind}(\mathcal{\overrightarrow{\mathcal{I}}}^{s,<}_{i_{j+1}})$ and $last^b[i_{j'}]\geq \mathrm{ind}(\mathcal{\overrightarrow{\mathcal{I}}}^{o,<<}_{i_{j+1}})$.
Therefore, the algorithm finds a suitable $k$ in line~\ref{line:thirdT} and sets $\mathrm{append}[i_{j+1}]:=T$.
\end{proof}

\subsection{Efficient implementation of Algorithm~\ref{alg:alg-main}} 

The goal of this subsection is to describe an implementation of the algorithm running in $O(n\log n)$ time. Such an implementation follows the basic scheme of the pseudocode given in Algorithm~\ref{alg:alg-main}, but includes additional subroutines and variables that do not appear in the pseudocode.


The first two lines of the algorithm take in total $O(n \log n)$ time:
For each color, we create an array containing the intervals of that color sorted from left to right. Using binary search in this array, for a fixed $i$, each of the intervals in line~\ref{line:ini1} can be found in $O(\log n)$ time. 

The second loop of the algorithm (lines~\ref{line:secondloop-1}-\ref{line:secondloop-2}) contains some initialization  plus the computation of $M[i]$, $\max^r[i]$ and $\max^b[i]$.

\subsubsection{Computing $M[i]$, $\max^r[i]$ and $\max^b[i]$} In the remainder of the paper, we use $\bar{I}$ to denote the minimum segment containing all the intervals in $\cal{I}$.

As argued before, if $\mathcal{I}^{o,f}_i\neq \emptyset$, then $M[i]=\emptyset$. In $O(\log n)$ time we can determine whether $\mathcal{I}^{o,f}_i\neq \emptyset$ and, in this case, already return $M[i]=\emptyset$.

For the remaining of this part, let us assume that $\mathcal{I}^{o,f}_i= \emptyset$.
Strictly speaking, it is enough to be able to compute the position of the rightmost representatives of $M[i]$. Suppose that they have color $c$ and they are at $x_i+kq$. By Observation~\ref{obs:end-Mi}, either: (i) there is no interval of color $\bar{c}$ that contains the point $x_i+(k+1)q$; or
(ii) there is an interval of color $\bar{c}$ that contains the point $x_i+(k+1)q$, but the position $x_i+(k+1)q$ is forbidden for color $\bar{c}$.

To take care of (i), we describe a preprocessing of the intervals that allows to detect in $O(\log n)$ time the first time that there are no intervals of the appropriate color containing some point of the form $x_i+(k+1)q$. For each of the two colors, we preprocess the intervals of that color as follows.


Suppose that we are dealing with color red.
We divide $\bar{I}$ into the \emph{red} region, containing all points of $\bar{I}$ covered by at least one of the red intervals, and the \emph{white} region, containing the rest. Afterwards, we ``cut" $\bar{I}$ into portions of length $2q$ that are closed at the left endpoint and open at the right endpoint. The last portion is closed also at the right endpoint and, if its length is smaller than $2q$, we extend it until it has length $2q$ and we color the added portion in white. Notice that the number of such portions is at most $\frac{3n}{2}$.\footnote{The assumptions that $x_1=0$ and that no two consecutive intervals leave a gap of length $q$ or greater imply that the rightmost interval ends at most at $n+(n-1)q$.
 In combination with $q>\frac12$, this implies the bound of $\frac32 n$ portions.
} We then translate these portions (or segments) of length $2q$ in the plane as follows: All portions have their left endpoint on the line $x=0$. Additionally, the portion $[0,2q)$ is placed at $y=0$, the portion $[2q,4q)$ is placed at $y=2$, the portion $[4q,6q)$ is placed at $y=4$, and so on (see Fig.~\ref{fig:algorithm} for an example). 
Finally, we ignore the red regions and focus on the white regions, which become a set of (open) segments in the plane, called $T^r$.
We then preprocess $T^r$ in $O(n\log n)$ time and $O(n)$ space to answer ray-shooting queries into upwards direction in $O(\log n)$ time per query~\cite{Berg-ray-shoot}. We observe that the standard data structure to answer ray-shooting queries into upwards direction takes closed (rather than open) segments as input, so one needs to perform minor adjustments to it.

\begin{figure}[tb]
    \centering
    \includegraphics[scale=0.8]{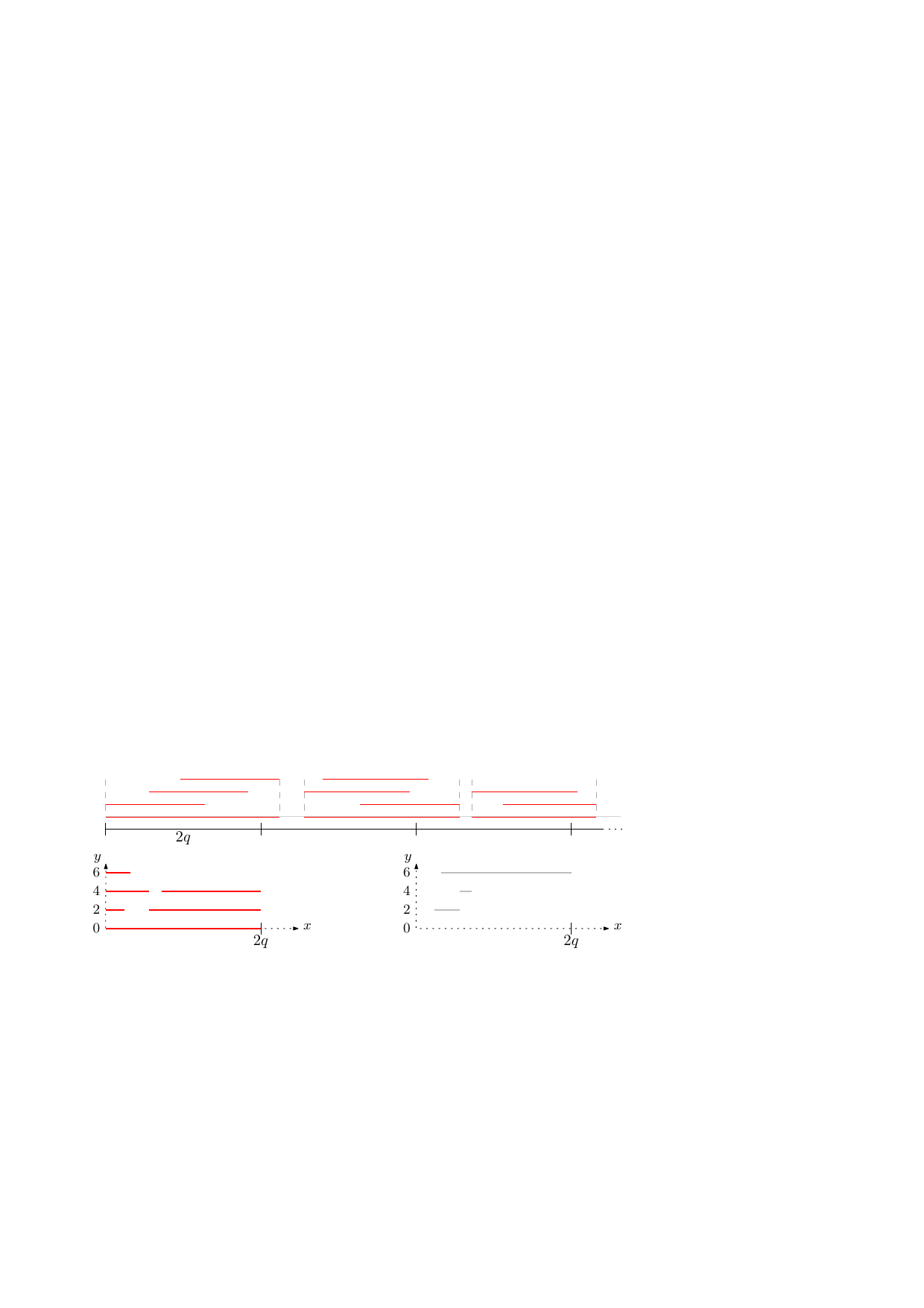}
    \caption{Construction of $T^r$: on top, the red intervals of the input and the red region; on the bottom left, the portions of length $2q$ have been cut and translated; on the bottom right, the final set of segments $T^r$.}
    \label{fig:algorithm}
\end{figure}

Let $2jq+\delta$, where $j$ is a non-negative integer and $\delta\in[0,2q)$, be the position of the leftmost red representative of $M[i]$ (if any). Then the red representatives of $M[i]$ are at $2jq+\delta$, $2(j+1)q+\delta$, $2(j+2)q+\delta\ldots$ Consider the following query: If we shoot a vertical ray with origin at $(\delta,2j+1)$ in upwards direction, when is the first time that it intersects a segment of $T^r$? 
If the answer to this query is the point $(\delta,2k)$ ($k>j$), then the first time that we cannot place the next red representative---as there is no red interval covering that position---happens at $2kq+\delta$. 

By constructing an analogous data structure for the blue intervals and performing the appropriate query, we obtain the first time (if any) that a blue representative cannot be placed because there is no blue interval intersecting that position. 

We next take care of (ii), that is, we show how to detect the first time that a red or blue representative cannot be placed because that position is forbidden. Again, we will preprocess the intervals so we can detect it in $O(\log n)$ time.

The strategy is the same as in the previous case: In the preprocessing phase, we mark all portions of $\bar{I}$ corresponding to the forbidden positions for one of the colors. In detail, given a red interval $I_i$, in a feasible representation there cannot be any blue representative in the interval $(x_i+(1-q),x_i+q)$. Thus, in $\bar{I}$, we mark $(x_i+(1-q),x_i+q)$ in color dark blue. Repeating this for all red intervals, we obtain the \emph{dark blue} region of $\bar{I}$; this clearly corresponds to the portions where blue representatives cannot be placed.

Analogously, we compute the \emph{dark red} region of $\bar{I}$. As in the previous case, for each color we cut $\bar{I}$ into portions of length $2q$, and we distribute the portions into a vertical column of segments of length $2q$. The starting point of a maximal tabular solution $M[i]$ tells us from which position we shoot a vertical ray upwards and compute the first intersection with a dark red (or dark blue) region. For each color, we obtain an upper bound for the rightmost point of $M[i]$. 

The minimum of the four obtained upper bounds gives us the rightmost point of $M[i]$. 
Let us explain how to compute $\max^r[i]$ and $\max^b[i]$. Suppose, e.g., that the rightmost representatives of $M[i]$ are red and are at position $x$. Then, in $O(\log n)$ time  we can find the rightmost red interval containing $x$. The index of this interval is precisely $\max^r[i]$. Analogously, $\max^b[i]$ corresponds to the index of the rightmost blue interval containing $x-q$, provided that $x-q\geq x_i$. If $x-q< x_i$, $M[i]$ does not contain any blue representative and we set $\max^b[i]=-1$.

\subsubsection{Case $\mathcal{I}^{o,<}_i\neq \emptyset$}

The next challenging part occurs in line~\ref{line:secondT} of the algorithm.
We need to figure out whether there exists  a $k\in\{1,2,\ldots,i-1\}$ such that $\mathrm{append}[k]=T$ and $M[k]$ contains a representative for $\mathcal{\overrightarrow{\mathcal{I}}}^{o,<}_i$ at a position $\leq x_i-q$.
To this end, for each color, we use a preprocessing similar to the one we have just described.

Let us consider a red interval $I_i$ such that $M[i]\neq \emptyset$ and $\mathcal{I}^{o,<}_i\neq \emptyset$. 
Then $\mathcal{\overrightarrow{\mathcal{I}}}^{o,<}_i$ is a blue interval, and the positions for its representative that allow for a valid sequence of tabular subsolutions to be combined with $M[i]$ are those in the segment $\mathcal{\overrightarrow{\mathcal{I}}}^{o,<}_i\cap (-\infty,x_i-q]$, which we call the \emph{good segment} of $\mathcal{\overrightarrow{\mathcal{I}}}^{o,<}_i$ and which, by definition of $\mathcal{I}^{o,<}_i$, is non-empty.   Each red interval with $M[i]\neq \emptyset$ and $\mathcal{I}^{o,<}_i\neq \emptyset$ gives us a good segment, and we take the collection $S_1$ of all such segments of positive length in $\bar{I}$.\footnote{It is not needed to add the good segment of $\mathcal{\overrightarrow{\mathcal{I}}}^{o,<}_i$ to $S_1$ if it consists of exactly one point $x$ (with $x=x_i-q$) for the following reason: According to the way in which we divide a leftmost solution into a sequence of non-overlaping tabular subsolutions, a leftmost solution with the representative of $\mathcal{\overrightarrow{\mathcal{I}}}^{o,<}_i$ at $x$ and $r_i=x_i$ does not contain a tabular subsolution starting at $I_i$ because $r_i$ is
at a distance $q$ to the right of $x$, which is a representative of opposite color. Hence, in this case it is not needed to determine if $M[i]$ is appendable and we can simply set $\mathrm{append}[i]=F$. In other words, concatenating $M[i]$ with an appropriate portion of some appropriate $M[k]$ gives the same as simply considering the entire $M[k]$.} 


As before, we cut $\bar{I}$ 
 into portions of size $2q$, and we place the portions in a vertical column of segments, following, in principle, the same  rules as before. 
 Again, the idea is to preprocess the segments in $S_1$ for efficient ray shooting queries.
 However, the segments in $S_1$ may overlap, which is an issue for standard ray-shooting data structures. To overcome this, we remove any overlap by displacing, by a positive amount smaller than $1$, the $y$-coordinate of all segments in each portion, so that no two segments have the same $y$-coordinate.
  The resulting collection of horizontal segments, all at different heights, is called $G^b$. 
We compute $G^r$ in an analogous way.  
Then, in $O(n\log n)$ time and $O(n)$ space, we preprocess each of the sets $G^b$ and $G^r$ to answer ray shooting queries into upwards direction in $O(\log n)$ time. This time we use a dynamic data structure that allows to insert and delete segments in $O(\log n)$ time~\cite{GiyoraK09}.

During the main loop of the algorithm, every time that $\mathrm{append}[k]$ is set to $T$ for some $k$, for each of the two colors we proceed as follows. Take, e.g., color red. During the computation of $M[k]$, we have obtained the position of its rightmost representatives. This information, together with the position and color of $x_k$, is enough to derive the positions of all red representatives of $M[k]$. These positions can be expressed as a list of the shape $2jq+\delta,2(j+1)q+\delta,\ldots,2j'q+\delta$, where $j,j'$ are non-negative integers, $j'\geq j$ and $\delta \in [0,2q)$ (or there is no red representative in $M[k]$, in which case there is nothing to do). We then go to $G^r$ and perform a ray shooting query with a vertical segment starting at $(\delta,2j)$ and going upwards. If the first intersection (if any) with a segment of $G^r$ occurs at a point $(\delta,y)$ with $y\geq 2j'+1$, we stop.

Otherwise, the first intersection happens at a point $x$ that belongs to some segment $s_1\in S_1$.
It corresponds to the good segment of some interval $\mathcal{\overrightarrow{\mathcal{I}}}^{o,<}_i$, where $I_i$ is blue and $i>k$.
The fact that the vertical segment with endpoints $(\delta,2j)$, $(\delta,2j'+1)$  intersects the good segment of $\mathcal{\overrightarrow{\mathcal{I}}}^{o,<}_i$ implies that $M[k]$ contains a representative for $\mathcal{\overrightarrow{\mathcal{I}}}^{o,<}_i$ at a position $\leq x_i-q$. We register this information in a separate list $p'$.
In particular, we store $p'[i]:=k$. 
Then, we remove $s_1$ from the ray shooting data structure $G^r$. 
%
Afterwards, we repeat the ray shooting query with a vertical segment starting at $(\delta,2j)$ and going upwards, in case there are still intersections with segments in $S_1$ above $x$ and below $(\delta,2j'+1)$.

In summary, in order to perform the test in line~\ref{line:secondT} efficiently, we proceed as follows: Every time that $\mathrm{append}[k]$ is set to $T$ for some $k$, we perform the described ray shooting queries. This might result in setting $p'[i]:=k$ for several $i>k$. The test in line~\ref{line:secondT} is then done by checking $p'[i]$: If $p'[i]=k$, this value satisfies all conditions in line~\ref{line:secondT}. On the other hand, if $p'[i]=\emptyset$, there does not exist any $k\in\{1,2,\ldots,i-1\}$ such that $\mathrm{append}[k]:=T$ and $M[k]$ contains a representative for $\mathcal{\overrightarrow{\mathcal{I}}}^{o,<}_i$ at a position $\leq x_i-q$.
We point out that, to keep the pseudocode in Algorithm~\ref{alg:alg-main} as simple and understandable as possible, these additional steps do not appear there.

Let us analyze the total running time. Initially, $G^b$ and $G^r$ are built and preprocessed in $O(n\log n)$ time. 
Then, every time that $\mathrm{append}[k]$ is set to $T$ for some $k$, for each of the two colors, the algorithm performs one or more ray shooting queries. In total, it performs exactly one ``unsuccessful" query (when there is no intersection with $G^r$ or the first intersection with a segment of $G^r$ occurs at a point $(\delta,y)$ with $y\geq 2j'+1$), which is charged to $k$. On the other hand, every ``successful" ray shooting query (intersecting a segment in $G^r$ ``within the good range") and every segment in $S_1$ that contains the intersected segment in $G^r$ cause a constant number of updates in $G^r$, each of which takes $O(\log n)$ time. Notice that the cost can be charged to the segment $s_1$ in $S_1$ (or alternatively, the interval $I_i$ such that $s_1$ is the good segment of $\mathcal{\overrightarrow{\mathcal{I}}}^{o,<}_i$). The reason is that afterwards $s_1$ is removed from $G^r$, so no more operations are charged to it in future steps of the algorithm. Consequently, the total cost of all the involved operations is $O(n\log n)$.

\subsubsection{Computing $last^r[i]$ and $last^b[i]$ in lines~\ref{line:last-1} and~\ref{line:last-2}}

For this step, it would be tempting to use $last^r[i]:=\max^r[i]$ and $last^b[i]:=\max^b[i]$, but this is not necessarily true if $M[i]$ contains representatives of only one color. A correct alternative option is to use Lemma~\ref{lem:seq-cons-2}: If the rightmost representative of $M[i]$ is at position $x$ and has color $c$, any valid sequence of tabular subsolutions finishing with $M[i]$ contains representatives for all intervals $I_j$ of color $c$ such that $x_j\leq x$ and for all intervals $I_b$ of color $\bar{c}$ such that $x_b< x+q-1$, except for those that are redundant for the sequence. For each color, we can find in $O(\log n)$ time the index of the rightmost interval satisfying the appropriate inequality. Notice that it follows from the definition of redundant interval that such rightmost intervals cannot be redundant. Thus, the two obtained indices correspond to $last^r[i]$ and $last^b[i]$. Consequently, the total cost of all the executions of lines~\ref{line:last-1} and~\ref{line:last-2} is $O(n\log n)$.

\subsubsection{Case $\mathcal{I}^{o,<}_i= \emptyset$, and at least one of $\mathcal{I}^{s,<}_{i_{j+1}}$ or $\mathcal{I}^{o,<<}_{i_{j+1}}$ is non-empty (line~\ref{line:last-case})}

For this case it is important to be able to perform the test in line~\ref{line:thirdT} efficiently;
that is, to decide if there is a previous subsolution with $last^r[k]\geq req^r[i]$ and $last^b[k]\geq req^b[i]$.
To this end, we use a data structure for dynamic orthogonal range searching in 2D. 

We start with an empty data structure $S$.
During the main loop of the algorithm, every time that $\mathrm{append}[k]$ is set to $T$ for some $k$, we add the point $(last^r[k],last^b[k])$ to $S$. The test in line~\ref{line:thirdT} is then equivalent to the existence of some point $(x,y)\in S$ such that $x\geq req^r[i]$ and $y\geq req^b[i]$.
This is a 2-sided orthogonal range query for points in the plane.
Many data structures have been proposed that can answer such a query in $O(\log n)$ time, with $O(\log n)$ insertion time,  using $O(n)$ space in total.
For example, we can do this with a priority search tree~\cite{McCreight85}.
Note, however, that much more efficient and sophisticated data structures exist for this problem~\cite{ChanT18}.

In $S$, we perform at most $n$ insertions and at most $n$ queries. Thus, the total cost associated to line~\ref{line:thirdT} and to all operations involving $S$ is again $O(n\log n)$.

\subsubsection{Reconstructing a solution (line~\ref{line:recons})}

The procedure to reconstruct a solution based on the information obtained by the algorithm has already been outlined in the proof of Lemma~\ref{lem:corr-lemma}. 

Let $i$ be such that $\mathrm{append}[i]:=T$ and $M[i]$ contains a representative for $I_n$. We consider the sequence $i,p[i],p[p[i]],\ldots$ and, for each index, we must determine which of the tabular subsolutions with that index (if any) is included in the valid sequence. For $i$, we take the complete $M[i]$. If $p[i]=\emptyset$, we are done. If $\mathrm{append}[i]$ has been set to $T$ in line~\ref{line:ext-1}, for $p[i]$ we take the longest compatible tabular subsolution of $M[p[i]]$ with respect to $M[i]$. We move to index $p[i]$, and we repeat. If $\mathrm{append}[i]$ has been set to $T$ in line~\ref{line:ext-2}, we consider the longest compatible tabular subsolution of $M[p[i]]$ with respect to $M[i]$. If it is non-empty, we take it, we move to index $p[i]$, and we repeat. If it is empty, we traverse the sequence $p[p[i]],p[p[p[i]]]\ldots$ until we find an element $k'$ such that the longest compatible tabular subsolution of $M[k']$ with respect to $M[i]$ is non-empty. For $p[i]$ and the other indices traversed before $k'$, we take the empty tabular subsolution with that index; for $k'$, we take the longest compatible tabular subsolution of $M[k']$ with respect to $M[i]$. We move to index $k'$, and we repeat.

After the previous step, reconstructing the valid final sequence of tabular subsolutions is trivial because we know the left-to-right order of the tabular subsolutions and the precise length of each of them. Recall that this sequence might omit representatives of some redundant intervals, and to obtain a solution, we have to add them. 

All in all, the total cost of line~\ref{line:recons} is $O(n)$.

\subsubsection{Putting everything together}

We have seen that all the steps of the algorithm can be performed in $O(n\log n)$ time. We obtain the first main result of the paper:

\begin{theorem} \label{thm:dec-diff}
For $k=2$, the decision problem for $q\in \left(\frac 1 2, 
\frac 3 4 \right]$ can be solved in $O(n \log n)$ time.
\end{theorem}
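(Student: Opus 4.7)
The plan is to directly combine the two strands already developed in this section: the correctness of Algorithm~\ref{alg:alg-main} (Lemma~\ref{lem:corr-alg}) and the per-step complexity bounds established in the implementation subsections. I would cast the argument as: correctness gives the ``if and only if'', and a walk through each block of the pseudocode bounds the runtime contribution of that block, so that the sum telescopes to $O(n\log n)$.

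More concretely, I would first invoke Lemma~\ref{lem:corr-alg} to conclude that Algorithm~\ref{alg:alg-main} returns a valid representation when one exists and reports ``no solution'' otherwise. Then I would enumerate the cost contributions: the initial preprocessing (sorting the intervals by color and locating $\mathcal{\overrightarrow{\mathcal{I}}}^{o,<}_i$, $\mathcal{\overrightarrow{\mathcal{I}}}^{s,<}_i$, $\mathcal{\overrightarrow{\mathcal{I}}}^{o,<<}_i$ by binary search) costs $O(n\log n)$; constructing the ray-shooting structures on $T^r,T^b$ for terminating conditions (i) and (ii) of Observation~\ref{obs:end-Mi} costs $O(n\log n)$, after which each $M[i]$ together with $\max^r[i],\max^b[i]$ is found in $O(\log n)$; the dynamic ray-shooting structures built on $G^r,G^b$~\cite{GiyoraK09} enable the case $\mathcal{I}^{o,<}_i\neq\emptyset$ of line~\ref{line:secondT} to be handled with at most one unsuccessful query per $k$ (charged to $k$) and a constant number of updates per successful hit (charged to the segment in $S_1$ that is subsequently deleted), for $O(n\log n)$ total; each computation of $last^r[i],last^b[i]$ via Lemma~\ref{lem:seq-cons-2} is a single binary search, $O(\log n)$ per $i$; and line~\ref{line:thirdT} reduces to 2-sided range queries in a dynamic point set, which a priority search tree~\cite{McCreight85} supports in $O(\log n)$ per operation.

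Finally, reconstruction of the witnessing sequence along the pointers $p[i],p[p[i]],\ldots$ plus the insertion of representatives for redundant intervals costs $O(n)$ by the argument already sketched after Lemma~\ref{lem:corr-lemma}. Summing all contributions gives $O(n\log n)$, matching the claimed bound.

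The main potential pitfall, and the step I would spend the most care on, is the amortized accounting for the case $\mathcal{I}^{o,<}_i\neq\emptyset$: each successful ray-shoot into $G^r$ (or $G^b$) must be paid for by deletion of a \emph{unique} segment from $S_1$, so that repeated shooting from the same $k$ does not inflate the cost. The key observation to emphasize is that once a segment $s_1\in S_1$ has been hit, it is removed from the data structure and can therefore be charged exactly once over the entire execution; the unsuccessful query, by contrast, is charged to $k$, and each $k$ contributes at most one such query per color. All other steps are routine $O(\log n)$ operations executed $O(n)$ times, so the $O(n\log n)$ bound follows without further subtlety.
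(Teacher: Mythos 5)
Your proposal is correct and follows essentially the same route as the paper: correctness is delegated to Lemma~\ref{lem:corr-alg}, and the running time is obtained by the same block-by-block accounting, including the crucial amortized charging of successful ray-shooting queries to segments deleted from $S_1$ and of the single unsuccessful query to the index $k$. The only cosmetic difference is that the paper uses separate ray-shooting structures for condition (i) (the sets $T^r,T^b$) and for condition (ii) (the dark red/dark blue regions) of Observation~\ref{obs:end-Mi}, but this does not affect the argument.
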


\section{Optimization problem}

In this section we present an algorithm for the optimization problem. Recall that $q^*$ is the length of a \mcsi\ in a realization solving the problem. 

\begin{lemma} \label{lem:cand-q}
There exist $i,j$ such that $q^*=\frac{x_j+1-x_i}{m}$, where $m\in \mathbb{N}$.
\end{lemma}

\begin{proof}
Let us take a realization ${\cal P}^*$ solving the problem. We transform it into a leftmost solution without changing the length of the \mcsi(s).
Recall that, since there are only two colors, $q^*$ is simply the minimum distance between two consecutive representatives of different color in ${\cal P}^*$.
Let $r_h$ and $r_{\ell}$ be two representatives of different color with  distance $q^*$. 
Observe that $r_h$ and $r_{\ell}$ must be part of the same tabular subsolution ${\cal{T}}$, otherwise they would be further apart than  $q^*$.\footnote{Here we assume that we decompose the leftmost solution into a sequence of non-overlaping tabular subsolutions as in the proof of Lemma~\ref{lem:prun-tab}.}

Let $r_i$ be the leftmost representative in ${\cal{T}}$ and $r_k$ be a rightmost representative in ${\cal{T}}$.
By definition of tabular subsolution, $r_i$ is at the left endpoint of $I_i$.
Next we argue that some representative of ${\cal{T}}$ must be at the right endpoint of its interval.
Indeed, if this was not the case, it would be possible to move $r_k$ by a distance $\varepsilon$ to the right (this would be possible, because the tabular subsolution ends at $r_k$, thus there is no representative of opposite color at distance $q^*$ to the right).
In turn,  moving  $r_k$ slightly to the right would allow moving all representatives in the subsolution slightly to the right (this would be possible, because by assumption no representative of ${\cal{T}}$ is at the right endpoint of its interval), increasing all distances between consecutive representatives of different colors in this subsolution. 
Repeating the argument for all tabular subsolutions
containing a pair of representatives of different color at  distance $q^*$, 
we would arrive at a solution with separation larger than $q^*$, a contradiction.

It follows that the representative of some $I_j$ represented in ${\cal{T}}$ is at $x_j+1$.
Now, since in the tabular subsolution the representatives of different colors alternate at a distance of exactly $q^*$, it follows that if there are in total $m$ alternations between $r_i$ and $r_j$, then $q^* = \frac{x_j+1-x_i}{m}$.
\end{proof}


How many values of the form $\frac{x_j+1-x_i}{m}$ are there? Recall that we have assumed that $x_1=0$. Since $q^* \leq 2$, we can also assume that no two consecutive
intervals leave a gap of length 2 or greater (recall that for the decision problem with a fixed $q$, we argued that we could assume that no two consecutive
intervals leave a gap of length $q$ or greater). This implies that $x_n+1<3n$. Recall also that $\frac{1}{2}\leq q^*$.

For fixed $i,j$, $$\frac{1}{2}\leq \frac{x_j+1-x_i}{m} \leq 2, \mathrm{ with }\ m\in \mathbb{N},$$ if and only if $$\frac{x_j+1-x_i}{2}\leq m \leq 2 (x_j+1-x_i), \mathrm{ with }\ m\in \mathbb{N}.$$ Since $x_j+1-x_i< 3n$, there are less than $6n$ values of $m\in \mathbb{N}$ satisfying $\frac{1}{2}\leq \frac{x_j+1-x_i}{m} \leq 2$. Thus, the total number of candidates for $q^*$ is $O(n^3)$. Listing all such candidates, sorting them and running a binary search on them using the algorithms for the decision version of the problem yields an algorithm running in $O(n^3\log n)$ time. 
Next we present a faster approach.

Let us start by running a binary search on the values of type $\frac{x_n+1-x_1}{m}$, where $m\in \mathbb{N}$ and $\frac{1}{2}\leq \frac{x_n+1-x_1}{m}\leq 2$. In $O(n \log^2 n)$ time we can find some $m_0\in \mathbb{N}$ such that $\frac{x_n+1-x_1}{m_0+1}\leq q^* < \frac{x_n+1-x_1}{m_0}$. 
This already gives us a feasible solution for $q=\frac{x_n+1-x_1}{m_0+1}$.
Thus for any other $i,j$, we are then only interested in the values satisfying

\begin{equation}
\label{eq:ijvalues}
\frac{x_n+1-x_1}{m_0+1}< \frac{x_j+1-x_i}{m} < \frac{x_n+1-x_1}{m_0},\end{equation}
with $m\in \mathbb{N}$. 

Hence, we have $\frac{x_j+1-x_i}{x_n+1-x_1}m_0<m<\frac{x_j+1-x_i}{x_n+1-x_1}(m_0+1)$, with $m\in \mathbb{N}$. Since $x_j-x_i\leq x_n-x_1$, this equation has at most one solution. Thus, each pair $i,j$ gives at most one candidate. 

We now define $q^-=\frac{x_n+1-x_1}{m_0+1}$.
We also write $x_j+1=a_jq^-+b_j$ and $x_i=c_iq^-+d_i$, where $a_j,c_i$ are non-negative integers and $b_j,d_i\in [0,q^-)$. 
If $b_j>d_i$ and there exists some $m\in \mathbb{N}$ satisfying Equation~\ref{eq:ijvalues}, it is not difficult to see that $m=a_j-c_i$ and $\frac{x_j+1-x_i}{m}=q^-+\frac{b_j-d_i}{a_j-c_i}$. 
If $b_j=d_i$, there does not exist any $m\in \mathbb{N}$ satisfying Equation~\ref{eq:ijvalues}. Finally, if $b_j<d_i$ and there exists some $m\in \mathbb{N}$ satisfying Equation~\ref{eq:ijvalues}, we have that $m=a_j-c_i-1$ and $\frac{x_j+1-x_i}{m}=q^-+\frac{q^-+b_j-d_i}{a_j-c_i-1}$.

To be able to search among the remaining $O(n^2)$ candidates without generating them, we create a point set $S$ associated to the input intervals $\cal{I}$ and $q^-$. For every $x_i$, we add to $S$ the point $\ell_i=(d_i,c_i)$. 
For every $x_j+1$, we add to $S$ the points $r_j^1=(b_j,a_j)$ and $r_j^2=(q^-+b_j,a_j-1)$.
Observe that the slope of the segment connecting $\ell_i$ to $r_j^1$ is $\frac{a_j-c_i}{b_j-d_i}$, while that of the segment connecting $\ell_i$ to $r_j^2$ is $\frac{a_j-c_i-1}{q^-+b_j-d_i}$. Recall that, if $i,j$ give a candidate and $b_j>d_i$, this candidate is $q^-+\frac{b_j-d_i}{a_j-c_i}$. The first term does not depend on $i,j$, while the second term is the inverse of the slope of the segment connecting $\ell_i$ to $r_j^1$. If $b_j<d_i$, the candidate is $q^-+\frac{q^-+b_j-d_i}{a_j-c_i-1}$, and the second term is the inverse of the slope of the segment connecting $\ell_i$ to $r_j^2$. Hence, we can design an algorithm that runs a binary search on the set of remaining candidates through a binary search on the set of slopes defined by points in $S$.

Observe that $S$ contains $3n$ points. For every $i\in\left[1,{3n \choose 2}\right]$, we denote by $\alpha_i$ the $i$th smallest slope defined by pairs of points of $S$. We recall that the problem of computing the $k$th smallest slope defined by pairs of points of a given set is called \emph{slope selection problem}. There exist several algorithms solving this problem in $O(n\log n)$ time (e.g.~\cite{Cole-slope,Katz-slope}), even when the point set is not necessarily in general position (see, e.g.~\cite{Chazelle-slope}).

Our algorithm is described in Algorithms~\ref{alg:optimization} (main code) and~\ref{alg:optimization-sub} (code of the subroutine). In lines~\ref{line:opt-1}-\ref{line:opt-2} of Algorithm~\ref{alg:optimization}, we check two extremal cases that might already give $q^*$. In case they do not, we make the first call to the recursive subroutine \texttt{opt}. This subroutine takes as input parameters $i,j$, which are integers with $i<j$ such that the decision problem for $q^-+\frac{1}{\alpha_i}$ has a NO answer, while that for $q^-+\frac{1}{\alpha_j}$ has a YES answer.


At every new call of \texttt{opt}, we compute the candidate associated to an integer $k$ roughly equidistant from $i$ and $j$ (line~\ref{line:opt-3}). The answer of the decision problem for the associated candidate $q^-+\frac{1}{\alpha_k}$ indicates if the next call is \texttt{opt}$(i,k)$ or \texttt{opt}($k,j$) (lines~\ref{line:opt-11}-\ref{line:opt-8}). The algorithm terminates when $j=i+1$ (lines~\ref{line:opt-9}-\ref{line:opt-10}).


\begin{algorithm}
\begin{algorithmic}[1]
\vspace{0.25cm}
\Require  $\mathcal{I},S,q^-$
\Ensure  $q^*$
\State $\gamma:=$ greatest (positive) slope defined by pairs of points of $S$ that 
 are not vertically aligned \label{line:opt-1}
\If{the decision problem for $q^-+\frac{1}{\gamma}$ has a NO answer}
\State stop and return $q^-$
\Else
\State $\delta:=$ smallest among all positive slopes defined by pairs of points of $S$
\If{the decision problem for $q^-+\frac{1}{\delta}$ has a YES answer}
\State stop and return $q^-+\frac{1}{\delta}$ \label{line:opt-2}
\Else
   \State $i:=$ integer $k$ such that $\delta=\alpha_k$
   \State $j:=$ integer $k'$ such that $\gamma=\alpha_{k'}$ \label{line:opt-12}
   \State \texttt{opt}$\left(i,j\right)$
\EndIf
\EndIf
\end{algorithmic}
\caption{Algorithm \texttt{Optimization}}
\label{alg:optimization}
\end{algorithm}

\begin{algorithm}
\begin{algorithmic}[1]
\vspace{0.25cm}
\Require  $i,j$
\Ensure  $q^*$
\If{$j=i+1$} \label{line:opt-9}
\State stop and return $q^-+\frac{1}{\alpha_j}$ \label{line:opt-10}
\EndIf
\State $k:=\lfloor\frac{i+j}{2} \rfloor$ \label{line:opt-3}
  \If{the decision problem for $q^-+\frac{1}{\alpha_k}$ has a YES answer} \label{line:opt-11}
    \State \texttt{opt}($i,k$)
   \Else
     \State \texttt{opt}($k,j$) \label{line:opt-8}
  \EndIf
\end{algorithmic}
\caption{Subroutine \texttt{opt}}
\label{alg:optimization-sub}
\end{algorithm}

\begin{theorem} \label{thm:k-2-opt}
For $k=2$, the \lmcsi~ problem can be solved in $O(n\log^2 n)$ time.  
\end{theorem}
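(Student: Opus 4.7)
The plan is to combine the structural characterization of $q^*$ (Lemma~\ref{lem:cand-q}) with the decision algorithms (Propositions~\ref{prop:dec-easy}, Theorem~\ref{thm:dec-mid}, Theorem~\ref{thm:dec-diff}), each of which runs in $O(n\log n)$ time for $q\in(\tfrac12,2]$. Since the decision problem is monotone (a YES answer for $q$ implies a YES answer for every $q'\leq q$), any binary search over a sorted list of candidates for $q^*$ that terminates after $O(\log n)$ queries yields an $O(n\log^2 n)$ algorithm. So correctness and running time both reduce to: (i) the final answer returned by Algorithm~\ref{alg:optimization} and its recursion is indeed $q^*$, and (ii) each of the $O(\log n)$ recursive levels can be implemented in $O(n\log n)$ time.

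For correctness, first observe that the initial binary search over values $\frac{x_n+1-x_1}{m}$, $m\in\mathbb{N}$, either returns $q^*$ directly (the two extremal tests at lines~\ref{line:opt-1}--\ref{line:opt-2}) or locates an index $m_0$ such that $q^-=\frac{x_n+1-x_1}{m_0+1}\leq q^* <\frac{x_n+1-x_1}{m_0}$. By Lemma~\ref{lem:cand-q} the remaining candidates lie in $\bigl(q^-,\frac{x_n+1-x_1}{m_0}\bigr)$ and each pair $(i,j)$ contributes at most one value, which the excerpt rewrites as $q^-+\tfrac{1}{s}$ for a positive slope $s$ determined by two points in $S$ (the point $\ell_i=(d_i,c_i)$ and one of $r_j^1,r_j^2$). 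Conversely, every such slope corresponds to a candidate, so the multiset of remaining candidates is in bijection with the sorted list $\alpha_1<\alpha_2<\dots$ of positive slopes of $S$. Algorithm~\ref{alg:optimization-sub} then runs a standard binary search over this list, maintaining the invariant that $q^-+\tfrac{1}{\alpha_i}$ is infeasible and $q^-+\tfrac{1}{\alpha_j}$ is feasible, and returns the smallest feasible $q^-+\tfrac{1}{\alpha_j}$ when the window shrinks to $j=i+1$. By Lemma~\ref{lem:cand-q} this is $q^*$.

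For the running time, the initial binary search makes $O(\log n)$ calls to the decision algorithm, each costing $O(n\log n)$, for a total of $O(n\log^2 n)$. The recursion has depth $O(\log\binom{3n}{2})=O(\log n)$ since the slope list has $O(n^2)$ entries. Each recursive call performs two operations: one decision query at $q^-+\tfrac{1}{\alpha_k}$, in $O(n\log n)$ time; and one slope-selection query to compute $\alpha_k$ (the $k$th smallest positive slope among pairs in $S$), which can be done in $O(n\log n)$ time by any of the classical slope-selection algorithms~\cite{Cole-slope,Katz-slope,Chazelle-slope}. The same subroutine supplies $\gamma$ and $\delta$ at lines~\ref{line:opt-1}--\ref{line:opt-12} together with their ranks in the sorted slope list. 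Summing the cost over all $O(\log n)$ levels gives $O(n\log^2 n)$ overall.

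The main obstacle is conceptual rather than technical: one must verify that running the binary search over ranks in the implicitly sorted list of $\Theta(n^2)$ slopes really explores only $O(\log n)$ distinct candidates, and that each candidate can be extracted on demand without sorting the whole list. Both points are handled by invoking slope selection as a black box, together with the observation that the bijection between remaining candidates and pairs in $S$ is monotone (smaller slope $\leftrightarrow$ larger candidate $q^-+1/\alpha$), so the monotone decision answer directly guides the halving of the index interval $[i,j]$ at each recursive step.
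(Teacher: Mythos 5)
Your proposal follows essentially the same route as the paper: an initial binary search to pin down $q^-$, a reduction of the remaining $O(n^2)$ candidates to slopes of the point set $S$, and a binary search over slope ranks using slope selection as a black box plus the $O(n\log n)$ decision algorithm at each of the $O(\log n)$ levels. The only minor imprecision is the claim that the remaining candidates are in \emph{bijection} with the positive slopes of $S$ --- the paper only needs (and only establishes) that every candidate equals some $q^-+1/\alpha_k$, which together with the monotonicity of the decision problem that you already invoke suffices for the binary search to return $q^*$.
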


\begin{proof}
We first argue that our algorithm is correct. 

We begin with lines~\ref{line:opt-1}-\ref{line:opt-2} of Algorithm~\ref{alg:optimization}. Let $i,j$ be a pair giving a candidate equal to $q^-+\frac{b_j-d_i}{a_j-c_i}=q^-+\frac{1}{\beta_{i,j}^1}$, where $\beta_{i,j}^1>0$ is the slope of the segment connecting $\ell_i$ to $r_j^1$. If $\gamma$ is the greatest slope defined by pairs of points of $S$ that 
 are not vertically aligned and the decision problem for $q^-+\frac{1}{\gamma}$ has a NO answer, we derive that the decision problem for $q^-+\frac{1}{\beta_{i,j}^1}$ also has a NO answer because $q^-+\frac{1}{\beta_{i,j}^1}\geq q^-+\frac{1}{\gamma}$. If the candidate associated to $i,j$ is $q^-+\frac{q+b_j-d_i}{a_j-c_i-1}=q^-+\frac{1}{\beta_{i,j}^2}$, where $\beta_{i,j}^2>0$ is the slope of the segment connecting $\ell_i$ to $r_j^2$, the argument is analogous. Hence, in this case all remaining candidates correspond to NO instances and $q^*=q^-$.
 Similarly, if $\delta$ is the smallest positive slope defined by pairs of points of $S$ and the decision problem for $q^-+\frac{1}{\delta}$ has a YES answer, all remaining candidates correspond to YES instances and are not greater than $q^-+\frac{1}{\delta}$, so $q^*=q^-+\frac{1}{\delta}$.

 Otherwise, the algorithm makes the first call to the subroutine \texttt{opt}$(i,j)$. We observe that $i,j$ are integers with $i<j$ such that the decision problem for $q^-+\frac{1}{\alpha_i}$ has a NO answer, while that for $q^-+\frac{1}{\alpha_j}$ has a YES answer, so $q^-+\frac{1}{\alpha_j}\leq q^* < q^-+\frac{1}{\alpha_i}$. This property is maintained throughout all iterations. At the last iteration, $j=i+1$; since each of our candidate values $q^-+\frac{1}{\beta_{i,j}^1}$ or $q^-+\frac{1}{\beta_{i,j}^2}$ is equal to some $q^-+\frac{1}{\alpha_k}$ for some $k$ and (the last) $j$ is the smallest among all integers such that $q^-+\frac{1}{\alpha_j}$ is a YES instance (which corresponds to the greatest among all $q^-+\frac{1}{\alpha_k}$ that are a YES instance), we conclude that $q^*=q^-+\frac{1}{\alpha_j}$.

 It remains to compute the running time of the algorithm.

 The values $\gamma$, $\delta$, $i$ and $j$ of Algorithm~\ref{alg:optimization} can be computed in $O(n \log^2 n)$ time by running binary search on $\left[1,{3n \choose 2}\right]$ and computing $\alpha_k$ for the values $k$ appearing in the binary search (as mentioned earlier, every such $\alpha_k$ can be computed in $O(n\log n)$ time). 
 By Proposition~\ref{prop:dec-easy} and Theorems~\ref{thm:dec-mid} and~\ref{thm:dec-diff}, the decision problem for $q\in \left(\frac 1 2, 2 \right]$ can be solved in $O(n \log n)$ time. Hence, the total running time of lines~\ref{line:opt-1}-\ref{line:opt-12} of Algorithm~\ref{alg:optimization} is $O(n \log^2 n)$. Afterwards, the subroutine \texttt{opt} is called $O(\log n)$ times, and every time it entails a call to the slope selection problem and a call to the decision version of our problem. In consequence, the total running time of the algorithm is $O(n \log^2 n)$.
\end{proof}





\section{Open problems}
The main open problem stemming from this work is designing an efficient algorithm for $k>2$ colors.
Unfortunately, our algorithm for $k=2$ does not easily generalize to this case: Even though we can still define leftmost solutions, if we have a representative that is not at the leftmost point of its interval, there are $k-1$ possibilities for the color of the representative that is at distance $q$. Thus, the number of tabular subsolutions starting at some given $I_i$ increases, and so does, in principle, the complexity of constructing valid sequences. It follows that this technique yields a high running time in terms of $n$ and $k$ as soon as $k$ is not constant.
Therefore, a different approach might be needed to efficiently tackle instances with more than two colors.


\subsection*{Acknowledgements}

The authors would like to thank Ramesh K. Jallu and Maarten L{\"{o}}ffler for fruitful discussions.

A. Acharyya was supported by the DST-SERB grant number SRG/2022/002277. V. Keikha was supported by the CAS PPPLZ grant L100302301, and the institutional support RVO: 67985807. M. Saumell was supported by the Czech Science Foundation, grant number 23-04949X. 
R. Silveira was partially supported by grant PID2019-104129GB-I00/ MCIN/ AEI/ 10.13039/501100011033.   

\bibliographystyle{splncs04}
\bibliography{mybibliography}

\end{document}